\newcommand{\eg}{e.g.,\xspace}
\newtheorem{definition}{Definition}
\newtheorem{proposition}{Proposition}
\newtheorem{theorem}{Theorem}
\newtheorem{lemma}{Lemma}
\newtheorem{corollary}{Corollary}
\algnewcommand\algorithmicswitch{\textbf{switch}}
\algnewcommand\algorithmiccase{\textbf{case}}
\newcommand{\atl}{ATL\xspace}
\newcommand{\atlr}{RB$\pm$ATL\xspace}
\newcommand\RBATL{RB-ATL\xspace}
\newcommand\nat{\mathbb{N}}
\newcommand\natinfty{\nat_{\infty}}
\newcommand\integer{\mathbb{Z}}
\newcommand{\llangle}{\langle\!\langle}
\newcommand{\rrangle}{\rangle\!\rangle}
\newcommand{\ceil}[1]{\lceil #1 \rceil}
\newcommand\AO[2]{\llangle #1 \rrangle \!\bigcirc\! #2}
\newcommand\AG[2]{\llangle #1 \rrangle \Box #2}
\newcommand\AU[3]{\llangle #1 \rrangle #2 \,{\cal U}\, #3}
\newcommand\idle{idle}
\newcommand\ralAO[3]{\llangle #1 \rrangle^{#2} \!\bigcirc\! #3}
\newcommand\ralAG[3]{\llangle #1 \rrangle^{#2} \Box #3}
\newcommand\ralAU[4]{\llangle #1 \rrangle^{#2} #3 \,{\cal U}\, #4}
\newcommand{\Agt}{Agt}
\newcommand{\Res}{Res}
\newcommand{\Act}{Act}
\newcommand\PNtran[1]{\,[{#1}\rangle\,}
\newcommand{\zerob}{{\bar 0}}
\newcommand{\linfarrow}{\mathrel{\smash{\stackrel{\infty}{\leftarrow}}}}
\newcommand{\projinf}[2]{{#1 \linfarrow #2}}
\newcommand{\atlop}[1]{\llangle #1 \rrangle}
\newcommand{\atlbox}[1]{\atlop{#1}\square}
\newcommand{\atlu}[3]{\atlop{#1} #2\;\!\mathcal{U} #3 }
\newcommand{\ral}{RAL\xspace}
\newcommand{\pfral}{pr-rf-\ral'\xspace}
\newcommand{\pfralrelax}{pr-rf-\ral''\xspace}
\newcommand{\atlrnt}{\atlr-nt\xspace}
\newcommand{\wall}{\begin{aligned}[t] &}
\newcommand{\return}{\end{aligned}}
\newcommand{\En}{\textsf{\small En}}
\newcommand{\Share}{\textsf{\small Share}}
\newcommand{\sh}{\textsf{\small sh}}
\newcommand{\produce}{\textsf{\small prod}}
\newcommand{\cons}{\textsf{\small cons}}
\newcommand{\tr}{\textit{tr}}
\begin{document}

\title{Technical Report: \\
Model-Checking for Resource-Bounded ATL with Production and Consumption of Resources}

\author[*]{Natasha Alechina}
\author[*]{Brian Logan}
\author[*]{\\Hoang Nga Nguyen}
\author[**]{Franco Raimondi}

\affil[*]{School of Computer Science, The University of Nottingham, UK\\
{\texttt{\{nza,bsl,hnn\}@cs.nott.ac.uk}}}
\affil[**]{Department of Computer Science, Middlesex University, UK\\
{\texttt{f.raimondi@mdx.ac.uk}}}

\maketitle

\begin{abstract}
Several logics for expressing coalitional ability under resource bounds have been proposed and studied in the literature. Previous work has shown that if only consumption of resources is considered or the total amount of resources produced or consumed on any path in the system is bounded, then the model-checking problem for several standard logics, such as Resource-Bounded Coalition Logic (RB-CL) and Resource-Bounded Alternating-Time Temporal Logic (RB-ATL) is decidable. However, for coalition logics with unbounded resource production and consumption, only some undecidability results are known. In this paper, we show that the model-checking problem for RB-ATL with unbounded production and consumption of resources is decidable but EXPSPACE-hard. We also investigate some tractable cases and  provide a detailed comparison to a variant of the resource logic RAL, together with new complexity results.
\end{abstract}

\tableofcontents

\section{Introduction}
\label{sec:intro}
Alternating-Time Temporal Logic (ATL) \cite{Alur//:02a} is widely used
in verification of multi-agent systems. ATL can express properties
related to coalitional ability, for example, one can state that a group
of agents $A$ has a strategy (a choice of actions) such that whatever
the actions by the agents outside the coalition, any computation of
the system generated by the strategy satisfies some temporal
property. A number of variations on the semantics of ATL exist: agents
may have perfect recall or be memoryless, and they may have full or
partial observability. In the case of fully observable models and
memoryless agents, the model-checking problem for ATL is polynomial in
the size of the model and the formula, while it is undecidable for
partially observable models where agents have perfect
recall~\cite{dastani+10}. Additionally, even in the simple case of
fully observable models and memoryless agents, the complexity
increases substantially if the model-checking problem takes into
account models with \emph{compact} (implicit)
representations~\cite{dastani+10}. 

In this paper, we consider an extension of perfect recall, fully
observable ATL where agents produce and consume resources. The
properties we are interested in are related to coalitional ability
under resource bounds. Instead of asking whether a group of agents has
a strategy to enforce a certain temporal property, we are interested in
whether the group has a strategy that can be executed under a certain
resource bound (\eg if the agents have at most $b_1$ units of resource
$r_1$ and $b_2$ units of resource $r_2$). Clearly, some actions may no
longer be used as part of the strategy if their cost exceeds the
bound. There are several ways in which the precise notion of the cost
of a strategy can be defined. For example, one can define it as the
maximal cost of any path (computation of the system) generated by the
strategy, where the cost of a path is the sum of resources produced
and consumed by actions on the path.  We have chosen a different
definition which says that a strategy has a cost at most $b$ if for
every path generated by the strategy, every \emph{prefix} of the path
has cost at most $b$.  This means that a strategy cannot, for example,
start with executing an action that consumes more than $b$ resources,
and then `make up' for this by executing actions that produce
enough resources to bring the total cost of the path under $b$. It is
however possible to first produce enough resources, and then execute
an action that costs more than $b$, so long as the cost of the path is
less than $b$.

There are also many choices for the precise syntax of the logic and the truth
definitions of the formulas. For example, in \cite{Bulling/Farwer:10a} several
versions are given, intuitively corresponding to considering resource bounds
both on the coalition $A$ and the rest of the agents in the system, 
considering a fixed resource endowment of $A$ in the initial state which 
affects their endowment after executing some actions, etc.  
In this paper we give a precise comparison of our logic with the variants of
${\cal L}_{RAL}$ introduced in \cite{Bulling/Farwer:10a}, and in the process
solve an open problem stated in \cite{Bulling/Farwer:10a}.
In \cite{DellaMonica//:11a,DellaMonica//:13a} different syntax and semantics 
are considered, in which the resource endowment of the whole system is taken into account when 
evaluating a statement concerning a group of agents $A$. As observed in 
\cite{Bulling/Farwer:10a}, subtle differences in truth conditions for 
resource logics result in the difference between decidability and 
undecidability of the model-checking problem.
In \cite{Bulling/Farwer:10a}, the undecidability of several versions of the
logics is proved. Recently, even more undecidability results were shown in \cite{Bulling/Goranko:13a}. The only decidable cases considered in \cite{Bulling/Farwer:10a} are an extension of Computation Tree Logic (CTL) \cite{Clarke//:86a} with resources (essentially one-agent ATL) and the version where on every path only
a fixed finite amount of resources can be produced. Similarly, \cite{DellaMonica//:11a}  gives a decidable logic, PRB-ATL (Priced Resource-Bounded ATL), where
the total amount of resources in the system has a fixed bound. The model-checking algorithm for PRB-ATL runs in time polynomial in the sizes of the
model and the formula, and exponential in the number of resources and the size 
of the representation (if in binary) of the resource bounds. In \cite{DellaMonica//:13a} an EXPTIME lower bound in the number 
resources and in the size 
of the representation (if in binary) of the resource bounds is shown. 

The structure of this paper is as follows. In sections \ref{sec:rbatl},
\ref{sec:mc}, and \ref{sec:complexity}, we introduce Resource-Bounded ATL with
production and consumption of resources, a model-checking algorithm for it, and
prove that the model-checking problem is EXPSPACE-hard. This part of the paper
extends \cite{Alechina//:14a}. In section \ref{sec:feasible} we discuss two
special cases with feasible model-checking, one of them being a generalisation
of the model-checking algorithm for (production-free) RB-ATL introduced in
\cite{Alechina//:10a} to unbounded resources. In section \ref{sec:ral} we give a
detailed comparison with the logics in \cite{Bulling/Farwer:10a} and show that
for one of them the model-checking problem is decidable, solving an open problem
stated in \cite{Bulling/Farwer:10a}.\footnote{Intuitively, the main difference between
our logic (with a decidable model-checking problem) and a version of RAL from
\cite{Bulling/Farwer:10a} where the model-checking problem is undecidable under infinite semantics (considering only infinite computations) is that in our logic, each agent always has
an option of executing an \emph{idle} action which does not consume any 
resources. This means that a finite strategy which conforms to a resource bound
and enforces a particular outcome can always be extended to an infinite 
strategy by chosing the \emph{idle} action. The model-checking problem for 
the same version of RAL but under finite semantics (considering finite 
computations) turns out also to be decidable, and a model-checking algorithm
for it is obtained as an easy adaptation of the model-checking algorithm
for our logic.}

\section{Syntax and Semantics of \atlr}
\label{sec:rbatl}

The logic \RBATL was introduced in \cite{Alechina//:10a}. Here we generalise the definitions from \cite{Alechina//:10a}
to allow for production as well as consumption of resources. To avoid confusion with the consumption-only version
of the logic from \cite{Alechina//:10a}, we refer to \RBATL with production and consumption of resources as \atlr.

Let $\Agt =\{a_1, \ldots, a_n\}$ be a set of $n$ agents,
    $\Res=\{res_1, \ldots, res_r\}$ be a set of $r$ resources, 
    $\Pi$ be a set of propositions and
    $B = \natinfty^r$ be a set of resource bounds where $\natinfty = \nat \cup \{\infty\}$.


Formulas of \atlr are defined by the following syntax 
\[
\phi, \psi ::= p \mid 
            \neg \phi \mid 
            \phi \lor \psi \mid
            \AO{A^b}{\phi} \mid
            \AG{A^b}{\phi} \mid
            \AU{A^b}{\phi}{\psi}
\]
where $p \in \Pi$ is a proposition, $A \subseteq \Agt$, and $b \in B$ is a resource bound.
Here, $\AO{A^b}{\phi}$ means that a coalition $A$ can ensure that the next state satisfies $\phi$
under resource bound $b$. $\AG{A^b}{\phi}$ means that $A$ has a strategy to
make sure that $\phi$ is always true, and the cost of this strategy is at most $b$. Similarly,
$\AU{A^b}{\phi}{\psi}$ means that $A$ has a strategy to enforce $\psi$ while maintaining the truth
of $\phi$, and the cost of this strategy is at most $b$.


We extend the definition of a concurrent game structure with resource consumption
and production.

\begin{definition}
\label{def:rbcgs}
A resource-bounded concurrent game structure (RB-CGS) is a tuple 
$M = (\Agt, \Res, S, \Pi, \pi, Act, d, c, \delta)$ where:
\begin{itemize}
\item $\Agt$ is a non-empty set of $n$ agents, $\Res$ is a non-empty set
  of $r$ resources and $S$ is a non-empty set of states;
\item $\Pi$ is a finite set of propositional variables and $\pi : \Pi
  \to \wp(S)$ is a truth assignment which associates each proposition
  in $\Pi$ with a subset of states where it is true;
\item $\Act$ is a non-empty set of actions which includes $idle$, and\sloppy
$d : S \times \Agt \to \wp(\Act)\setminus \{\emptyset\}$ is a function
  which assigns to each $s \in S$ a non-empty set of actions available
  to each agent $a \in \Agt$. For every $s \in S$ and $a \in \Agt$,
  $idle \in d(s,a)$. We denote joint actions by all agents in $\Agt$
  available at $s$ by $D(s) = d(s,a_1) \times \cdots \times d(s,a_n)$;
\item $c : S \times \Agt \times \Act \to \integer^r$ is a partial function which
maps a state $s$, an agent $a$ and an action $\alpha \in d(s,a)$ to a vector of
integers, where the integer in position $i$ indicates consumption or production
of resource $res_i$ by the action (positive value for consumption and negative
value for production). We stipulate that $c(s,a,idle)=\bar 0$ for all $s \in S$
and $a \in \Agt$, where $\bar 0 = 0^r$.
\item $\delta : S \times \Act^{|\Agt|} \to S$ is a partial function that maps 
for every $s \in S$ and joint action $\sigma \in D(s)$ to a
state resulting from executing $\sigma$ in $s$.
\end{itemize}
\end{definition}

Given a RB-CGS $M$, we denote the set of all infinite sequences of states
(infinite computations) by $S^\omega$ and the set of non-empty finite sequences
(finite computation) of states by $S^+$. For a computation $\lambda = s_0 s_1
\ldots \in S^\omega$ 
we use the notation $\lambda[i] = s_i$ 
and $\lambda[i,j] = s_i \ldots s_j$. 

Given a RB-CGS $M$ and a state $s \in S$, a \emph{joint action by a
coalition} $A \subseteq \Agt$ is a tuple $\sigma = (\sigma_a)_{a \in
  A}$ (where $\sigma_a$ is the action that agent $a$ executes as part of $\sigma$, the $a$th component of $\sigma$) such that $\sigma_a \in d(s,a)$.
The set of all joint actions for $A$ at state $s$ is denoted by
$D_A(s)$. Given a joint action by the grand coalition $\sigma \in D(s)$, $\sigma_A$ (a projection of $\sigma$ on $A$)
denotes the joint action executed by $A$ as part of $\sigma$: $\sigma_A = (\sigma_a)_{a \in A}$. The set of all possible outcomes
of a joint action $\sigma \in D_A(s)$ at state $s$ is:
\[
out(s,\sigma) = \{ s' \in S \mid \exists \sigma' \in D(s): \sigma= \sigma'_A \land 
                                           s' = \delta(s,\sigma') \}
                                           \]

In the sequel, we use the usual point-wise notation for vector comparison and
addition. In particular, $(b_1,\ldots,b_r) \leq (d_1,\ldots,d_r)$ iff $b_i \leq
d_i$ $\forall$ $i \in \{1,\ldots,r\}$, and $(b_1,\ldots,b_r) + (d_1,\ldots,d_r)
= (b_1 + d_1,\ldots,b_r + d_r)$. We assume that for any $b \in \nat$, $b \leq
\infty$ and $b + \infty$ and $\infty - b = \infty$. Given a function $f$
returning a vector, we also denote by $f_i$ the function that return the i-th
component of the vector returned by $f$.

The cost of a joint action $\sigma \in D_A(s)$ is defined as 
$cost_A(s,\sigma) = \sum_{a \in A} c(s,$ $a,\sigma_a)$ and the subscript $A$ is omitted when $A = \Agt$.

Given a RB-CGS $M$, a \emph{strategy for a coalition} $A \subseteq \Agt$ is a mapping
$F_A : S^+ \to \Act^{|A|}$ such that, for every $\lambda s \in S^+$, $F_A(\lambda s)
\in D_A(s)$. A computation $\lambda \in S^\omega$ is consistent with a strategy $F_A$ iff,
for all $i \geq 0$, $\lambda[i+1] \in out(\lambda[i],F_A(\lambda[0,i]))$. We
denote by $out(s,F_A)$ the set of all computations $\lambda$ 
starting from $s$ that are consistent with $F_A$.

Given a bound $b \in B$, a computation $\lambda \in out(s,F_A)$ is $b$-consistent
with $F_A$ iff, for every $i \geq 0$,
\[
\sum_{j=0}^{i} cost_A(\lambda[j], F_A(\lambda[0,j])) \leq b
\]
Note that this definition implies that the cost of every prefix of the computation is below $b$.

The set of all computations starting from state $s$ that are $b$-consistent with $F_A$  is denoted by
$out(s,F_A,b)$.  $F_A$ is a $b$-strategy iff $out(s,F_A) = out(s,F_A,b)$ for any state $s$.


Given a RB-CGS $M$ and a state $s$ of $M$, the truth of a \atlr formula $\phi$
with respect to $M$ and $s$ is defined inductively on the structure of $\phi$
as follows:
\begin{itemize}
 \item $M, s \models p$ iff $s \in \pi(p)$;
 
 \item $M, s \models \neg \phi$ iff $M, s \not\models \phi$;
 
 \item $M, s \models \phi \lor \psi$ iff $M, s \models \phi$ or $M, s \models 
 \psi$;
 
\item $M, s \models \AO{A^b}{\phi}$ iff $\exists$ $b$-strategy $F_A$ such that
      for all $\lambda \in out(s, F_A)$: $M, \lambda[1] \models \phi$;

\item $M, s \models \AG{A^b}{\phi}$ iff $\exists$ $b$-strategy $F_A$ such that
      for all $\lambda \in out(s, F_A)$ and $i \geq 0$: $M, \lambda[i] \models
      \phi$; and

\item $M, s \models \AU{A^b}{\phi}{\psi}$ iff $\exists$ $b$-strategy $F_A$ such
      that for all $\lambda \in out(s, F_A)$, $\exists i \geq 0$: $M, \lambda[i]
      \models \psi$ and $M, \lambda[j] \models \phi$ for all $j \in
      \{0,\ldots,i-1\}$.

\end{itemize}

Since the infinite resource bound version of \atlr modalities correspond 
to the standard ATL modalities, we will write $\AO{A^{\bar{\infty}}}{\phi}$,
$\AU{A^{\bar{\infty}}}{\phi}{\psi}$, $\AG{A^{\bar{\infty}}}{\phi}$ as 
$\AO{A}{\phi}$,$\AU{A}{\phi}{\psi}$,$\AG{A}{\phi}$, respectively.
When the context is clear, we will sometimes write $s \models \phi$ instead of $M,s \models \phi$.

Note that although we only consider infinite paths, the condition that 
the $idle$ action is always available and costs $\bar{0}$ makes the 
model-checking problem easier (we only need to find a strategy with a 
finite prefix under bound $b$ to satisfy formulas of the form $\AO{A^b}{\phi}$
and $\AU{A^b}{\phi}{\psi}$, and then the strategy can make the $idle$ choice 
forever). 

As an example of the expressivity of the logic, consider the model in
Figure~\ref{fig:example} with two agents $a_1$ and $a_2$ and two resources
$r_1$ and $r_2$.
Let us assume that $c(s_I,a_1,\alpha) = \langle -2,1 \rangle$ (action $\alpha$
produces 2 units of $r_1$ and consumes one unit of $r_2$), $c(s,a_2,\beta) = \langle 1,-1 \rangle$
and $c(s,a_1,\gamma) = \langle 5,0 \rangle$. Then agent $a_1$ on its own has
a strategy to enforce a state satisfying $p$ under resource bound of
$3$ units of $r_1$ and $1$ unit of $r_2$ ($M,s_I \models \AU{\{a_1\}^{\langle 3,1\rangle}}{\top}{p}$): 
$a_1$ has to select action $\alpha$ in 
$s_I$ which requires it to consume one unit of $r_2$ but produces two units of
$r_1$, and then action $\gamma$ in $s$ that requires $5$ units of $r_1$ which is
now within the resource bound since the previous action has produced $2$ units.
All outcomes of this strategy lead to $s'$ where $p$ holds. After this, $a_1$
has to select $idle$ forever, which does not require any resources. Any smaller
resource bound is not sufficient. However, both agents have a strategy to enforce
the same outcome under a smaller resource bound of just one unit of $r_2$
($M,s_I \models \AU{\{a_1,a_2\}^{\langle 0,1\rangle}}{\top}{p}$): agent $a_2$
needs to select $\beta$ and $a_1$ $\idle$ in $s$ until the agents have gone through the loop
between $s_I$ and $s$ four times and accumulated enough of resource $r_1$ to
enable agent $a_1$ to perform $\gamma$ in $s$.

\begin{figure}
\centering
\includegraphics[width=12cm]{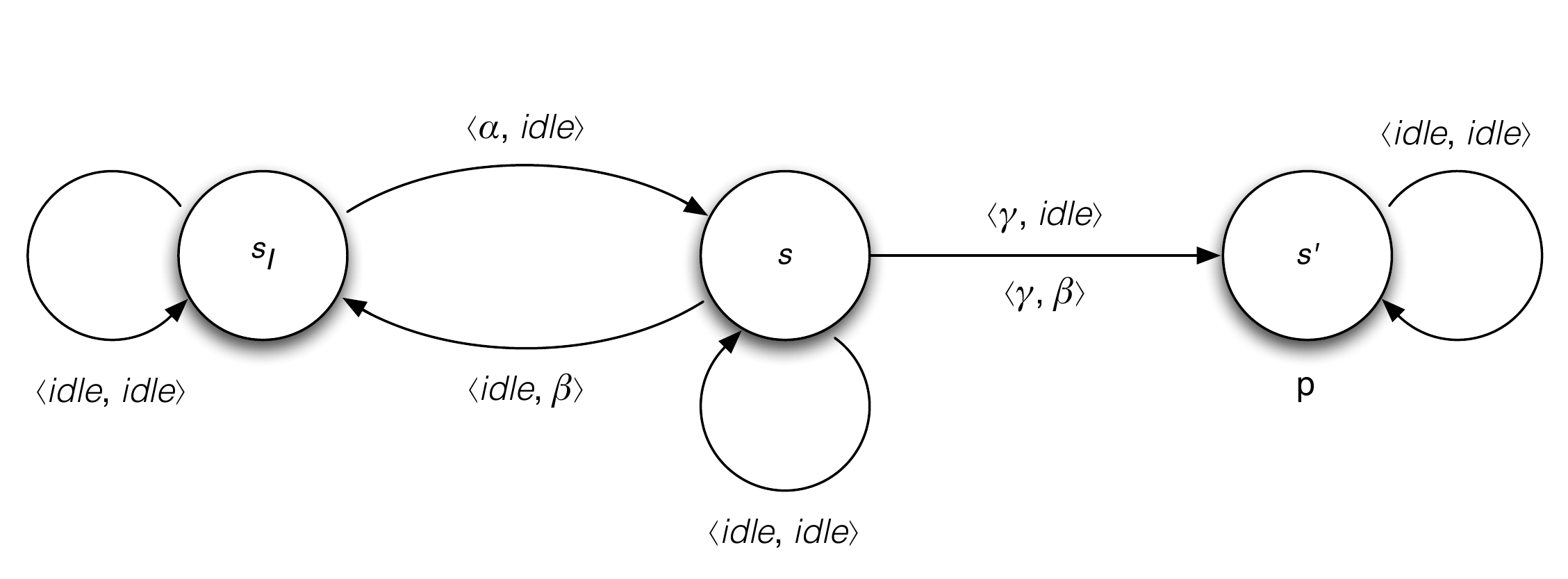}
\caption{An example with consumption and production of resources.}\label{fig:example}
\end{figure}

\section{Model Checking \atlr}
\label{sec:mc}

The model-checking problem for \atlr is the question whether, for a given 
RB-CGS structure $M$, a state $s$ in $M$ and an \atlr formula $\phi_0$, 
$M,s \models \phi_0$. In this section we prove the following theorem:

\begin{theorem}
The model-checking problem for \atlr is decidable.
\end{theorem}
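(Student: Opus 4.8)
The plan is to solve the model-checking problem by the standard bottom-up labelling procedure: working from the innermost subformulas of $\phi_0$ outwards, I would compute for each subformula $\psi$ the set $\mathit{Sat}(\psi)$ of states of $M$ at which it holds. The propositional, negation and disjunction cases are immediate, so the whole difficulty is concentrated in evaluating the three strategic modalities $\AO{A^b}{\psi}$, $\AU{A^b}{\psi}{\chi}$ and $\AG{A^b}{\psi}$ at a state, assuming $\mathit{Sat}(\psi)$ and $\mathit{Sat}(\chi)$ are already known. Each of these I would phrase as a two-player game between the coalition $A$ and its complement $\Agt \setminus A$, played on the space of \emph{configurations} $(s,v)$, where $s$ is a state and $v \in \natinfty^r$ is the budget still available against $b$ (so $v = b$ initially, and a joint move of cost $c$ leads to $v - c$). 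The $b$-consistency requirement is exactly that $v \geq \bar 0$ is maintained along every play, and coordinates of $b$ equal to $\infty$ impose no constraint and may simply be discarded. In this formulation $\AO$ and $\AU$ are reachability objectives (reach a $\mathit{Sat}(\chi)$-configuration through $\mathit{Sat}(\psi)$-configurations) while $\AG$ is a safety objective (stay in $\mathit{Sat}(\psi)$ forever), in each case subject to keeping $v \geq \bar 0$.

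For $\AO$ and $\AU$ I would first exploit the always-available, cost-free $idle$ action: it suffices to find a \emph{finite} $b$-strategy forcing the goal, since after the goal is reached the coalition plays $idle$ forever at zero cost, keeping every prefix within the bound. The key structural observation is monotonicity: because the only constraint on $v$ is $v \geq \bar 0$ and there is no upper bound on stored resources, extra resources never hurt the coalition, so for each fixed state $s$ the set of budgets from which $A$ can win is \emph{upward closed} in $\nat^r$. By Dickson's Lemma the componentwise order is a well-quasi-order, so each such winning set is the upward closure of finitely many minimal budgets and is finitely representable. I would compute these sets as the least fixpoint of the controllable-predecessor operator (backward reachability from $\mathit{Sat}(\chi)$); this operator preserves upward closure and acts effectively on the minimal-element representations, and the well-quasi-order guarantees that the increasing chain of upward-closed approximants stabilises. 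Evaluating the modality then reduces to testing whether $b$ lies in the computed winning set.

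The main obstacle is the safety modality $\AG$, because production of resources (negative cost) lets $v$ grow without bound, and the naive greatest-fixpoint computation over upward-closed sets need not terminate: a strictly decreasing chain of upward-closed subsets of $\nat^r$ can be infinite. I would resolve this by a well-quasi-order pumping (self-covering) argument rather than a direct fixpoint. If $A$ can maintain $\mathit{Sat}(\psi)$ with $v \geq \bar 0$ forever, then along every branch of its (finitely branching) strategy tree Dickson's Lemma forces a configuration $(s,v')$ that dominates an earlier same-state configuration $(s,v)$ with $v' \geq v$; replaying the intervening segment then maintains $\psi$ with non-decreasing budget indefinitely. Cutting each branch at the first such dominating repeat yields a finite winning structure, which shows that $A$ wins the infinite safety game iff it wins a finite game obtained by capping each resource at a threshold $N$ computable from $M$ and $b$ (treating values above $N$ as safe). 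The capped game is finite and solvable by ordinary finite-state game techniques, so $\AG$ is decidable as well. Establishing that a single model-derived threshold $N$ makes the capped and uncapped games equivalent is the crux of the argument; granting it, each modality is decidable, and iterating the labelling through all subformulas of $\phi_0$ gives decidability of the whole model-checking problem.
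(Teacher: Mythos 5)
Your treatment of $\AO{A^b}{\phi}$ and $\AU{A^b}{\phi}{\psi}$ is sound and genuinely different from the paper's: you run a \emph{backward} coverability computation over upward-closed sets of budgets (Dickson's lemma giving both the finite minimal-element representation and stabilisation of the increasing chain of approximants), whereas the paper runs a \emph{forward} depth-first and-or search over nodes $(s,\text{path},e)$ in Karp--Miller style, terminating a branch when it revisits a state with a componentwise-comparable budget and accelerating strictly increasing coordinates to $\infty$. Your observation that only the coalition's actions affect the budget (so the opponent merely picks successors and $CPre$ preserves upward closure) is exactly what makes the backward route work, and for the until case it is arguably cleaner than the paper's acceleration-plus-pruning argument.

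The gap is in the $\AG{A^b}{\phi}$ case, and you have put your finger on it yourself: the claim that the infinite safety game is equivalent to a finite game ``obtained by capping each resource at a threshold $N$ computable from $M$ and $b$'' is asserted, not proved, and it does not follow from the self-covering argument that precedes it. The pumping step gives you, for each winning strategy, \emph{some} finite tree in which every branch ends at a node dominating a same-state ancestor; but the depth of that tree (hence the resource values occurring in it) is controlled only by the length of bad sequences for Dickson's lemma, which is not bounded by any simple model-derived quantity --- a priori it is non-elementary in $|S|$, $r$ and the costs. Moreover ``treating values above $N$ as safe'' is unsound as stated: a single coordinate exceeding $N$ does not make a configuration winning, and identifying all values above $N$ changes the game unless $N$ is already known to exceed every value relevant to some minimal winning tree, which is precisely what you have not established. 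The paper avoids needing any threshold: \textsc{box-strategy} explores forward, returns true on a non-decreasing same-state loop and false on a decreasing one, and \emph{termination} of the exploration (not a bound on its depth) is extracted from Dickson's lemma via the ``economical strategy'' normal form --- every infinite call sequence would contain two comparable same-state nodes, at which the procedure must have returned. To close your argument you should either adopt this forward search with a wqo termination argument, or invoke an explicit length-function theorem for Dickson's lemma to justify a (necessarily enormous) computable $N$ and then cap by abstracting values above $N$ to $\infty$ rather than declaring them safe.
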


To prove decidability, we give an algorithm which, given a structure $M = (\Agt, \Res, S, \Pi, \pi, Act, d, c, \delta)$ and a formula $\phi_0$, returns the set of states $[\phi_0]_M$ satisfying $\phi_0$: $[\phi_0]_M = \{s\ |\ M,s \models \phi_0\}$ (see Algorithm \ref{alg:atlr-label}).

\begin{algorithm}
\caption{Labelling $\phi_0$ }
\label{alg:atlr-label}
\begin{algorithmic}
\Function{rb$\pm$atl-label}{$M, \phi_0$}
\For{$\phi' \in Sub(\phi_0)$}
\Case{$\phi' = p,\ \neg \phi,\ \phi \wedge \psi$, $\AO{A}{\phi}$, $\AU{A}{\phi}{\psi}$, $\AG{A}{\phi}$}
\State standard, see \cite{Alur//:02a}
\EndCase
\Case{$\phi' = \AO{A^b}{\phi}$}
\State $[\phi']_M \gets Pre(A, [\phi]_M,b)$
\EndCase
\Case{$\phi' = \AU{A^b}{\phi}{\psi}$}
\State $[\phi']_M \gets \{\ s \mid s \in S \wedge$
\State $\quad \Call{until-strategy}{node_0(s,b), \AU{A^b}{\phi}{\psi} } \}$
\EndCase
\Case{$\phi' = \AG{A^b}{\phi}$}
\State $[\phi']_M \gets \{\ s \mid s \in S \wedge \Call{box-strategy}{node_0(s,b), \AG{A^b}{\phi} } \}$
\EndCase
\EndFor
\State $\mathbf{return\ } [\phi_0]_M$
\EndFunction
\end{algorithmic}
\end{algorithm}

Given $\phi_0$, we produce a set of subformulas $Sub(\phi_0)$ of $\phi_0$ 
in the usual way, however, in addition, if $\llangle A^b \rrangle \gamma \in  Sub(\phi)$, its
infinite resource version $\llangle A \rrangle \gamma$ is added to $Sub(\phi)$. 
$Sub(\phi)$ is ordered in increasing order of complexity, and the infinite resource version of each modal formula comes before the bounded version. Note that if a state $s$ is not annotated with $\llangle A \rrangle \gamma$ then $s$ cannot satisfy the bounded resource 
version $\llangle A^b \rrangle \gamma$.

We then proceed by cases. For all formulas in $Sub(\phi)$ apart from $\AO{A^b}\phi$, $\AU{A^b}{\phi}{\psi}$ and $\AG{A^b}{\phi}$ we essentially run the standard ATL model-checking algorithm \cite{Alur//:02a}. 

Labelling states with $\AO{A^b}\phi$ makes use of a function 
$Pre(A,\rho,b)$ which, given a coalition $A$, a set $\rho \subseteq S$
and a bound $b$, returns a set of states $s$ in which $A$ has a joint action $\sigma_A$ with $cost(s,\sigma_A) \leq b$ such that $out(s,\sigma_A) \subseteq \rho$.
Labelling states with $\AU{A^b}{\phi}{\psi}$ and $\AG{A^b}{\phi}$ is
more complex, and in the interests of readability we provide separate
functions: \textsc{until-strategy} for $\AU{A^b}{\phi}{\psi}$ formulas
is shown in Algorithm \ref{alg:until-strategy-multiple-resources}, and
\textsc{box-strategy} for $\AG{A^b}{\phi}$ formulas is shown in
Algorithm \ref{alg:box-strategy-multiple-resources}.

Both algorithms proceed  by depth-first and-or search of $M$. We record information about the state of the search in a search tree of nodes. A \emph{node} is a structure which consists of a state of $M$, the resources available to the agents $A$ in that state (if any), and a finite path of nodes leading to this node from the root node. Edges in the tree correspond to joint actions by all agents. Note that the resources available to the agents in a state $s$ on a path constrain the edges from the corresponding node to be those actions $\sigma_A$ where $cost(s,\sigma_A)$ is less than or equal to the available resources.
For each node $n$ in the tree, we have a function $s(n)$ which returns its
state, $p(n)$ which returns the nodes on the path and $e_i(n)$ which returns the
resource availability on the $i$-th resource in $s(n)$ as a result of following
$p(n)$. The function $\mathit{node}_0(s,b)$ returns the root node, i.e., a node
$n_0$ such that $s(n_0) = s$, $p(n_0) = [\ ]$ and $e_i(n_0) = b_i$ for all
resources $i$. The function $\mathit{node}(n, \sigma, s')$ returns a node $n'$
where $s(n') = s'$, $p(n') = [p(n) \cdot n]$ and for all resources $i$, $e_i(n')
= e_i(n) - cost_i(\sigma)$.

Both \textsc{until-strategy} and \textsc{box-strategy} take a search tree node $n$ and a formula $\phi' \in Sub(\phi_0)$ as input, and have similar structure. They first check if the infinite resource version of $\phi'$ is false in the state represented by node $n$, $s(n)$. If so, they return false immediately, terminating search of the current branch of the search tree. \textsc{until-strategy} also returns true if the second argument $\psi$ of $\phi'$ is true in $s(n)$. Both
\textsc{until-strategy} and \textsc{box-strategy} check whether the state $s(n)$ has been encountered before on $p(n)$, i.e., $p(n)$ ends in a loop. In the case of \textsc{until-strategy}, if the loop is unproductive (i.e., resource availability has not increased since the previous occurrence of $s(n)$ on the path), then the loop is not necessary for a successful strategy, and search on this branch is terminated. If on the other hand the loop strictly increases the availability of at least one resource $i$ and does not decrease the availability of other resources, then 
$e_i(n)$ is replaced with $\infty$ (as a shorthand denoting that any finite amount of $i$ can be produced by repeating the loop sufficiently many times). If all
resource values have been replaced by $\infty$, \textsc{until-strategy} returns true, since the branch satisfies the infinite resource version
$\AU{A}{\phi}{\psi}$ of $\phi'$, and an arbitrary amount of any resource can
be accumulated along the path. For \textsc{box-strategy} the loop check is 
slightly different. If the loop decreases the amount of at least one resource without increasing the availability of any other resource, it cannot form part of
a successful strategy, and the search terminates returning false. If a 
non-decreasing loop is found, then it is possible to maintain the invariant 
formula $\phi$ forever without expending any resources, and the search 
terminates returning true.  

If the none of the if statements evaluates to true, then, in both \textsc{until-strategy} and \textsc{box-strategy}, search continues by considering each action available at $s(n)$ in turn. For each action $\sigma \in ActA$, the algorithm checks whether a recursive call of the algorithm returns true in all outcome states of $\sigma$  (i.e., $\sigma$ is part of a successful strategy). If such a $\sigma$ is found, the algorithm returns true. Otherwise the algorithm returns false. Note that the argument $\phi'$ is passed through the recursive calls unchanged: information about the resources available to the agents in $s(n)$ as a result of following $p(n)$ is encoded in the search nodes.

\begin{algorithm}[h]
\caption{Labelling $\AU{A^b}{\phi}{\psi}$ }
\label{alg:until-strategy-multiple-resources}

\begin{algorithmic} 
\Function{until-strategy}{$n, \AU{A^b}{\phi}{\psi} $}
\If{$s(n) \not\models \AU{A}{\phi}{\psi} $}
\State $\mathbf{return}\ \mathit{false}$ 
\EndIf
\If{$\exists n' \in p(n): s(n') = s(n) \wedge (\forall j: e_j(n') \geq e_j(n))$}
\State $\mathbf{return}\ \mathit{false}$ 
\EndIf
\For{$i \in \{i \in \Res \mid \exists n' \in p(n): s(n') = s(n) \wedge (\forall j: e_j(n') \leq e_j(n)) \wedge e_i(n') < e_i(n)\}$}
\State $e_i(n) \gets \infty$
\EndFor
\If{$s(n) \models \psi $}
\State $\mathbf{return}\ \mathit{true}$ 
\EndIf
\If{$e(n) = \bar \infty$}
\State $\mathbf{return}\ \mathit{true}$ 
\EndIf
\State $ActA \gets \{ \sigma \in D_A(s(n)) \mid cost(s(n),\sigma) \leq e(n) \}$
\For{$\sigma \in ActA $}
\State $O \gets out(s(n),\sigma)$ 
\State $\mathit{strat} \gets \mathit{true}$
\For{$s' \in O$}
\State $\mathit{strat} \gets \mathit{strat} \wedge $
\State $\quad \Call{until-strategy}{node(n,\sigma,s'), \AU{A^b}{\phi}{\psi} }$
\EndFor
\If{$ \mathit{strat}$}
\State $\mathbf{return}\ \mathit{true}$
\EndIf
\EndFor
\State $\mathbf{return}\ \mathit{false}$
\EndFunction
\end{algorithmic}
\end{algorithm}

\begin{algorithm}[h]
\caption{Labelling $\AG{A^b}{\phi}$ }
\label{alg:box-strategy-multiple-resources}
\begin{algorithmic} 
\Function{box-strategy}{$n$, $\AG{A^b}{\phi}$}
\If{$s(n) \not\models \AG{A}{\phi}$}
\State $\mathbf{return}\ \mathit{false}$
\EndIf
\If{$\exists n' \in p(n): s(n') = s(n) \wedge (\forall j: e_j(n') \geq e_j(n)) \wedge (\exists j: e_j(n') > e_j(n))$}
\State $\mathbf{return}\ \mathit{false}$
\EndIf
\If{$\exists n' \in p(n): s(n') = s(n) \wedge (\forall j: e_j(n') \leq e_j(n))$}
\State $\mathbf{return}\ \mathit{true}$
\EndIf
\State $ActA \gets \{ \sigma \in D_A(s(n)) \mid cost(s(n),\sigma) \leq e(n) \}$
\For{$\sigma \in ActA $}
\State $O \gets out(s(n),\sigma)$
\State $\mathit{strat} \gets \mathit{true}$
\For{$s' \in O$}
\State $\mathit{strat} \gets \mathit{strat} \wedge$
\State $\quad \Call{box-strategy}{node(n,\sigma,s'), \AG{A^b}{\phi} }$
\EndFor
\If{$ \mathit{strat}$}
\State $\mathbf{return}\ \mathit{true}$
\EndIf
\EndFor
\State $\mathbf{return}\ \mathit{false}$
\EndFunction
\end{algorithmic}
\end{algorithm}

\begin{lemma}
Algorithm \ref{alg:atlr-label} terminates.
\end{lemma}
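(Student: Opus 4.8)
The plan is to prove termination of the outer labelling loop by observing that the only non-trivial recursive calls are those into \textsc{until-strategy} and \textsc{box-strategy}, so the whole argument reduces to showing that each of these two depth-first searches terminates. Since each call processes a finite set of subformulas, and the standard ATL cases and $Pre$ computation are manifestly terminating, the crux is to bound the depth and branching of the two and-or search trees. The branching factor at any node is finite: at each node $n$ the set $ActA$ is a subset of $D_A(s(n))$, which is finite because $\Agt$, and the action sets $d(s,a)$, are finite; and each action $\sigma$ has finitely many outcomes $out(s(n),\sigma) \subseteq S$ with $S$ finite. By König's lemma it therefore suffices to show that every branch of the search tree has finite length, i.e., that no branch descends forever.

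To bound branch length I would focus on the loop-detection tests that guard each recursion. Along any single branch, the sequence of nodes gives a sequence of pairs $(s(n), e(n))$ where $s(n) \in S$ and $e(n) \in \natinfty^r$. First I would argue that once any component $e_i(n)$ has been set to $\infty$ it remains $\infty$ on all descendants (since $\infty - c = \infty$ for finite $c$), and that the $e(n)=\bar\infty$ test in \textsc{until-strategy} (respectively the non-decreasing-loop test in \textsc{box-strategy}) terminates the branch as soon as enough resources are unbounded. So it is enough to bound the length of a branch on which some fixed nonempty set of resource components stays finite throughout. Restricting attention to those finite components, the key observation is that the guard tests forbid certain repetitions of the same state: in \textsc{until-strategy}, a branch is cut whenever the current state $s(n)$ reoccurs with a pointwise non-greater availability vector (the first \textbf{return} $\mathit{false}$), while a reoccurrence with strictly greater availability in some component and no decrease triggers the replacement of that component by $\infty$; in \textsc{box-strategy} the symmetric tests cut the branch on any reoccurring state whose availability has strictly decreased or has not increased.

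The main obstacle, and the technical heart of the lemma, is to turn these guards into a genuine finiteness bound, because the finite resource components are drawn from $\nat$ and so a priori could grow without bound along a branch. The tool I would invoke is Dickson's lemma (equivalently, the well-quasi-ordering of $\nat^k$ under the pointwise order): in any infinite sequence of vectors in $\nat^k$ there exist indices $i<j$ with the $i$-th vector pointwise below the $j$-th. Applied to an infinite branch, after fixing the finitely many states of $S$ one of them must recur infinitely often, and among the (finite-component) availability vectors at its recurrences Dickson's lemma yields two occurrences $n'$ before $n$ with $e_j(n') \le e_j(n)$ for all the still-finite components $j$. I would then do a case analysis showing this configuration necessarily triggers one of the terminating guards — either the branch was already cut, or some finite component gets promoted to $\infty$, contradicting the assumption that that set of components stayed finite forever. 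Since each promotion strictly shrinks the set of finite components and there are only $r$ resources, only finitely many promotions can occur along a branch, so after the last one the $\nat^k$ argument applies to a fixed $k$ and forces termination. Assembling these pieces — finite branching, the promotion-monotonicity of $\infty$, the well-quasi-order argument on the finite components, and induction on the number of finite resources — gives finiteness of every branch and hence, by König's lemma, termination of both search procedures and of Algorithm \ref{alg:atlr-label}.
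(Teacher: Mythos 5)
Your proposal is correct and follows essentially the same route as the paper: both arguments reduce termination to the two recursive search procedures, extract from a hypothetical infinite branch an infinite subsequence of nodes sharing the same state, and then apply the well-quasi-ordering of $\nat^r$ (the paper proves this inline by induction on $r$, following Reisig's Lemma f, where you cite Dickson's lemma) to find two comparable availability vectors that necessarily trigger one of the loop guards. Your additional bookkeeping (K\"onig's lemma for finite branching, and bounding the number of promotions to $\infty$) makes explicit some details the paper leaves implicit, but the core argument is the same.
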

\begin{proof} 
All the cases in Algorithm~\ref{alg:atlr-label} apart from $\AU{A^b}{\phi}{\psi}$ and $\AG{A^b}{\phi}$
can be computed in time polynomial in $|M|$ and $|\phi|$. The cases for $\AU{A^b}{\phi}{\psi}$ and $\AG{A^b}{\phi}$
involve calling the \textsc{until-strategy} and \textsc{box-strategy} procedures,
respectively, for every state in $S$. We want to show that there is no
infinite sequence of calls to \textsc{until-strategy} or \textsc{box-strategy}.
Assume to the contrary that $n_1,n_2,\ldots$ is
an infinite sequence of nodes in an infinite sequence of recursive calls to \textsc{until-strategy} or 
\textsc{box-strategy}.
Then, since the set of states is finite, there is an infinite
subsequence $n_{i_1},n_{i_2},\ldots$ of $n_1,n_2,\ldots$ such that for all $j$, $s(n_{i_j}) = s$
for some state $s$ (the state is the same for all the nodes in the subsequence). 
We show that then there is an infinite subsequence $n'_1,n'_2,\ldots$ of $n_{i_1},n_{i_2},\ldots$
such that for $k < j$, $e(n'_k) \leq e(n'_j)$. Note that since all nodes have the
same state, this implies that both \textsc{until-strategy} or \textsc{box-strategy} will return 
after finitely many steps: a contradiction.
The proof is very similar to the proof of Lemma f in \cite[p.70]{Reisig.1985} and proceeds by
induction on the number of resources $r$. For $r=1$, since $e(n)$ is always positive, the claim
is immediate. Assume the lemma holds for $r$ and let us show it for $r+1$. Then there is an infinite subsequence
$m'_1,m'_2,\ldots$ of $n_{i_1},n_{i_2},\ldots$ where for all resources $i \in \{1,\ldots,r\}$ 
$e_i(m'_k) \leq e_i(m'_j)$ for $k < j$. Clearly there are two nodes $m'_{j_1}$
and $m'_{j_2}$ in this sequence such that $e_{r+1}(m'_{j_1}) \leq e_{r+1}(m'_{j_2})$ (since there are only finitely many positive integers which are smaller
than $e_{r+1}(m'_1)$). Hence $e(m'_{j_1}) \leq e(m'_{j_2})$ and 
the sequence of calls would terminate in $m'_{j_2}$, a contradiction.
\end{proof}

Before we prove correctness of \textsc{until-strategy} and \textsc{box-strategy}, we need some auxiliary
notions. Let $n$ be a node where one of the procedures returns true. We will refer to $tree(n)$ as the
tree representing the successful call to the procedure. In particular, if the procedure returns true
before any recursive calls are made, then $tree(n)=n$. Otherwise the procedure returns true because there is
an action $\alpha \in Act_A$ such that for all $s' \in out(s(n),\alpha)$ the procedure returns true in
$n' = node(n,\alpha,s')$. In this case, $tree(n)$ has $n$ as its root and trees $tree(n')$ are the children of $n$.
We refer to the action $\alpha$ as $n_{act}$ (the action that generates the children of $n$). For 
the sake of uniformity, if $tree(n)=n$ then we set $n_{act}$ to be $idle$.
Such a tree corresponds to a strategy $F$ where for each path $n \cdots m$ from the root $n$ to a node $m$ 
in $tree(n)$, $F(s(n) \cdots s(m)) = m_{act}$.
\sloppy

A strategy $F$ for satisfying $\AU{A^{b}}{\phi}{\psi}$ is ${\cal U}$-economical for a node $n$ if, intuitively, no path generated by
it contains a loop that does not increase any resource. A strategy is 
$\Box$-economical for a node $n$ if, intuitively, no path generated by
it contains a loop that decreases some resources and does not increase any 
other resources. Formally, a strategy $F$ is ${\cal U}$-economical for $n$ if  
\begin{itemize}
 \item $F$ satisfies $\AU{A^{e(n)}}{\phi}{\psi}$ at $s(n)$, i.e., $F$ is a
       $e(n)$-strategy and $\forall \lambda \in out(s(n),F)$, $\exists i \geq 0:
       \lambda[i] \models \psi$ and $\lambda[j] \models \phi$ for all $j \in
       \{0,\ldots,i\}$
 \item The path $p(n) \cdot n$ is already ${\cal U}$-economical, i.e.,
 $\forall n' \in p(n)\cdot n, n'' \in p(n'): s(n'') = s(n') \Rightarrow e(n'') \not\geq e(n')$; 
 \item Every state is reached by $F$ ${\cal U}$-economically, i.e., for each
       computation $s_0s_1\ldots s_k \ldots \in out(s(n),$ $F)$ and $j < k \leq
       i$ where $i$ is the first index such that $s_i$ satisfies $\psi$, $s_j =
       s_k \Rightarrow cost(s_j\ldots s_k) \not\geq \bar{0}$ with
       $cost(s_j\ldots s_k) = \sum_{l=j,\dots,k-1}cost($
       $\lambda[l],F(\lambda[0,l]))$; and
 \item Every state is reached by $F$ ${\cal U}$-economically with respect to the path $p(n)$, i.e.,
 for every computation $s_0s_1\ldots s_k \ldots \in out(s(n),F)$,
 $\forall n' \in p(n): s(n') = s_k \Rightarrow e(n') \not\geq e(n) - cost(s_0\ldots s_k)$ 
\end{itemize}

A strategy $F$ is $\Box$-economical if:
\begin{itemize}
 \item $F$ satisfies $\AG{A^{e(n)}}{\phi}$ at $s(n)$, i.e., $F$ is a
       $e(n)$-strategy and $\forall \lambda \in out(s(n),F)$, $\forall i \geq 0:
       \lambda[i] \models \phi$;
 \item The path $p(n) \cdot n$ is already $\Box$-economical, i.e.,
 $\forall n' \in p(n)\cdot n, n'' \in p(n'): s(n'') = s(n') \Rightarrow e(n'') \not> e(n')$;
 \item Every state is reached by $F$ $\Box$-economically, i.e.,
 for each computation $s_0s_1\ldots s_k \ldots \in out(s(n),$ $F)$ 
 $\forall j < k: s_j = s_k \Rightarrow cost(s_j\ldots s_k) \not> \bar{0}$;
 \item Every state is reached by $F$ $\Box$-economically with respect to the path $p(n)$, i.e.,
 for every computation $s_0s_1\ldots s_k \ldots \in out(s(n),F)$,
 $\forall n' \in p(n): s(n') = s_k \Rightarrow e(n') \not> e(n) - cost(s_0\ldots s_k)$.
\end{itemize}

Note that any strategy $F$ satisfying 
$\AU{A^{e(n)}}{\phi}{\psi}$ ($\AG{A^{e(n)}}{\phi}$) at $s(n)$ can be converted to an economical one
by eliminating unproductive loops:
\begin{proposition}
There is a strategy to satisfy $\AU{A^{e(n)}}{\phi}{\psi}$
($\AG{A^{e(n)}}{\phi}$) at $s(n)$ iff there is an economical strategy to satisfy
$\AU{A^{e(n)}}{\phi}{\psi}$ ($\AG{A^{e(n)}}{\phi}$) at $s(n)$.
\end{proposition}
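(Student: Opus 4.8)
The plan is to prove the backward direction trivially and to devote all effort to the forward direction, which I would obtain by \emph{loop elimination}: from an arbitrary satisfying strategy I would cut out every loop that fails to improve the resource budget until an economical strategy remains. The backward direction needs nothing, since the first clause of each definition says that an economical strategy already satisfies $\AU{A^{e(n)}}{\phi}{\psi}$ (resp.\ $\AG{A^{e(n)}}{\phi}$) at $s(n)$.

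For the forward direction I would first replace the satisfying strategy $F$ by a finite \emph{witness tree}. In the Until case I would prune every computation of $out(s(n),F)$ at the first state satisfying $\psi$; since $F$ witnesses the Until formula each branch is finite, and since the action sets are finite the tree is finitely branching, so by K\"onig's lemma it is finite. In the Box case the witnessing structure is an infinite $\phi$-labelled tree, and here I would argue exactly as in the termination lemma above: by the well-quasi-ordering of $\nat^r$ (Dickson's lemma), restricting to any branch and to a single infinitely repeated state yields two positions with the same state and pointwise non-decreasing availability, i.e.\ a loop with $cost\le\bar{0}$. Truncating each branch at its first such \emph{good} loop makes every branch finite, so the tree is again finite by K\"onig; at each good-loop leaf the strategy switches to repeating that loop forever, which both preserves $\phi$ and respects the bound precisely because the loop consumes no resource net.

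The heart of the argument is loop removal by minimality. Among all witness trees for the formula at $(s(n),e(n))$ I would pick one, $T^\ast$, with the fewest nodes, and claim it is economical. If not, some root-to-leaf path of $T^\ast$ contains a \emph{bad} loop: for Until a repeat $s_j=s_k$ with $j<k$ and $cost(s_j\ldots s_k)\ge\bar{0}$ (the negation of the third economy clause), and for Box one with $cost(s_j\ldots s_k)>\bar{0}$. In either case the availability at the later occurrence is pointwise below that at the earlier one, $e_k\le e_j$. The key is resource monotonicity: the subtree hanging below position $k$ is a valid witness from $(s_k,e_k)$, and because $s_j=s_k$ and $e_j\ge e_k$ every affordability constraint $cost(s,\sigma)\le e$ and every nonnegativity constraint inside that subtree is only relaxed when it is re-rooted at position $j$ with the larger budget $e_j$. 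Grafting the subtree-below-$k$ onto position $j$ therefore yields another valid witness tree with strictly fewer nodes, contradicting minimality; hence $T^\ast$ has no internal bad loop, which is exactly the third economy clause.

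It remains to address the two clauses that mention the fixed access path $p(n)$. The second clause, that $p(n)\cdot n$ itself contains no bad loop, is a property of $p(n)$ alone and is a standing precondition here, guaranteed because the search only descends to $n$ after the loop tests of \textsc{until-strategy}/\textsc{box-strategy} have already pruned any prefix that closes a bad loop. The fourth clause is once more a ``no bad loop'' condition, now for loops that begin on $p(n)$ and close in the generated computation; I would fold it into the same minimality argument by running the loop analysis on the \emph{combined} path obtained by prefixing $p(n)$ to each branch of the witness tree, using $e(n')\not\ge e(n)-cost(s_0\ldots s_k)$ as the bad-loop test and grafting the generated continuation onto the (larger-budget) occurrence on $p(n)$ exactly as before. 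The step I expect to be the main obstacle is the Box case: one must be certain that truncation at good loops halts on \emph{every} branch --- this is where the well-quasi-ordering of $\nat^r$ is indispensable, since mixed-sign loops satisfy neither loop test and so do not by themselves close a branch --- and that replacing a genuinely infinite $\phi$-maintaining continuation by ``repeat a non-decreasing loop forever'' really does preserve both $\phi$ at every state and $b$-consistency of every finite prefix.
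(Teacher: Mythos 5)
The paper itself offers no proof of this proposition --- only the one-line remark that a satisfying strategy ``can be converted to an economical one by eliminating unproductive loops'' --- so your proposal is not competing with a written argument but filling a deliberate gap, and it does so along exactly the intended lines: build a finite witness tree (K\"onig for the Until case, Dickson/wqo plus truncation at non-decreasing loops for the Box case, mirroring the termination lemma and the $T^i$ construction in the box-soundness lemma), then remove bad loops by grafting the later occurrence's subtree onto the earlier one. The backward direction and the resource-monotonicity observation underpinning the graft are both correct.

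Two points need more care than you give them. First, in the Box case, ``the strategy switches to repeating that loop forever'' is a path-level operation, but a loop in one branch does not determine the opponents' responses on the repetition; the correct continuation is to replay the entire subtree rooted at the earlier occurrence $w(m)$ (this is precisely the paper's $T^{i+1} = T^i$ with leaves replaced by $sub(tree(n),w(m))$ construction), and one must then check that the leaves of the replayed subtree still have good-loop ancestors --- which they do, because availabilities in the graft only increase. Relatedly, your minimality argument deletes the segment between the two occurrences of a bad loop, and a leaf of the truncated Box tree whose good-loop witness lay inside the deleted segment loses that witness; plain node-count minimality does not survive the re-truncation this forces, so the termination measure for the rewriting needs to be chosen more carefully (e.g.\ a multiset/wqo argument) rather than ``fewest nodes.'' Second, for the fourth economy clause you propose ``grafting the generated continuation onto the occurrence on $p(n)$,'' but $p(n)$ is a fixed access path that cannot be rewritten when constructing a strategy from $s(n)$; as stated, the full four-clause proposition is really only needed (and only unproblematic) at the root node $node_0(s,b)$, where $p(n)=[\,]$ makes clauses two and four vacuous, and the path-relative clauses are carried through the algorithm's induction rather than re-established by surgery. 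Worth saying explicitly rather than folding into the same grafting step.
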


Next we prove correctness of \textsc{until-strategy}. The next lemma essentially
shows that replacing a resource value with $\infty$ in
Algorithm~\ref{alg:until-strategy-multiple-resources} is harmless. For the
inductive step of the proof, we need the following notion. Given a tree
$tree(n)$, we call its pruning, denoted as $prune(tree(n), m_1,\ldots,m_k)$, the
tree obtained by removing all children of some nodes $m_1,\ldots, m_k$ that have
only leaves as children in $tree(n)$.

\begin{lemma}
Let $n = node_0(s,b)$ be a node where $\textsc{until-strategy}$ returns true.
Let $f$ be a function that for each leaf $n'$ of $tree(n)$ returns $f(n') \in
\nat^r$ such that $f_i(n') = e_i(n')$ if $e_i(n') \not = \infty$ ($f_i(n')$ can
be any natural number if $e_i(n') = \infty$). Then, 
there is a strategy $F$ such that for every leaf $n'$ of the tree $tree(n)$
induced by $F$, $e(n') \geq f(n')$ holds.
\end{lemma}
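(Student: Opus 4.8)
The plan is to treat each $\infty$ appearing in $tree(n)$ as a placeholder for a \emph{self-sustaining resource pump} and to realise it by unfolding the corresponding productive loop a sufficiently large but finite number of times. The guiding observation is that the for-loop in \textsc{until-strategy} sets $e_i(n)=\infty$ only when there is an ancestor $n'$ on $p(n)$ with $s(n')=s(n)$, $e_j(n')\le e_j(n)$ for all $j$, and $e_i(n')<e_i(n)$; that is, the cycle traversed between $n'$ and $n$ does not decrease the availability of any resource and strictly increases the availability of resource $i$. Since this cycle was realised by affordable actions and leaves availability non-decreasing, it can be repeated from $s(n')$ arbitrarily often, each repetition raising the availability of $i$ by at least one unit and never lowering the availability of any resource. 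Consequently, for any finite demand we can realise at least that much of resource $i$ starting from the \emph{same} initial bound $b$, which is exactly why a fixed, finite $b$ suffices to meet an arbitrary finite target $f$ at the leaves.

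First I would make precise the correspondence between the leaves of $tree(n)$ and the leaves of the tree induced by the strategy $F$ that we build. The strategy $F$ follows the actions $m_{act}$ recorded in $tree(n)$, except that at each node where the search introduced an $\infty$ via a productive cycle, $F$ first repeats that cycle $K$ times before continuing, for a value $K$ to be fixed. Each leaf of the induced tree then corresponds to a leaf $n'$ of $tree(n)$, and at such a leaf the components $e_i(n')$ with $e_i(n')\neq\infty$ are unchanged (a pumped cycle leaves every non-pumped resource exactly as it was), while the components that carried $\infty$ now hold a genuine finite value that grows with $K$. I would also record here that every non-$\psi$ node of $tree(n)$ satisfies $\phi$: since \textsc{until-strategy} only descends when $s(m)\models\AU{A}{\phi}{\psi}$, and $s(m)\not\models\psi$ forces $s(m)\models\phi$, the inserted cycle iterations visit only $\phi$-states and hence maintain the until-invariant.

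The proof itself would proceed by induction on the size of $tree(n)$, using the pruning operation $prune(\cdot)$ to remove a realised bottom layer at each step. In the base case $tree(n)$ contains no $\infty$, the induced strategy is already concrete, and $e(n')=f(n')$ at every leaf. For the inductive step I would select a deepest node $m$ at which a resource was first set to $\infty$, together with its loop head $m'$, and choose $K$ large enough that repeating the cycle $m'\!\cdots\!m$ produces at least $f_i(n')$ units of each resource $i$ that the cycle pumps, for every leaf $n'$ of the subtree below $m$. Unfolding this cycle replaces the $\infty$'s introduced at $m$ by finite values and converts the demand at the leaves below $m$ into a finite demand at $m'$; pruning the now-realised bottom layer yields a strictly smaller tree that is again a successful, economical run of \textsc{until-strategy}, to which the induction hypothesis applies with the updated finite target. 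Here I would invoke the preceding Proposition on economical strategies to ensure the reduced tree is of the required form, and the availability bound $e(n)$ to guarantee that each traversal of the inserted cycle stays within the resource bound.

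The main obstacle is the bookkeeping in the inductive step: a single leaf may inherit $\infty$'s from several nested productive cycles on its branch, so I must order the unfoldings from the deepest cycle outward and verify that enlarging $K$ for an inner cycle only increases availability, and therefore never invalidates the feasibility of an outer cycle or the finite targets already achieved. The delicate point is that the demands propagated up to each loop head must remain finite — which holds precisely because every such cycle is a net non-decreasing, self-sustaining pump — and that the pruning reduction produces a tree which is again a \emph{valid} economical \textsc{until-strategy} computation rather than merely a resource-annotated tree.
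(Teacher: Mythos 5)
Your proposal is correct and follows essentially the same route as the paper: induction on the structure of $tree(n)$ via the pruning operation, realising each $\infty$ by repeating the witnessing non-decreasing productive cycle (the node $w_r(m_i)$ with $s(w_r(m_i))=s(m_i)$, $e(w_r(m_i))\leq e(m_i)$ and $e_r(w_r(m_i))<e_r(m_i)$) finitely many times until the finite target $f$ is met at the leaves, and then recursing on the smaller pruned tree. Your additional care about ordering the unfoldings of nested cycles and about the pumped cycle preserving affordability and the $\phi$-invariant matches the bookkeeping the paper performs with the counters $\hat h_r(m_i)$ and the column graphs.
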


\begin{proof} 
By induction on the structure of $tree(n)$.

\begin{description}
  \item[Base Case:]
        Let $tree(n)$ contain only its root. The proof is obvious for any strategy. 
  \item[Inductive Step:]
        Let us consider a pruning $T$ of $tree(n)$. By the induction
        hypothesis, any tree $T'$ that has a less complex structure than $T$ has a strategy to generate at least $f(n')
        \in \nat^r \leq e(n')$ for all leaves $n'$ of $T'$.
        \begin{figure}[h]
        \centering
        \def\svgwidth{0.5\textwidth} 
        {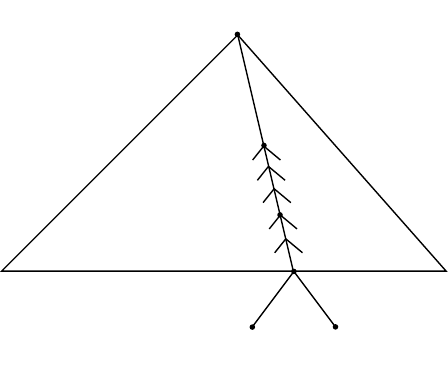}
        \caption{Tree $T$ and $T'=prune(T,m)$.}
        \label{fig:smaller-tree}
        \end{figure}

        In the following, given nodes $n, n_1, \ldots, n_k$, we denote by
        $n(n_1,\ldots,n_k)$ the depth-1 tree which has $n$ as its root and
        $n_1,\ldots,n_k$ as the immediate leaves of $n$.

        Let $m(m_1,\ldots,m_k)$ be an arbitrary depth-1 sub-tree of $T$ (see
        Figure \ref{fig:smaller-tree}). By removing $m_1,\ldots,m_k$ from $T$,
        we obtain a pruning $T'$ of $T$.

        Let $n \cdots m \cdot m_i$ be a path in $T$ from the root $n$ to one of
        the leaves $m_i$. For each resource $r$ the availability of which turns
        to $\infty$ at $m_i$, there must be a node, denoted by $w_r(m_i)$, in
        the path $n \cdots m \cdot m_i$ which is used to turn the availability
        of $r$ to $\infty$ at $m_i$, that is, $w_r(m_i)$ is such that
        $s(w_r(m_i)) = s(m_i)$, $e_i(w_r(m_i)) \leq e_i(m_i)$ for each $i$, and
        $e_r(w_r(m_i)) < e_r(m_i)$. We may repeat the path from $w_r(m_i)$ to
        $m_i$ several times to generate enough resource availability for $r$. We
        call the path from $w_r(m_i)$ to $m_i$ together with all the immediate
        child nodes of those along the path the column graph from $w_r(m_i)$ to
        $m_i$. Each time, an amount of $g_r = e_r(m) - cost_r(m_{act}) -
        e_r(w(m_i))$ is generated. Then, the minimal number of times to repeat the
        path from $w(m_i)$ to $m_i$ is $h_r(m_i) = \ceil{\frac{f_r(m_i) - (e_r(m)
        - cost_r(m_{act}))}{g_r}}$.
        
        Note that we need to repeat at each $m_i$ for each resource $r$ the path
        from $w_r(m_i)$ to $m_i$ $h_r(m_i)$ times. To record the number of times the
        path has been repeated, we attach to each $m_i$ a counter $\hat h_r(m_i)$ for
        each $r$ and write the new node of $m_i$ as $m_i^{\hat h(m_i)}$.
        \begin{figure}[h]
        \centering
        \def\svgwidth{0.8\textwidth}
        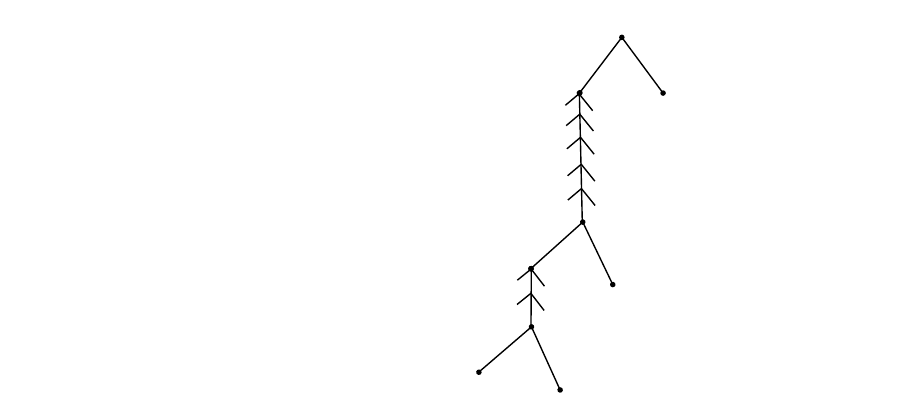
        \caption{Repeating steps to generate resources.}
        \label{fig:repeat}
        \end{figure}

        Initially, $\hat h_r(m_i) =0$ for all $r$ and for all nodes $m_i$. A
        step (see Figure \ref{fig:repeat}) of the repetition is done as follows:
        let $m_i^{\hat h(m_i)}$ be some node such that $\hat h_r(m_i) <
        h_r(m_i)$. Let $m_j^{\hat h(m_j)}$ be the sibling of $m_i^{\hat h(m_i)}$
        ($j\not=i)$. We extend from $m_i^{\hat h(m_i)}$ the column graph from
        $w_r(m_i)$ to $m_i$; each new $m_j$ ($j\not=i$) is annotated with $\hat
        h(m_j)$ (same as before) and the new $m_i$ is annotated with $\hat
        h(m_i)$ except that $\hat h_r(m_i)$ is increased by 1. We repeat the
        above step until $\hat h_r(m_i) = h_r(m_i)$ (it must terminate due to
        the fact that $h_r(m_i) < \infty$ for all $r$ and $m_i$).

        At the end, we obtain a tree where all leaves $m_i^{\hat h(m_i)}$ have
        $\hat h_r(m_i) = h_r(m_i)$ for all $r$, hence the availability of $r$ is
        at least $f_r$. Let $E(m)$ be the extended tree from $m$.
        
        Let $F_{T'}$ be the strategy generated by $T'$. We extend $F_{T'}$ with
        $E(m)$ for every occurrence of $m$ in $F_{T'}$ and denote this extended
        strategy $F^E_{T'}$. For all leaves $m'$ in $E(m)$ other than $m_i$, let
        $sub(T,m')$ be some sub-tree of $T$ staring from $m'$. Then, we extend
        $F_{T'}^E$ with $sub(T,m')$ for every occurrence of $m'$ in $F^E_{T'}$.
        We finally obtain a tree $F_T$ which satisfies the condition that all
        leaves $l$ have resource availability of at least $f(l)$.
\end{description}
\end{proof}
    
\begin{corollary}
If $\Call{until-strategy}{node_0(s,b),\AU{A^b}{\phi}{\psi}}$ returns true 
then $s \models \AU{A^{b}}{\phi}{\psi}$.
\end{corollary}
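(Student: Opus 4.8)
The plan is to read off an honest $b$-strategy from the successful search tree $tree(n)$ produced by the call $\textsc{until-strategy}(node_0(s,b),\AU{A^b}{\phi}{\psi})$, using the preceding Lemma to replace the purely symbolic $\infty$-markings by a concrete finite resource reserve. First I would classify the leaves of $tree(n)$. A call returns true only through one of the two early-exit branches, so every leaf $n'$ either satisfies $s(n') \models \psi$ or has $e(n') = \bar\infty$ (all resources turned to $\infty$). Moreover, along the path from the root to any leaf every interior node $n''$ passed the opening test $s(n'') \models \AU{A}{\phi}{\psi}$ and is not itself a $\psi$-leaf; since the unbounded until modality is the least fixpoint of $X \mapsto \psi \lor (\phi \land \AO{A}{X})$, such an $n''$ satisfies $\phi$. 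This is exactly what will guarantee that $\phi$ is maintained strictly before $\psi$ is reached.

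Next I would specify the target vector function $f$ to feed into the Lemma. For a $\psi$-leaf $n'$ I set $f_i(n') = e_i(n')$ on the finite coordinates and $f_i(n') = 0$ on the $\infty$-coordinates: once $\psi$ holds the strategy simply plays $idle$ forever, which costs $\bar 0$, so no further resources are needed. For an $\bar\infty$-leaf $n'$ all coordinates are $\infty$, so $f(n')$ is unconstrained; here I exploit that $s(n') \models \AU{A}{\phi}{\psi}$ and that $M$ is finite, which yields a standard (unbounded) ATL strategy $G_{n'}$ from $s(n')$ reaching $\psi$ within finitely many steps along every outcome while maintaining $\phi$. The componentwise maximum over all finite outcomes of $G_{n'}$ and all their prefixes of the cumulative cost is then a finite vector $\kappa_{n'} \in \nat^r$, and I set $f(n') = \kappa_{n'}$, so that having at least $f(n')$ units of each resource at $n'$ is enough to run $G_{n'}$ $\kappa_{n'}$-consistently.

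Applying the Lemma with this $f$ produces a strategy $F$ whose induced tree reaches each leaf $n'$ with availability $e(n') \geq f(n')$. I would then assemble the composite strategy $F^\ast$: follow $F$ until a leaf is reached, then play $idle$ forever at a $\psi$-leaf, and follow $G_{n'}$ (and $idle$ afterwards) at an $\bar\infty$-leaf. Along $F$ the availability never drops below $\bar 0$, because \textsc{until-strategy} only explores actions with $cost(s(n),\sigma) \leq e(n)$ and the loop repetitions introduced by the Lemma are resource non-decreasing; hence every prefix cost up to a leaf is $\leq b$. Since the continuation at $n'$ has prefix cost at most $f(n') \leq e(n')$ on every prefix, the total prefix cost through the continuation stays $\leq b$. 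Thus $F^\ast$ is a $b$-strategy, every one of its computations reaches $\psi$ with $\phi$ holding beforehand, and therefore $s \models \AU{A^b}{\phi}{\psi}$.

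I expect the main obstacle to be the treatment of the $\bar\infty$-leaves: one must argue that the $\infty$-markings, which the algorithm introduces only symbolically upon detecting a productive loop, can be cashed out into a genuine finite reserve $\kappa_{n'}$ that is actually delivered at the leaf. This is precisely what the Lemma supplies by repeating the productive loops sufficiently many times, so the real work is in choosing $f$ correctly and in checking that the loop repetitions violate neither the $\phi$-invariant nor $b$-consistency — both of which hold because every repeated node still satisfies $\AU{A}{\phi}{\psi}$ and the repeated loops do not decrease any resource.
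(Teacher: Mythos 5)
Your proof is correct and follows essentially the same route as the paper, which derives this corollary directly from the preceding lemma: you instantiate the lemma's function $f$ with $0$ (plus the finite components) at $\psi$-leaves and with a finite reserve $\kappa_{n'}$ sufficient to run an unbounded $\AU{A}{\phi}{\psi}$-strategy at $\bar\infty$-leaves, then concatenate the resulting $b$-consistent prefix with the cheap continuations. This is exactly the intended reading of the lemma; your write-up merely makes explicit the leaf classification and the choice of $f$ that the paper leaves to the reader.
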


\begin{lemma}
If $\Call{until-strategy}{n,\AU{A^b}{\phi}{\psi}}$ returns false, then them
there is no strategy satisfying $\AU{A^{e(n)}}{\phi}{\psi}$ from $s(n)$ that is
${\cal U}$-economical for $n$.
\end{lemma}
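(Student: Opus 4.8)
I would prove the lemma directly, by induction on the finite recursion tree of the call $\textsc{until-strategy}(n,\AU{A^b}{\phi}{\psi})$; its finiteness is exactly what the termination argument above established. The first step is to split on the reason the call returns \emph{false}. There are three: the guard $s(n)\not\models\AU{A}{\phi}{\psi}$ fires; an unproductive loop is detected, i.e.\ some $n'\in p(n)$ has $s(n')=s(n)$ and $e_j(n')\geq e_j(n)$ for all $j$; or the terminal $\mathbf{return}\ \mathit{false}$ is reached after every affordable action has failed.

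The first two reasons are dispatched with no use of the induction hypothesis, directly from the definition of a ${\cal U}$-economical strategy for $n$. If $s(n)\not\models\AU{A}{\phi}{\psi}$, then, since the first clause of the definition requires any ${\cal U}$-economical $F$ to satisfy $\AU{A^{e(n)}}{\phi}{\psi}$ at $s(n)$ and an $e(n)$-strategy is a fortiori an $\bar\infty$-strategy, such an $F$ would witness $s(n)\models\AU{A}{\phi}{\psi}$---a contradiction, so none exists. If an unproductive loop $n'$ is detected, then taking $n'$ together with $n$ itself violates the second clause (that the path $p(n)\cdot n$ be already ${\cal U}$-economical); since that clause is a property of the path and not of any particular strategy, no strategy at all is ${\cal U}$-economical for $n$.

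The substantive case is the terminal false. Passing the two guards tells us $s(n)\models\AU{A}{\phi}{\psi}$, that $p(n)\cdot n$ is economical, and (since true was not returned earlier) that $s(n)\not\models\psi$ and $e(n)\neq\bar\infty$; in particular $s(n)\models\phi$. Suppose for contradiction a ${\cal U}$-economical strategy $F$ for $n$ existed. Because $s(n)\not\models\psi$, $F$ prescribes a first joint action $\sigma^\ast=F(s(n))$, and as $F$ is an $e(n)$-strategy this action is affordable, $cost(s(n),\sigma^\ast)\leq e(n)$, so $\sigma^\ast\in ActA$. Since the call nonetheless returned false, for this $\sigma^\ast$ there is an outcome $s'\in out(s(n),\sigma^\ast)$ on which the recursive call at $m'=node(n,\sigma^\ast,s')$ returned false. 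I would then show that the shifted strategy $F^{s'}$ given by $F^{s'}(s'\lambda)=F(s(n)s'\lambda)$ is ${\cal U}$-economical for $m'$; the induction hypothesis at $m'$ then denies the existence of such a strategy, the contradiction that closes the case.

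Establishing that $F^{s'}$ is ${\cal U}$-economical for $m'$ is the main obstacle, and it reduces to transporting the four clauses of the definition from $n$ to $m'$. The budget clause is bookkeeping, using $e(m')=e(n)-cost(\sigma^\ast)$ as the residual budget. The delicate clause is path-economy for $p(m')\cdot m'=p(n)\cdot n\cdot m'$: economy of $p(n)\cdot n$ is inherited, and that appending $m'$ introduces no unproductive loop is obtained by instantiating the third and fourth economy clauses of $F$ at $n$ with the one-step computation $s(n)\,s'$, where $cost=cost(\sigma^\ast)$ and $e(n)-cost(\sigma^\ast)=e(m')$; this rules out both an earlier $n'\in p(n)$ and $n$ itself as an unproductive predecessor of $s(m')$. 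The last two clauses for $m'$ follow from those for $n$ by prepending this same first step to every computation out of $s'$. Throughout, the one point demanding care is the $\infty$ entries written into $e(n)$ by the for-loop: all affordability checks and $\leq$/$\not\geq$ comparisons must be read against the post-loop value of $e(n)$, and the preceding lemma---that replacing a resource value by $\infty$ is harmless---is what lets us treat these $\infty$-augmented budgets as genuinely attainable resource levels.
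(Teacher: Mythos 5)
Your proposal is correct and follows essentially the same route as the paper: induction on the recursion tree, with the two guard-triggered returns handled directly from the definition of ${\cal U}$-economy and the terminal \emph{false} handled by contradiction via the shifted strategy $F'(\lambda)=F(s(n)\lambda)$ at the failing successor node. The only difference is one of detail: you spell out why the shifted strategy remains ${\cal U}$-economical for $node(n,\sigma^\ast,s')$ and flag the interaction with the $\infty$-augmented budgets, both of which the paper compresses into a single ``obviously.''
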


\begin{proof}
We prove the lemma by induction on the height in the recursion tree of
$\Call{until-strategy}{\,}$ calls.
\begin{description}
  \item [Base Case:] If \emph{false} is returned by the first if-statement, then
        $s(n)\not\models \AU{A}{\phi}{\psi}$; this also means
        there is no strategy satisfying
        $\AU{A^{e(n)}}{\phi}{\psi}$ from $s(n)$.
  
        If \emph{false} is returned by the second if-statement, then any
        strategy satisfying $\AU{A^{e(n)}}{\phi}{\psi}$ from $s(n)$ is
        not economical.
        
  \item [Inductive Step:] If \emph{false} is not returned by the first two
        if-statements, then, for all actions $\sigma \in ActA$, there exists $s'
        \in out(s(n),\sigma)$ such that
        $\Call{until-strategy}{n',\AU{A^b}{\phi}{\psi}}$ (where
        $n'=node(n,\sigma,s')$) returns false. By induction hypothesis, there is
        no strategy satisfying $\AU{A^{e(n')}}{\phi}{\psi}$ from $s(n')$ that is
        $\cal{U}$-economical for $n'$. Assume to the contrary that there is an
        economical strategy satisfying $\AU{A^{e(n)}}{\phi}{\psi}$ from $s(n)$.
        Let $\sigma = F(s(n))$, then $\sigma \in ActA$. Obviously, for all $s'
        \in out(s(n),\sigma)$, $F'(\lambda) = F(s(n)\lambda)$ is an economical
        strategy from $n' = node(n,\sigma,s')$. This is a contradiction; hence,
        there is no strategy satisfying $\AU{A^{e(n)}}{\phi}{\psi}$ from $s(n)$
        that is $\cal{U}$-economical for $n$.
\end{description}
\end{proof}
\begin{corollary}
If $\Call{until-strategy}{node_0(s,b),\AU{A^b}{\phi}{\psi}}$ returns false 
then $s \not\models \AU{A^{b}}{\phi}{\psi}$.
\end{corollary}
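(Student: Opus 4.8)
The plan is to obtain this corollary as an immediate consequence of the preceding Lemma together with the Proposition relating arbitrary and economical strategies, exploiting the fact that at the root node the qualifier ``for $n$'' in ``${\cal U}$-economical for $n$'' adds nothing. First I would instantiate the Lemma at the root $n = node_0(s,b)$, where by construction $s(n) = s$, $e(n) = b$, and $p(n) = [\ ]$. Since \textsc{until-strategy} returns false on this node, the Lemma yields that there is no strategy satisfying $\AU{A^{e(n)}}{\phi}{\psi}$, that is $\AU{A^{b}}{\phi}{\psi}$, from $s$ that is ${\cal U}$-economical for $n$.

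The next step is to verify that at the root ``${\cal U}$-economical for $n$'' coincides with plain economical. Of the four clauses defining ${\cal U}$-economical, two quantify over nodes of $p(n)$: the requirement that $p(n)\cdot n$ be already ${\cal U}$-economical, and the requirement that every reachable state be reached ${\cal U}$-economically with respect to $p(n)$. With $p(n) = [\ ]$ the latter is vacuous, while the former reduces to a condition on the single node $n$, whose own predecessor set $p(n)$ is empty, so it holds trivially. What survives are exactly the clauses stating that $F$ is a $b$-strategy enforcing $\AU{A^{b}}{\phi}{\psi}$ from $s$ and that $F$ contains no unproductive loop---precisely the definition of an economical strategy. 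Hence the conclusion of the previous paragraph is equivalent to: there is no economical strategy satisfying $\AU{A^{b}}{\phi}{\psi}$ at $s$.

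Finally I would invoke the Proposition, which asserts that a strategy satisfying $\AU{A^{b}}{\phi}{\psi}$ at $s$ exists if and only if an economical one does. Reading it contrapositively and combining it with the preceding step, no strategy whatsoever satisfies $\AU{A^{b}}{\phi}{\psi}$ at $s$; by the semantic clause for the until modality this is exactly $s \not\models \AU{A^{b}}{\phi}{\psi}$, as required.

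I do not anticipate any serious difficulty, since the substance of the argument is already carried by the Lemma and the Proposition. The only point that demands care is the reduction in the second paragraph: I would want to confirm, under the paper's reading of the empty path, that both path-indexed clauses of the ${\cal U}$-economical definition are genuinely vacuous at $n_0$, so that ``no ${\cal U}$-economical strategy for the root'' really does mean ``no economical strategy at $s$ under bound $b$''. Once that identification is secured, the corollary follows at once.
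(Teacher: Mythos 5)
Your proposal is correct and follows exactly the route the paper intends: the paper states this corollary without a separate proof, as an immediate consequence of instantiating the preceding Lemma at the root node (where $e(n_0)=b$ and $p(n_0)=[\ ]$ make the path-relative clauses of ``${\cal U}$-economical for $n$'' vacuous) and then applying the Proposition that a satisfying strategy exists iff an economical one does. Your added care about checking the vacuity of the path-indexed clauses at the empty path is a sound filling-in of what the paper leaves implicit.
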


Now we turn to Algorithm \ref {alg:box-strategy-multiple-resources} for
labelling states with $\AG{A^b}{\phi}$. First we show its soundness.

\begin{lemma}
Let $n = node_0(s,b)$.
If $\Call{box-strategy}{n,\AG{A^b}{\phi}}$ returns true then $s(n) \models \AG{A^b}\phi$.
\end{lemma}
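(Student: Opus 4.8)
The plan is to convert the finite successful search tree $tree(n)$ into an infinite $b$-strategy by folding each leaf back onto a matching ancestor, and then to verify the two defining conditions of $\AG{A^b}{\phi}$: that the strategy is a $b$-strategy, and that $\phi$ holds at every state along every consistent computation.

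First I would read off the structure of $tree(n)$. Since \textsc{box-strategy} returns true at $n = node_0(s,b)$ and the algorithm terminates, $tree(n)$ is finite. Every node $v$ in it has passed the first if-statement, so $s(v) \models \AG{A}{\phi}$ and in particular $s(v) \models \phi$. An internal node $v$ returns true because some $v_{act} \in ActA$ satisfies $cost(s(v),v_{act}) \leq e(v)$ and \emph{all} outcomes $s' \in out(s(v),v_{act})$ give children on which the recursion returned true; hence every adversary response is covered and $e(node(v,v_{act},s')) = e(v) - cost(s(v),v_{act}) \geq \bar 0$. A leaf $m$ returns true via the third if-statement, so it has an ancestor $n' \in p(m)$ with $s(n') = s(m)$ and $e(n') \leq e(m)$.

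Next I would define the infinite strategy $F$ by unfolding $tree(n)$ with a back-edge at each leaf: whenever a computation reaches a leaf $m$, $F$ continues from $s(m)$ exactly as it would from the matching ancestor $n'$ (replaying $n'_{act}$ and, recursively, the subtree rooted at $n'$). Because $tree(n)$ is finite, every branch eventually reaches a leaf, so this yields a well-defined $F : S^+ \to \Act^{|A|}$ whose every computation in $out(s,F)$ visits only states occurring as $s(v)$ for nodes $v$ of $tree(n)$. By the previous paragraph each such state satisfies $\phi$, so the invariant condition of $\AG{A^b}{\phi}$ holds for free, and the coverage of all outcomes at internal nodes guarantees that $F$ answers every adversary move.

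The crux is showing that $F$ is a $b$-strategy, i.e. that the running availability never drops below $\bar 0$ (equivalently, every prefix has cost $\leq b$). I would prove by induction along each computation that at any point which the unfolding identifies with tree node $v$, the actual availability $E$ satisfies $E \geq e(v)$. At the root $E = b = e(n)$. At an internal $v$, taking $v_{act}$ reaches a child $v'$ with $e(v') = e(v) - cost(s(v),v_{act})$, and the new availability $E - cost(s(v),v_{act}) \geq e(v')$. At a leaf $v = m$ folding to $n'$, the third-if condition gives $e(n') \leq e(m) \leq E$, so continuing as at $n'$ preserves $E \geq e(n')$. Since every node of $tree(n)$ has $e(v) \geq \bar 0$, we conclude $E \geq \bar 0$ throughout, so $out(s,F) = out(s,F,b)$ and $F$ is a $b$-strategy. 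Together with the $\phi$-invariant this gives $s \models \AG{A^b}{\phi}$. The one delicate point is maintaining the dominance invariant $E \geq e(v)$ across the infinitely many foldings; once it is secured, both the budget constraint and the safety of $\phi$ follow immediately.
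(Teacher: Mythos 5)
Your proposal is correct and follows essentially the same route as the paper: the paper likewise expands $tree(n)$ into an infinite strategy by repeatedly replacing each leaf $m$ with the subtree rooted at its dominating ancestor $w(m)$ (your $n'$) and taking the limit. Your explicit dominance invariant $E \geq e(v)$ is a welcome elaboration of a step the paper leaves implicit, but it is not a different argument.
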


\begin{proof}
Recall that, for each node $m$ in $tree(n)$, we denote by $sub(tree(n),m)$ the sub-tree
of $tree(m)$ rooted at $m$. 
        \begin{figure}[h]
        \centering
        \def\svgwidth{0.5\textwidth}
        {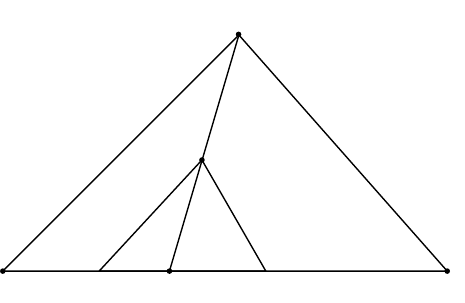}
        \caption{$w(m)$ of $m$ in $tree(n)$.}
        \label{fig:box-tree}
        \end{figure}
For each leaf $m$ of $tree(n)$,
let $w(m)$ denote one of the nodes in $p(m)$ such that $s(w(m)) = s(m)$ and
$e(w(m)) \leq e(m)$  (see Figure \ref{fig:box-tree}).

Let us expand $tree(n)$ as follows:
\begin{itemize}
\item $T^0$ is $tree(n)$;
\item $T^{i+1}$ is $T^{i}$ where all its leaves $m$ are replaced by
      $sub(tree(n),w(m))$ (see Figure \ref{fig:box-strategy}).
\end{itemize}
        \begin{figure}[h]
        \centering
        \def\svgwidth{\textwidth}
        {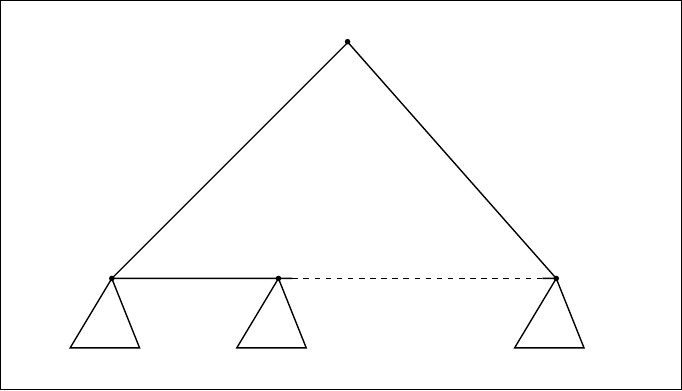}
        \caption{One step in constructing the strategy.}
        \label{fig:box-strategy}
        \end{figure}

Let $T = \lim_{i \to \infty}T^i$, then $T$ is a strategy for $\AG{A^b}\phi$.
 
\end{proof}

\begin{lemma}
If $\Call{box-strategy}{n,\AG{A^b}{\phi}}$ returns false, then there is no
strategy satisfying $\AG{A^{e(n)}}{\phi}$ from $s(n)$ that is $\Box$-economical
for $n$.
\end{lemma}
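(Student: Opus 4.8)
The plan is to prove this by induction on the height of $n$ in the recursion tree of $\Call{box-strategy}{\,}$ calls, exactly mirroring the earlier lemma for $\Call{until-strategy}{\,}$. First I would note that the procedure returns \emph{false} in precisely three situations: (i) the first if-statement fires because $s(n)\not\models\AG{A}{\phi}$; (ii) the second if-statement fires because $p(n)$ contains a strictly decreasing loop back to $s(n)$; or (iii) control reaches the final \textbf{return} \emph{false} after the \textbf{for}-loop, so that for every affordable $\sigma\in ActA$ some successor returns false. Situations (i) and (ii) are the base case and (iii) is the inductive step.

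For the base case, if \emph{false} comes from the first if-statement then $s(n)\not\models\AG{A}{\phi}$. Since every $e(n)$-strategy is trivially also a $\bar{\infty}$-strategy, $s(n)\models\AG{A^{e(n)}}{\phi}$ would force $s(n)\models\AG{A}{\phi}$; hence no strategy whatsoever satisfies $\AG{A^{e(n)}}{\phi}$ at $s(n)$, economical or not. If \emph{false} comes from the second if-statement, there is $n'\in p(n)$ with $s(n')=s(n)$, $e(n')\geq e(n)$ and $e_j(n')>e_j(n)$ for some $j$, i.e. a strictly decreasing loop ending at $n$. Instantiating the second clause of the definition of $\Box$-economical with the node $n$ itself and the witness $n'$, this loop gives $e(n')>e(n)$, directly violating the requirement that $p(n)\cdot n$ already be $\Box$-economical, so no strategy can be $\Box$-economical for $n$.

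For the inductive step I would argue by contradiction. Suppose \emph{false} is returned at the end and yet there exists a $\Box$-economical strategy $F$ satisfying $\AG{A^{e(n)}}{\phi}$ at $s(n)$. Let $\sigma=F(s(n))$; since $F$ is an $e(n)$-strategy, the length-one prefix has cost $cost(s(n),\sigma)\leq e(n)$, so $\sigma\in ActA$. Because the algorithm returned false, there is an outcome $s'\in out(s(n),\sigma)$ for which $\Call{box-strategy}{n',\AG{A^b}{\phi}}$ returns false, where $n'=node(n,\sigma,s')$; by the induction hypothesis no $\Box$-economical strategy for $n'$ exists. I would then define the tail strategy $F'(\lambda)=F(s(n)\,\lambda)$ and show it is $\Box$-economical for $n'$, contradicting the hypothesis. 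Since $e(n')=e(n)-cost(s(n),\sigma)$, the invariant clause (1) and the reached-$\Box$-economically clauses (3)--(4) transfer to $F'$ directly by identifying each $F'$-computation from $s'$ with its $F$-extension through $s(n)$ and re-indexing; the clause requiring genuine care is clause (2).

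The main obstacle I expect is exactly this clause (2): that the extended search path $p(n)\cdot n\cdot n'$ is still $\Box$-economical, i.e. that the edge $\sigma$ to $s'$ creates no strictly decreasing loop ending at $n'$. A candidate loop runs from some $m\in p(n)\cdot n$ with $s(m)=s'$ and $e(m)>e(n')$. For $m\in p(n)$ this is ruled out by clause (4) for $F$ at $n$ applied at the length-one prefix, since $e(n)-cost(s(n),\sigma)=e(n')$. The delicate sub-case is $m=n$ itself, which arises when $\sigma$ leads back to $s(n)$: there a strictly decreasing loop would mean $cost(s(n),\sigma)>\bar 0$, which is precisely forbidden by clause (3) for $F$. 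Once this is established, the contradiction with the induction hypothesis closes the step, and the three situations together exhaust all ways the procedure can return false.
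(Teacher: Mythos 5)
Your proof is correct and follows essentially the same route as the paper: induction on the height in the recursion tree, with the two if-statements as base cases and a contradiction via the tail strategy $F'(\lambda)=F(s(n)\lambda)$ in the inductive step. The only difference is that you spell out why $F'$ remains $\Box$-economical for $n'$ (using clauses (3) and (4) for $F$ to rule out new decreasing loops ending at $n'$), a step the paper dismisses with ``obviously''; your elaboration is accurate.
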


\begin{proof}
We prove the lemma by induction on the height in the recursion tree of
$\Call{box-strategy}{\,}$ calls.
\begin{description}
  \item [Base Case:] If \emph{false} is returned by the first if-statement, then
        $s(n)\not\models \AG{A}{\phi}$; this also means
        there is no strategy satisfying $\AG{A^{e(n)}}{\phi}$ at $s(n)$.
        
        If \emph{false} is returned by the second if-statement, then 
        any strategy satisfying $\AG{A^{e(n)}}{\phi}$ at $s(n)$ is not $\Box$-economical.
  \item [Inductive Step:] If \emph{false} is not returned by the first two
        if-statements, for all actions $\sigma \in ActA$, there exists $s' \in
        out(s(n),\sigma)$ such that $\Call{box-strategy}{n',\AG{A^b}{\phi}}$
        (where $n'=node(n,\sigma,s')$) returns false. Assume to the contrary
        that there is a strategy $F$ satisfying $\AG{A^{e(n)}}{\phi}$ from
        $s(n)$ that is $\Box$-economical for $n$. Let $\sigma = F(s(n))$, then
        $\sigma \in ActA$. Obviously, for all $s' \in out(s(n),\sigma)$,
        $F'(\lambda) = F(s(n)\lambda)$ is a strategy $\Box$-economical for $n' =
        node(n,\sigma,s')$. This is a contradiction; hence, there is no strategy
        satisfying $\AG{A^{e(n)}}{\phi}$ from $s(n)$ that is $\Box$-economical
        for $n$. \hfill $\Box$
\end{description}
\end{proof}
Then, we have the following result directly:
\begin{corollary}
If $\Call{box-strategy}{node_0(s,b),\AG{A^b}{\phi}}$ returns false 
then $s \not\models \AG{A^b}{\phi}$.
\end{corollary}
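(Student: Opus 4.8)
The plan is to obtain the corollary directly by chaining the immediately preceding lemma with the earlier Proposition on economical strategies, both specialised to the root node. First I would instantiate the previous lemma at $n_0 = node_0(s,b)$. By the definition of $node_0$ we have $s(n_0)=s$ and $e(n_0)=b$, so the lemma states that if $\Call{box-strategy}{n_0,\AG{A^b}{\phi}}$ returns false, then there is no strategy satisfying $\AG{A^b}{\phi}$ from $s$ that is $\Box$-economical for $n_0$.

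The key observation is that at the root the path $p(n_0)$ is empty. Consequently the two path-dependent clauses in the definition of $\Box$-economical for a node --- the requirement that $p(n_0)\cdot n_0$ already be $\Box$-economical, and the requirement that every reached state be reached $\Box$-economically with respect to $p(n_0)$ --- hold vacuously, since they quantify over nodes in $p(n_0)=[\,]$. Hence ``$\Box$-economical for $n_0$'' collapses to the plain notion of an economical strategy used in the Proposition: a strategy $F$ that satisfies $\AG{A^b}{\phi}$ at $s$ and along which no generated computation contains a loop that decreases some resource while increasing none.

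With this identification in hand, I would invoke the Proposition, which asserts that a strategy satisfying $\AG{A^b}{\phi}$ at $s$ exists iff an economical one does. Since the lemma has ruled out every economical strategy, the Proposition yields that no strategy satisfying $\AG{A^b}{\phi}$ at $s$ exists at all. By the truth definition of $\AG{A^b}{\phi}$ --- namely the existence of a $b$-strategy $F_A$ with $M,\lambda[i]\models\phi$ for all $\lambda\in out(s,F_A)$ and all $i\geq 0$ --- this is exactly the statement that $s \not\models \AG{A^b}{\phi}$.

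The only delicate point, and the step I would verify most carefully, is this identification of ``$\Box$-economical for $n_0$'' with the unqualified ``economical'' of the Proposition at $s$; once the vacuity of the path-dependent clauses at the empty root path is noted, everything else is a direct substitution and a two-step chaining of results already established, which is why the corollary can be stated as following directly from the lemma.
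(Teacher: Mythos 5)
Your proof is correct and follows exactly the route the paper intends: the paper states the corollary as an immediate consequence of the preceding lemma, and the missing glue is precisely your observation that at the root node the path-dependent clauses of ``$\Box$-economical for $n_0$'' are vacuous (since $p(n_0)=[\,]$), so the lemma's conclusion combines with the Proposition on economical strategies to rule out any strategy satisfying $\AG{A^{b}}{\phi}$ at $s$. No issues.
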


\section{Lower Bound}
\label{sec:complexity}

In this section we show that the lower bound for the model-checking problem for
\atlr is EXPSPACE, by reducing from the reachability problem of Petri Nets. Note
that the exact complexity of the reachability problem of Petri Nets is still an 
open question (although it is known to be decidable and EXPSPACE-hard, \cite{Reisig.1985}).
The exact complexity of the \atlr model-checking problem is also unknown. Note that 
an upper bound for the \atlr model-checking problem would also be an upper bound
for the reachability problem of Petri Nets due to the reduction below. 
Even an Ackermannian upper bound for this problem is still open \cite{Leroux:13b}.
This suggests that determining an upper bound for the \atlr model-checking problem is also a hard problem. 

A Petri net is a tuple $N = (P,T,W,M)$ where:
\begin{itemize}
 \item $P$ is a finite set of places;
 \item $T$ is a finite set of transitions;
 \item $W : P \times T \cup T \times P \to \nat$ is a weighting function; and
 \item $M : P \to \nat$ is an initial marking.
\end{itemize}

A transition $t \in T$ is $M$-enabled iff $W(r,t) \leq M(r)$ for all $r \in P$. The
result of performing $t$ is a marking $M'$ where $M'(r) = M(r) - W(r,t) +
W(t,r)$, denoted as $M \PNtran{t} M'$.

A marking $M'$ is reachable from $M$ iff there exists a sequence
\[ M_0 \PNtran{t_1} M_1 \PNtran{t_2} \ldots \PNtran{t_n} M_n\]
where $M_0 = M$ and $n \geq 0$ such that $M_n \geq M'$ (where $M \geq M'$ iff
$M(r) \geq M'(r)$ for all $r \in P$). 
It is known that the lower bound for the complexity of this version of the reachability problem (with $M_n \geq M'$
rather than $M_n = M'$) is EXPSPACE \cite[p.73]{Reisig.1985}. 

We present a reduction from an instance of the reachability problem of Petri
Nets to an instance of the model-checking problem of \atlr.

Given a net $N = (P,T,W,M)$ and a marking $M'$, we construct a RB-CGS $I_{N,M'}
= (\{1\}, P, S, \{p\}, \pi, Act, d, c, \delta)$ where:
\begin{figure}[h]
\centering
\def\svgwidth{0.8\textwidth}
{
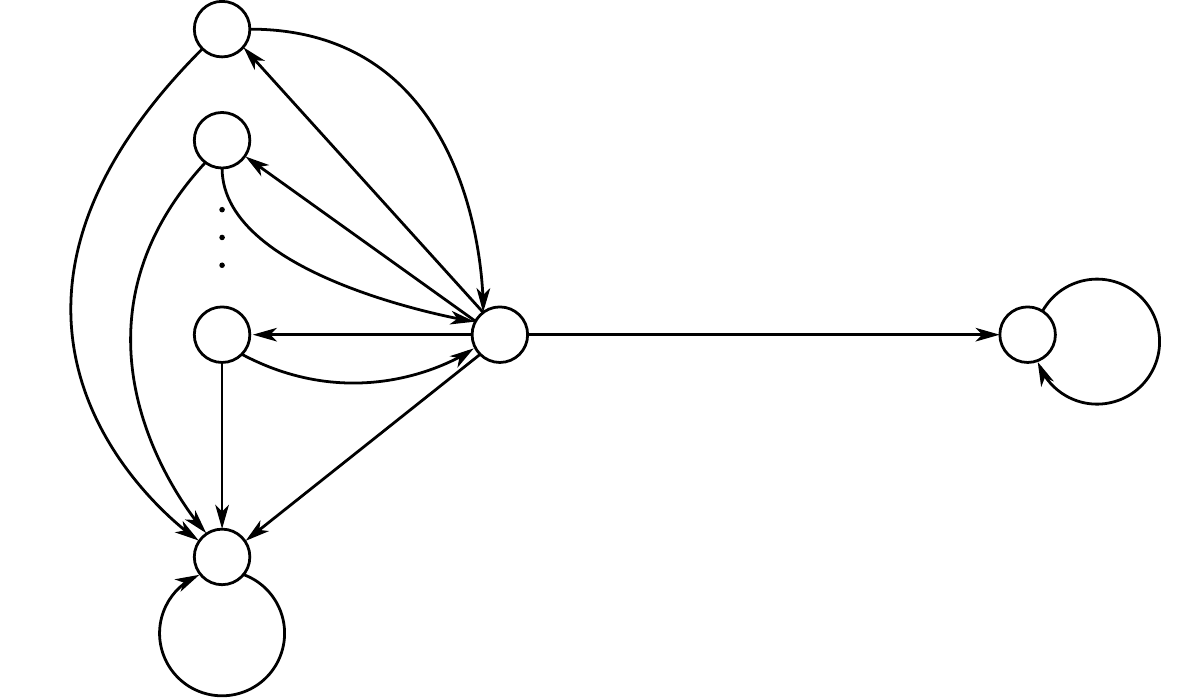}
\caption{Structure $I_{N,M'}$.}
\label{fig:reduction-structure}
\end{figure}
\begin{itemize}
 \item $S = \{ s_0 \} \cup T \cup \{s,e\}$;
 \item $\pi(p) = \{s\}$;
 \item $Act = \{idle,good\} \cup \{ t^-, t^+ \mid t \in T\}$;
 \item $d(s_0,1) = \{idle,good\} \cup \{ t^- \mid t \in T\}$;
 \item $d(s,1) = d(e,1) = \{idle\}$;
 \item $d(t,1) = \{ idle, t^+\}$ for all $t \in T$;
 \item $c(x,idle) = \bar 0$ for all $x\in S$; 
 \item $c(s_0,good) = M'$;
 \item $c_r(s_0,t^-) = W(r,t)$ for all $r \in P$;
 \item $c_r(s_0,t^+) = -W(r,i)$ for all $r \in P$;
 \item $\delta(x, idle) = e$ for $x \in S \setminus \{s\}$;
 \item $\delta(s, idle) = s$;
 \item $\delta(s_0, good) = s$;
 \item $\delta(s_0, t^-) = t$ for all $t \in T$;
 \item $\delta(t, t^+) = s_0$  for all $t \in T$.
\end{itemize}
The following is straightforward:
\begin{lemma}
 Given a net $N = (P,T,W,M)$ and a marking $M'$, $M'$ is reachable from $M$ iff
 $I_{N,M'},s_0 \models \AU{1^{M}}{\top}{p}$.
\end{lemma}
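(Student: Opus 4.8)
The plan is to establish an exact correspondence between firing sequences of $N$ and the computations generated by $M$-strategies of the single agent in $I_{N,M'}$ that reach the unique $p$-state $s$. The central bookkeeping device is the invariant that, along any prefix of a computation, if the accumulated cost of the actions played so far is $C$, then the \emph{current marking} is $M - C$; since $M$-consistency demands $C \le M$ at every prefix, this current marking is always a genuine (componentwise non-negative) marking. The gadget simulates the firing of a transition $t$ by the detour $s_0 \xrightarrow{t^-} t \xrightarrow{t^+} s_0$, where $t^-$ consumes the input vector $(W(r,t))_{r\in P}$ at $s_0$ and $t^+$ produces the output vector $(W(t,r))_{r\in P}$ at $t$; a full detour therefore changes the current marking by $-W(\cdot,t)+W(t,\cdot)$, exactly as firing $t$ does in $N$. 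The only way to reach $s$ is to play $good$ at $s_0$, whose cost $M'$ forces the current marking at that moment to dominate $M'$. Throughout I use that, with a single agent and a deterministic $\delta$, every strategy $F$ determines a unique computation, so the universal quantifier over $out(s_0,F)$ in the semantics of $\AU{1^M}{\top}{p}$ collapses to a condition on that one computation.

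For the ``reachable $\Rightarrow \models$'' direction, given a witnessing firing sequence $M = M_0 \PNtran{t_1} M_1 \cdots \PNtran{t_n} M_n$ with $M_n \ge M'$, I would let $F$ drive the system around the $n$ detours for $t_1,\dots,t_n$ in order, then play $good$, then $idle$ forever at $s$ (and $idle$ off this history, so that $F$ is a $b$-strategy). The realized computation reaches $s \models p$, so it only remains to verify $M$-consistency prefix by prefix: the accumulated cost immediately after $t_j^-$ equals $M - (M_{j-1}-W(\cdot,t_j))$, which is $\le M$ precisely because $t_j$ is enabled at $M_{j-1}$; the $t_j^+$ steps restore the current marking to $M_j \ge \bar 0$, and the final $good$ leaves current marking $M_n - M' \ge \bar 0$. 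Hence $F$ witnesses $I_{N,M'},s_0 \models \AU{1^M}{\top}{p}$.

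For the converse I would start from a witnessing $M$-strategy and analyse its unique computation $\lambda$. Since $p$ holds only at $s$, $\lambda$ must enter $s$; as the only edges into $s$ are $\delta(s_0,good)=s$ and the self-loop $\delta(s,idle)=s$, the first entry is via $good$ at $s_0$. The key structural point is that $\lambda$ must avoid the absorbing error state $e$ (once in $e$ every action returns to $e$ and $s$ is unreachable), so the agent never idles at $s_0$ nor at a transition state, which forces each visit to a state $t$ to be followed by $t^+$. This pins $\lambda$ to the shape $s_0(t_{i_1}^-t_{i_1}^+)\cdots(t_{i_n}^-t_{i_n}^+)s_0\,good\,s\,idle^\omega$. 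Reading off $t_{i_1},\dots,t_{i_n}$ and applying the invariant, the prefix bound just after each $t_{i_j}^-$ gives $M_{j-1}\ge W(\cdot,t_{i_j})$, i.e.\ enabledness, so the sequence fires legally; and the bound just after $good$ gives $M_n \ge M'$. Thus $M'$ is reachable from $M$.

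The only genuinely delicate point --- and the reason the construction is faithful --- is the role of the prefix-bounded cost semantics. Splitting each firing into a consume-then-produce detour through the intermediate state $t$ forces the $M$-consistency check to apply \emph{after} inputs are removed but \emph{before} outputs are added, and this is exactly what encodes Petri-net enabledness $M_{j-1}\ge W(\cdot,t_{i_j})$. A single action realising only the net effect $-W(\cdot,t)+W(t,\cdot)$ would merely constrain the resulting marking to be non-negative and would wrongly permit ``borrowing'' tokens the agent does not yet hold. I therefore expect the bulk of the write-up to be not hard but careful: verifying, detour by detour, the equivalence between ``no prefix cost exceeds $M$'' and ``every fired transition is enabled'', together with the routine argument that excluding $e$ leaves no admissible computation shapes other than the intended one.
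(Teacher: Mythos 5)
Your proposal is correct and follows essentially the same route as the paper's proof: the forward direction builds the strategy that traverses the $t_j^-,t_j^+$ detours in the order of the firing sequence and then plays $good$, checking the prefix-cost condition step by step, and the converse uses the fact that the single agent's strategy induces a unique computation which must avoid the absorbing state $e$ and hence has the form $s_0 t_{1} s_0 \cdots t_{n} s_0 s$, from which enabledness and $M_n \geq M'$ are read off. Your explicit ``current marking $= M - C$'' invariant is just a cleaner packaging of the same accumulated-cost calculations the paper carries out inline, and your closing remark about why the two-step detour (rather than a single net-effect action) is needed to capture enabledness is accurate.
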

 \begin{proof} $(\Rightarrow)$:
 Assume that $M'$ is reachable from $M$, then there exists a sequence
  \[ M_0 \PNtran{t_1} M_1 \PNtran{t_2} \ldots \PNtran{t_n} M_n\]
  where $M_0 = M$ and $n \geq 0$ such that $M_n \geq M'$.
 
  Then, we consider the following strategy $F$ for agent $1$:
  \begin{itemize}
    \item $F(s_0) = t_1^-$, note that $M \geq c(s_0,t_1^-)$, additionally $\delta(s_0,t_1^-)
          = t_1$;
    \item $F(s_0t_1) = t_1^+$, note that $M - (c(s_0,t_1^-) + c(t_1,t_1^+)) = M_1 \geq
          \bar 0$, additionally $\delta(t_1,t_1^+) = s_0$;
    \item $F(s_0t_1s_0) = t_2^-$, note that $c(s_0,t_2^-) \leq M_1$, $M - (c(s_0,t_1^-) +
          c(t_1,t_1^+) + c(s_0,t_2^-)) = M_1 - c(s_0,t_2^-) \geq \bar 0$, additionally $\delta(s_0,t_2^-) = t_2$;
    \item $F(s_0t_1s_0t_2) = t_2^+$, note that $M - (c(s_0,t_1^-) + c(t_1,t_1^+) +
          c(s_0,t_2^-) + c(t_2,t_2^+)) = M_2 \geq \bar 0$, additionally $\delta(t_2,t_2^+) =
          s_0$;
 
          \hspace{0.05cm}\vdots
 
    \item $F(s_0t_1s_0t_2\ldots s_0t_n) = t_n^+$, note that $M - (c(s_0,t_1^-) +
          c(t_1,t_1^+) + c(s_0,t_2^-) + c(t_2,t_2^+) + \ldots + c(s_0,t_n^-) + c(t_n,t_n^+)) = M_n
          \geq M' \geq \bar 0$, additionally $\delta(t_n,t_n^+) = s_0$;
    \item $F(s_0t_1s_0t_2\ldots s_0t_ns_0) = good$, note that $c(s_0,good) = M'$, $M
          - (c(s_0,t_1^-) + c(t_1,t_1^+) + c(s_0,t_2^-) + c(t_2,t_2^+) + \ldots + c(s_0,t_n^-) +
          c(t_n,t_n^+) + c(s_0,good)) = M_n - M' \geq \bar 0$, additionally $\delta(s_0,good) =
          s$;
  \end{itemize}
  Since $s \models p$, it is straightforward that $F$ is a strategy satisfying
  $\AU{1^M}{\top}{p}$ from $s_0$.
 
 $(\Leftarrow)$:
 Assume that $s_0 \models \AU{1^M}{\top}{p}$, then there
                       exists a strategy $F$ which satisfies $\AU{1^M}{\top}{p}$
                       from $s_0$.
 
   Since there is only one agent, $out(s_0,F)$ contains a single path $s_0\ldots s \dots$.
   Obviously, $e$ cannot be visited on the prefix $s_0 \ldots s$; hence $s_0 \ldots s$
   must have the form $s_0 t_1 s_0 t_2 \ldots t_n s_0 s$ for some $t_1,\ldots,t_n \in T$.
   Furthermore,
   \begin{itemize}
   \item $F(s_0) = t_1^-$, $c(s_0,t_1^-) \leq M$,
   \item $F(s_0t_1) = t_1^+$, $c(s_0,t_1^-) + c(t_1,t_1^+) \leq M$,
   
   \hspace{0.05cm}\vdots
   
   \item $F(s_0t_1\ldots t_{n-1}s_0) = t_n^-$, $c(s_0,t_1^-) + c(t_1,t_1^+) + \ldots + c(s_0,t_n^-) \leq M$,
   \item $F(s_0t_1\ldots t_{n-1}s_0t_n)= t_n^+$, $c(s_0,t_1^-) + c(t_1,t_1^+) + \ldots + c(s_0,t_n^-) + c(t_n,t_n^+) \leq M$, and
   \item $F(s_0t_1\ldots t_{n-1}s_0t_ns_0)= good$, $c(s_0,t_1^-) + c(t_1,t_1^+) + \ldots + c(s_0,t_n^-) + c(t_n,t_n^+) + M' \leq M$.
   \end{itemize}
   Therefore,
   \begin{itemize}
   \item $t_1$ is $M$-enabled, let $M_1 = M - (c(s_0,t_1^-) + c(t_1,t_1^+))$,
   \item $t_2$ is $M_1$-enabled, let $M_2 = M_1 - (c(s_0,t_2^-) + c(t_2,t_2^+)) = M - (c(s_0,t_1^-) + c(t_1,t_1^+) + c(s_0,t_2^-) + c(t_2,t_2^+))$,
   
   \hspace{0.05cm}\vdots
   
   \item $t_n$ is $M_{n-1}$-enabled, let $M_n = M_{n-1} - (c(s_0,t_n^-) + c(t_n,t_n^+)) = M - (c(s_0,t_1^-) + c(t_1,t_1^+) + \ldots + c(s_0,t_n^-) + c(t_n,t_n^+)) \geq M'$.
   \end{itemize}
   Hence, we have
   $M \PNtran{t_1} M_1 \PNtran{t_2} \ldots \PNtran{t_n} M_n$.
   As $M_n \geq M'$, $M'$ is reachable from $M$.
 \end{proof}

We have the following result:

\begin{corollary}
 The lower bound for the model-checking problem complexity of \atlr is EXPSPACE.
\end{corollary}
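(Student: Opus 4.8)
The plan is to derive the corollary immediately from the preceding reduction Lemma together with the known EXPSPACE-hardness of the Petri net reachability problem. The essential observation is that the map sending a net $N = (P,T,W,M)$ and a target marking $M'$ to the instance $(I_{N,M'}, s_0, \AU{1^M}{\top}{p})$ of the \atlr model-checking problem is a polynomial-time many-one reduction.

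First I would check that the construction of $I_{N,M'}$ is polynomial in the size of the input. The structure has a single agent, $r = |P|$ resources, $|T|+3$ states, and $2|T|+2$ actions; every cost vector is read off directly from the weighting function $W$, while the resource bound $M$ and the cost $c(s_0,good)=M'$ are copied verbatim from the input markings. In particular, if the numerical values of $W$, $M$, and $M'$ are given in binary, the reduction preserves this encoding and introduces no blow-up, so $|I_{N,M'}|$ and the size of the query formula $\AU{1^M}{\top}{p}$ are bounded by a polynomial in the size of $(N,M')$.

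Second, the correctness of the reduction is exactly the content of the preceding Lemma: $M'$ is reachable from $M$ in $N$ if and only if $I_{N,M'}, s_0 \models \AU{1^M}{\top}{p}$. Consequently, any procedure deciding the \atlr model-checking problem also decides Petri net reachability with only polynomial overhead, and since the reachability problem in its $M_n \geq M'$ form is EXPSPACE-hard \cite[p.73]{Reisig.1985}, EXPSPACE-hardness transfers to \atlr model-checking.

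The one point that genuinely needs care---rather than a real obstacle---is ensuring that the reduction respects the same input encoding under which the cited hardness result is stated, specifically that the resource bounds and action costs are measured under binary encoding; otherwise the reduction might fail to be polynomial and the hardness would not transfer. Provided the encodings match, the transfer is immediate and the corollary follows directly.
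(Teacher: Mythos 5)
Your proposal is correct and follows the same route as the paper: the paper derives this corollary immediately from the preceding reduction lemma together with the cited EXPSPACE-hardness of the $M_n \geq M'$ variant of Petri net reachability. Your additional explicit check that the construction of $I_{N,M'}$ is polynomial-size (and encoding-preserving) is a detail the paper leaves implicit, but it does not change the argument.
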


\section{Feasible cases}
\label{sec:feasible}

In the previous section, we have seen that the model-checking problem for \atlr is EXPSPACE-hard.
There are, however, several tractable special cases of the model-checking problem. Here we consider two of them:
model-checking \atlr with a single resource, and model-checking \RBATL (\atlr with only consumption of resources).

\subsection{Model-checking \atlr with a single resource}

For the case when $|Res| = 1$, the problem
whether $M, s \models \phi_0$ is decidable in PSPACE. 

\begin{theorem}
The upper bound for the model-checking problem complexity of \atlr with a single resource is PSPACE.
\end{theorem}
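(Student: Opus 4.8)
The plan is to show that Algorithm~\ref{alg:atlr-label}, specialised to $|Res|=1$, runs in polynomial space. The outer loop over $Sub(\phi_0)$ processes polynomially many subformulas. For propositional, boolean, and infinite-resource (i.e.\ $\bar\infty$) modalities we invoke the standard ATL procedure, which runs in polynomial time and stores each resulting labelling $[\psi]_M \subseteq S$ in polynomial space; these stored labellings are exactly what the first if-statements of \textsc{until-strategy} and \textsc{box-strategy} consult (the infinite version of $\phi'$ comes earlier in $Sub(\phi_0)$, so it is already available as a lookup table). The only nonstandard work is the depth-first and-or search launched from $node_0(s,b)$ for each $s \in S$. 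Since space can be reused across these $|S|$ launches and across subformulas, it suffices to bound the space used by a single search.

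The observation that makes the single-resource case tractable is that resource availabilities are now \emph{totally} ordered, so every repeated state on a search path immediately terminates that branch. Concretely, if $s(n')=s(n)$ for some ancestor $n'$ of $n$, then either $e(n')\ge e(n)$ or $e(n')<e(n)$. In \textsc{until-strategy} the former fires the second if-statement and returns false, whereas the latter sets $e(n)$ to $\infty$, so that $e(n)=\bar\infty$ and true is returned; in \textsc{box-strategy}, $e(n')>e(n)$ returns false and $e(n')\le e(n)$ returns true. In either procedure a recursive call is made from $n$ only when $s(n)$ differs from the state of every ancestor. Hence the states along any root-to-node path are pairwise distinct, and the recursion depth is bounded by $|S|$. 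This is the crucial collapse: for a single resource the generally Ackermannian search depth is forced down to $|S|$.

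I would then bound the bit-size of the data kept at each recursion level. Along a path of length at most $|S|$ the value $e(n)$ stays in the range $[0,\ b + |S|\cdot c_{\max}]$, where $c_{\max}$ is the largest absolute action cost in $M$ and non-negativity follows from the guard $cost(s(n),\sigma)\le e(n)$; thus $e(n)$ needs only $O(\log b + \log|S| + \log c_{\max})$ bits, which is polynomial in the input even when $b$ is written in binary. A single recursion frame records the current state, its resource value, the joint action $\sigma\in D_A(s(n))$ currently being tried, and the index of the outcome $s'\in out(s(n),\sigma)$ currently being verified. Although $|D_A(s(n))|$ can be exponential in $|A|$, each $\sigma$ is a tuple of $|A|$ component actions and can be enumerated one at a time with a mixed-radix counter, so a frame has polynomial size; moreover the path $p(n)$ used in the loop tests is precisely the sequence of states on the recursion stack and needs no separate storage.

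Putting these together, the depth-first traversal maintains a stack of at most $|S|$ frames, each of polynomial size, for polynomial total space; the and-or structure is handled, as in the two procedures, by remembering at each frame the current action (the or-branch) together with the current outcome and the conjunction accumulated so far (the and-branch). Evaluating the root therefore uses polynomial space, and reusing this space across all starting states and subformulas shows that the full algorithm is in PSPACE. The step I expect to require the most care is the termination-by-repetition argument of the second paragraph: one must verify that the totality of the order on a single resource makes \emph{both} procedures cut off every branch at the first repeated state, so that the depth bound $|S|$ genuinely holds and no hidden exponential blow-up survives.
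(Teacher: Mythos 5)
Your proposal is correct and follows essentially the same route as the paper's proof: the only costly cases are the \textsc{until-strategy} and \textsc{box-strategy} searches, and with a single resource the availabilities on a path are totally ordered, so inspection of the two procedures shows every branch returns at the first repeated state, bounding the recursion depth by $|S|$ and yielding polynomial space. Your additional bookkeeping (bit-size of $e(n)$, enumeration of joint actions, reuse of the stack as $p(n)$) just makes explicit details the paper leaves implicit.
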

\begin{proof}
All the cases in Algorithm~\ref{alg:atlr-label} apart from $\AU{A^b}{\phi}{\psi}$ and $\AG{A^b}{\phi}$
can be computed in time polynomial in $|M|$ and $|\phi|$. The cases for $\AU{A^b}{\phi}{\psi}$ and $\AG{A^b}{\phi}$
are more computationally expensive. They involve calling the \textsc{until-strategy} and the \textsc{box-strategy} procedures,
respectively, for every state in $S$. The procedures explore the model in a depth-first manner, one path at a time.
Their space requirement corresponds to the maximal length of such a path. Note that unlike depth-first search,
\textsc{until-strategy} and \textsc{box-strategy} in the general case (multiple resources) do not terminate when 
they encounter a loop, that is a path containing two nodes with the same state: $\ldots, n_1,\ldots, n_2$ where $s(n_1)
= s(n_2)$, since in the general case $e(n_1)$ and $e(n_2)$ may be incomparable. However, for a single resource,
it will always be the case that either $e(n_1)=e(n_2)$, or $e(n_1) < e(n_2)$, 
or $e(n_1) > e(n_2)$.  Inspection 
of \textsc{until-strategy} and \textsc{box-strategy} shows that they will return in all of these cases.
Hence, we never need to keep a stack of more than $|S|$ nodes, which requires polynomial space. 
\end{proof}
The result above can be generalised to the case when $|Res|>1$, but the formula
$\phi_0$ is of a special form, where at most one resource is non-$\infty$ in
each bound. To be precise, $\phi_0$ is such that in each
resource bound $b$ occurring in $\phi_0$, for at most one resource 
$i$, $b_i \not = \emptyset$. 

\subsection{Model-checking \RBATL}

In this section, we briefly revisit the problem of model-checking \RBATL (the
logic where only consumption of resources is considered).
The syntax of \RBATL is the same
as the syntax of \atlr, and the models are the class of RB-CGS with no
production of resource (all action costs are non-negative). 
We will refer to such models as RB-CGS${}^-$.
A symbolic model-checking algorithm
for that logic was introduced in \cite{Alechina//:10a} (without infinite resource bounds). Here we re-state the algorithm and discuss upper and lower bounds on the complexity of \RBATL model-checking.
  
The algorithm uses an abbreviation $split(b)$ that takes a resource 
bound $b$ and returns the set of all pairs $(d,d') \in \nat_{\infty} \times 
\nat_{\infty}$ such that:
\begin{enumerate}
\item $d + d' = b$,
\item $d_i = d'_i = \infty$ for all $i \in \{ 1, \ldots, r\}$ such that $b_i = \infty$, and
\item $d$ has at least one non-0 value.
\end{enumerate} 
We assume that $split(b)$ is partially ordered in increasing order of the 
second component $d'$ (so that if $d'_1 < d'_2$, then $(d_1,d'_1)$ precedes 
$(d_2,d'_2)$).

The algorithm is similar to the symbolic model-checking algorithm for \atl\ given in
\cite{Alur//:02a}. The main differences from the algorithm for \atl\
is the addition of costs of actions, and, instead of
working with a straightforward set of subformulas $Sub(\phi_0)$ of a given formula $\phi_0$,
we work with an extended set of subformulas $Sub^+(\phi_0)$. $Sub^+(\phi_0)$ includes
$Sub(\phi_0)$, and in addition:
\begin{itemize}
\item if $\atlbox{A^b}\phi \in Sub(\phi_0)$, then
$\atlbox{A^{d'}}\phi \in Sub^+(\phi_0)$ for all $d'$ such that
$(d,d') \in split(b)$;
\item if $\atlu{A^b}{\phi}{\psi} \in Sub(\phi_0)$, then
$\atlu{A^{d'}}{\phi}{\psi} \in Sub^+(\phi_0)$ for all $d'$ such that
$(d,d') \in split(b)$.
\end{itemize}
We assume that $Sub^+(\phi_0)$ is ordered in the increasing order of
complexity and of resource bounds (so \eg for $b \leq b'$, $\atlbox{A^b}\psi$
precedes $\atlbox{A^{b'}}\psi$).

\begin{theorem}
Given an RB-CGS${}^-$ $M=(Agt, Res, S, \Pi, \pi, Act, d, c, \delta)$ and 
an \RBATL
formula $\phi_0$, there is an
algorithm which returns the set of states $[\phi_0]_M$ satisfying
$\phi_0$: $[\phi_0]_M = \{s\ |\ M,s \models \phi_0\}$, which runs in time
$O(|\phi_0|^r \times m)$ where $r$ is $|Res|$ and $m$ is the number of transitions in $M$,
assuming that numbers in bounds are encoded in unary.
\end{theorem}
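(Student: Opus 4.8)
The plan is to verify that the algorithm is a cost-aware, bottom-up refinement of the classical symbolic \atl\ procedure of \cite{Alur//:02a}, and to argue correctness and running time separately. Correctness I would establish by induction over $Sub^+(\phi_0)$, taken in the stipulated order (increasing complexity, and, at equal complexity, increasing resource bound). The propositional, boolean and $\bigcirc$ cases are identical to \atl, the only change being that $Pre$ now admits a joint action $\sigma$ at $s$ only when $cost(s,\sigma) \le b$; their correctness is immediate from the truth definition. The heart of the argument is the $\mathcal U$ and $\Box$ cases, where I would show that the $split$-based recursion exactly mirrors the semantics.

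For $\atlu{A^b}{\phi}{\psi}$ the key lemma I would prove is a one-step characterisation: $s \models \atlu{A^b}{\phi}{\psi}$ iff $s \models \psi$, or $s\models\phi$ and $A$ has an affordable joint action $\sigma$ (cost $d'' \le b$) all of whose outcomes already satisfy $\atlu{A^{b-d''}}{\phi}{\psi}$. The forward direction uses the decisive feature of RB-CGS${}^-$: costs are non-negative, so along any path the available budget is non-increasing, and the first move of a witnessing $b$-strategy consumes exactly some $d'' \le b$, leaving a residual strategy that is a genuine $(b-d'')$-strategy; the converse simply prepends $\sigma$ to residual witnesses. Since $split(b)$ enumerates precisely the residual bounds $d' = b - d''$ arising from positive-cost first moves, and since every such $d''$ has a positive component, $d'$ is strictly smaller than $b$ and hence already labelled when $b$ is processed. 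Zero-cost first moves keep the budget at $b$ and are absorbed by a least fixpoint over the cost-$\bar{0}$ transitions, seeded by $[\psi]_M$ together with the one-step predecessors supplied by the $split$ summands. The $\Box$ case I would treat dually, as a greatest fixpoint; here the witness is infinite, so I would invoke the economical-strategy proposition to show that a bounded infinite $\phi$-preserving strategy exists iff every branch can reach a non-resource-decreasing revisited state, after which the $idle$ action maintains $\phi$ forever at zero cost.

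For the running time I would count $|Sub^+(\phi_0)|$ and the per-formula work. Each modal subformula $\atlu{A^b}{\phi}{\psi}$ or $\atlbox{A^b}\phi$ of $\phi_0$ contributes at most $\prod_{i=1}^{r}(b_i+1) \le (\max_i b_i + 1)^r$ variants to $Sub^+(\phi_0)$. Under unary encoding the unary sizes of all bounds sum to at most $|\phi_0|$, so $\sum_{\varphi}(\max_i b_i +1)^r$, with $\varphi$ ranging over modal subformulas and subject to this constraint, is maximised (by convexity) by a single modal subformula carrying the entire budget, giving $|Sub^+(\phi_0)| = O(|\phi_0|^{r})$. Each formula in $Sub^+(\phi_0)$ is then computed in time $O(m)$: the positive-cost $split$ summands are gathered by one scan of the $m$ transitions into already-computed sets, and the remaining zero-cost fixpoint is a single \atl\ least/greatest fixpoint, computable in amortised $O(m)$ by the counter technique of \cite{Alur//:02a} (cost-vector comparisons contribute only a constant factor for fixed $r$). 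Multiplying gives the claimed $O(|\phi_0|^{r}\times m)$ bound.

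The step I expect to be the main obstacle is the exactness of the $split$ recursion for $\Box$. Unlike $\mathcal U$, which admits a finite witness of length at most $|S|$, the $\Box$ modality requires an infinite strategy, so I must show that terminating the search at a non-decreasing revisited state neither over- nor under-approximates the semantics: the non-decreasing loop must be genuinely repeatable within budget (guaranteed because costs are non-negative, so such a loop consumes nothing on the relevant resources and can be followed by $idle$ forever), and conversely any bounded infinite witness must, by finiteness of $S$ and monotonicity of the budget, reach such a loop. Pinning down this equivalence, together with checking that the amortised linear-time fixpoint of \cite{Alur//:02a} survives the addition of per-action cost tests, is where the real work lies.
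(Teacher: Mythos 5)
Your proposal reconstructs essentially the algorithm the paper gives: the same $split$-based decomposition into a positive-cost first move (which strictly decreases the bound and is therefore already labelled) plus a zero-cost fixpoint over $Pre(A,\cdot,\projinf{\zerob}{b})$, the same cost-restricted $Pre$, and the same $O(|\phi_0|^r)$ count of $Sub^+(\phi_0)$ under unary encoding — indeed you supply more of the correctness argument (the one-step characterisation lemma exploiting non-negative costs) than the paper, which only presents the algorithm and the complexity count and defers to the ATL argument. One small slip worth fixing: in the $\Box$ case a non-decreasing (here, zero-cost) revisited loop is sustained by \emph{repeating that loop} forever, not by switching to $idle$, since $\delta(s,\langle idle,\dots,idle\rangle)$ need not lead to a $\phi$-state; the idle-forever continuation is the right device only for the $\mathcal{U}$ case after $\psi$ is reached.
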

\begin{proof}
  Let $\projinf{\zerob}{b}$ be a vector where the $i$th component is
$\infty$ if the $i$th component of $b$ is $\infty$, and $0$ otherwise.  
Let $Pre$ as before be a
  function which given a coalition $A$, a set $\rho \subseteq S$ and a
  bound $b$ returns a set of states $s$ in which $A$ has a move
  $\sigma_A$ with cost $cost(s,\sigma_A) \leq b$ such that
  $out(s,\sigma_A) \subseteq \rho$. 
  Consider Algorithm \ref{alg:rbatl-label}.
\begin{algorithm}
\caption{Model-checking \RBATL}
\label{alg:rbatl-label}
\begin{algorithmic}
\Function{rb-atl-label}{$M, \phi_0$}
\For{$\phi' \in Sub^+(\phi)$}
\Case{$\phi' = p,\ \neg \phi,\ \phi \wedge \psi$}
\State standard, see \cite{Alur//:02a}
\EndCase
\Case{$\phi' = \AO{A^b}{\phi}$} \ \ \ \ $[\phi']_M \gets Pre(A, [\phi]_M, b)$
\EndCase
\Case{$\phi' = \AU{A^b}{\phi}{\psi}$} where $b$ is such that
for all $i$, $b_i \in \{0, \infty\}$:
\State $\quad \rho \gets [false]_M;  \tau \gets [\psi]_M$;
\State $\quad \mathbf{while}\  \tau \not \subseteq \rho \ \mathbf{do}$
\State $\quad \quad \rho \gets \rho \cup \tau; \tau \gets Pre(A,\rho,b) \cap [\phi]_M$ 
\State $\quad \mathbf{od}$ 
\State $\quad [\phi']_M \gets \rho$
\EndCase
 \Case{$\phi' = \AU{A^b}{\phi}{\psi}$} where $b$ is such that for
 some $i$, $b_i \not \in \{0, \infty\}$:
 \State $\quad \rho \gets [false]_M;  \tau \gets [false]_M$
 \State $\quad \mathbf{foreach}\  d' \in \{d' \mid (d,d') \in split(b)\} \ \mathbf{do}$
 \State $\quad \quad \tau \gets Pre(A,[\atlu{A^{d'}}{\phi}{\psi}]_M,d) \cap [\phi]_M$
 \State $\quad \quad \mathbf{while}\  \tau \not \subseteq \rho \ \mathbf{do}$
 \State $\quad \quad \quad \rho \gets \rho \cup \tau; \tau \gets Pre(A,\rho,\projinf{\zerob}{b}) \cap [\phi]_M$
\State $\quad \quad \mathbf{od}$
\State $\quad \mathbf{od}$
\State $\quad [\phi']_M \gets \rho$
\EndCase
\Case{$\phi' = \AG{A^b}{\phi}$} where  $b$ is such that for all $i$,
$b_i \in \{0, \infty\}$:
\State $\quad \rho \gets [true]_M;  \tau \gets  [\phi]_M$ 
\State $\quad \mathbf{while}\  \rho \not \subseteq \tau\ \mathbf{do}$
\State $\quad \quad \rho \gets \tau; \tau \gets Pre(A,\rho,b) \cap [\phi]_M$
\State $\quad \mathbf{od}$   
\State $\quad [\phi']_M \gets \rho$
\EndCase
\Case{$\phi' = \AG{A^b}{\phi}$} where $b$ is such that for some $i$, $b_i \not \in \{0, \infty\}$:
\State $\quad \rho \gets [false]_M;  \tau \gets [false]_M$ 
\State $\quad \mathbf{foreach}\ d' \in \{d' \mid (d,d') \in split(b)\} \ \mathbf{do}$
\State $\quad \quad \tau \gets Pre(A,[\atlbox{A^{d'}}\phi]_M,d) \cap [\phi]_M$
\State $\quad \quad \mathbf{while}\  \tau \not \subseteq \rho \ \mathbf{do}$
\State $\quad \quad \quad \rho \gets \rho \cup \tau; \tau \gets Pre(A,\rho,\projinf{\zerob}{b})
\cap [\phi]_M$
\State $\quad \quad \mathbf{od}$
\State $\quad \mathbf{od}$
\State $\quad [\phi']_M \gets \rho$
\EndCase
\EndFor
\State $\mathbf{return\ } [\phi_0]_M$
\EndFunction
\end{algorithmic}
\end{algorithm}
Note that $|split(b)|$ is $O(\beta^r)$, where $\beta$ is the largest component occurring in $b$. If $\phi_0$ contains operators with
bounds containing components other than $0$ and $\infty$, $|Sub^+(\phi_0)|$ is $O(|\phi_0| \times |\beta|^r)$, or $O(|\phi_0| \times |\phi_0|^r)$
provided that vector components are encoded in unary.
This moves the complexity from $O(|\phi_0|\times m)$ as in \cite{Alur//:02a} to 
$O(|\phi_0|^r\times m)$, where $m$ is the number of transitions in $M$. See \cite{Alur//:02a} for the argument.
\end{proof}


\section{Comparison with \ral}
\label{sec:ral}

In this section, we compare \atlr with the logics introduced in 
\cite{Bulling/Farwer:10a}, in particular with the logic \pfral. 
In \cite{Bulling/Farwer:10a}, it is shown that the model-checking
problem for \pfral with infinite semantics is undecidable. The decidability
of the model-checking problem for \pfral with finite semantics is
stated in \cite{Bulling/Farwer:10a} as an open problem. Here we show that 
model-checking for \pfral with finite semantics is decidable.


\subsection{The logic \pfral}

The logical language \pfral is a proponent-restricted and resource-flat version
of \ral without the release operator (for a complete description of \ral
and its variants, we refer the reader to \cite{Bulling/Farwer:10a} and its technical
report version \cite{Bulling/Farwer:10b}; in fact the name \pfral comes from
\cite{Bulling/Farwer:10b}).



The syntax of \pfral is defined using \emph{endowment functions}
(or just endowments) rather than resource bounds.
An endowment is a function $\eta : \Agt \times \Res \to
\nat \cup \{\infty\}$. 
We will sometimes write $\eta_a(r)$ instead of $\eta(a,r)$. Let $\En$ denote the
set of all possible endowments.

Formulas of \pfral are defined as follows:
\[
\phi,\psi ::= p \mid \neg \phi \mid \phi \land \psi \mid 
            \ralAO A \eta \phi \mid
            \ralAG A \eta \phi \mid
            \ralAU A \eta \phi \psi 
\]
where $p \in \Pi$, $A \subseteq \Agt$, $A \not= \emptyset$ and  $\eta \in \En$.


Formulas of \pfral are interpreted on resource-bounded models (RBM) which are
CGS structures extended with resources except that transitions are in general not total,
i.e., at a state, an agent is not required to have any available actions. This
means that there may be a state in an RBM model which does not have
any successor. An RBM is defined as follows:

\begin{definition}
An RBM is a tuple $M = (\Agt, Q, \Pi, \pi, \Act, d, o, \Res, t)$ where $\Agt$, $\Act$,
$Q$, $\Pi$, $\Res$, and $o$ are defined as $\Agt$, $\Act$ except that $\idle$ is not required to be in $\Act$, $S$, $\Pi$, $\Res$, and
$\delta$, respectively, in Definition~\ref{def:rbcgs} and:
\begin{itemize}
 \item $\pi : Q \to \wp(\Pi)$ specifies propositional valuation;
 \item $d  : \Agt \times Q \to \wp(\Act)$ specifies available actions;
 \item $t : \Act \times \Res \to \integer$ for an action $\alpha \in \Act$
       and a resource $r \in \Res$ specifies the consumption of $r$ by $\alpha$ if
       $t(\alpha,r) \leq 0$ or the production of $r$ by $\alpha$ if $t(\alpha,r) > 0$.
       Let $\cons(\alpha,r) = -\min\{0,t(\alpha,r)\}$ and $\produce(\alpha,r) =
       \max\{0,t(\alpha,r)\}$.
\end{itemize}
\end{definition}

Resource availability is modelled by resource-quantity mappings (rqm) $\rho
: \Res \to \integer \cup \{\infty\}$.

Given a RBM $M$, $Q^{\leq \omega} = Q^\omega \cup Q^+$ denotes the set of all
finite and infinite sequences over $Q$. A sequence $\lambda \in Q^{\leq \omega}$
is a \emph{path} in $M$ iff there exist transitions in $M$ between adjacent states
in $\lambda$. A finite or infinite sequence $\lambda =
(q_0,\eta^0),(q_1,\eta^1),\ldots$ over $Q \times \En$ is a \emph{resource-extended
path} (r-path) in $M$ iff $q_0,q_1,\ldots$ is a path in $M$.

Given a coalition $A$, an endowment $\eta$ and an rqm $\rho$,
an $(A,\eta)$-share for $\rho$ is a function $\sh : A \times \Res \to \nat$ where:
\begin{itemize}
 \item $\forall r \in \Res: \rho(r) > 0 \Rightarrow \sum_{a \in A} \sh(a,r) = \rho(r)$;
 \item $\forall a \in A, r \in \Res: \eta_a(r) \geq \sh(a,r)$.
\end{itemize}
Let $\Share(A,\eta,\rho)$ denote the set of all possible $(A,\eta)$-shares for
$\rho$. It is straightforward that $\Share(A,\eta,\rho) = \emptyset$ if
$\sum_{a\in A}\eta_a(r) < \rho(r)$, i.e., resource endowment for agents in $A$
is not enough to create a share.

Given an endowment $\eta$ and a strategy $F_A$ for a coalition $A$, a maximal
r-path $\lambda = (q_0,\eta^0),(q_1,\eta^1),\ldots$ of $M$ is an
$(\eta,F_A)$-path starting from a state $q_0$ iff:
\begin{itemize}
 \item $\eta^0 = \eta$;
 \item $\forall a \in A, r \in \Res, i \geq 0, i < |\lambda|:  \eta^i_a(r) \geq 0$;
 \item $\forall i \geq 0, i < |\lambda|-1: \exists \sigma \in D(q_i)$ such that:
 \begin{itemize}
  \item $F_A(q_0\ldots q_i) = \sigma_A$;
  \item $o(q_i,\sigma) = q_{i+1}$;
  \item $\exists \sh_i \in \Share(A,\eta,\rho): \forall a \in A, r \in \Res: 
        \eta^{i+1}_a(r) = \eta^i_a(r) + \produce(\sigma_a,r) - \sh_i(a,r)$
        where $\rho$ is such that $\rho(r) =  \sum_{a\in A} -\cons(\sigma_a,r)$.
 \end{itemize}
\end{itemize}
Notice that as defined in \cite{Bulling/Farwer:10a}, a path is maximal if it can be extended with sufficient available resources, then it must be extended.
Then, $out(q_0,\eta,F_A)$ denotes the set of all $(\eta,F_A)$-paths starting
from a state $q_0$. As shown by \cite{Bulling/Farwer:10a}, $out(q_0,\eta,F_A)$
is never empty. In the worst case, $out(q_0,\eta,F_A)$ contains a single r-path
$(q_0,\eta)$.

Given an RBM $M$ and a state $q$, the truth of \pfral formulas is defined
inductively as follows (we omit the propositional cases):
\begin{itemize}
 \item $M,q \models_{ral} \ralAO A \eta \phi$ iff 
       $\exists F_A: \forall \lambda \in out(q,\eta,F_A):
                     |\lambda| \geq 2 \land
                     M, \lambda[1] \models_{ral} \phi$;
 \item $M,q \models_{ral} \ralAG A \eta \phi$ iff 
       $\exists F_A: \forall \lambda \in out(q,\eta,F_A):
                     |\lambda| = \infty \land
                     \forall i \geq 0: M, \lambda[i] \models_{ral} \phi$;
 \item $M,q \models_{ral} \ralAU A \eta \phi \psi$ iff 
       $\exists F_A: \forall \lambda \in out(q,\eta,F_A):
                     \exists i \geq 0, i < |\lambda|:
                     M, \lambda[i] \models_{ral} \psi \land
                     \forall j \geq 0, j < i: M, \lambda[j] \models_{ral} \phi$.
\end{itemize}

The definition above gives \emph{finite semantics} of \pfral. 
\emph{Infinite semantics} is obtained if the condition ``for all
$\lambda \in out(q,\eta,F_A)$'' above is replaced with ``for all infinite 
$\lambda \in out(q, \eta, F_A)$''.

\begin{theorem}\cite{Bulling/Farwer:10a,Bulling/Farwer:10b}
The model-checking problem for \pfral with infinite semantics is undecidable.
\end{theorem}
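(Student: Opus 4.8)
The plan is to reduce from the halting problem of a two-counter (Minsky) machine, which is undecidable. Given a deterministic two-counter machine $\mathcal{M}$ with control locations $L$ and counters $c_1,c_2$, I would build an RBM $M_{\mathcal{M}}$ and a \pfral formula $\phi$ so that $\mathcal{M}$ halts from its initial configuration iff $M_{\mathcal{M}},q_0 \models_{ral} \phi$ under infinite semantics. The states of $M_{\mathcal{M}}$ encode the control locations and the two resources of $M_{\mathcal{M}}$ encode the two counter values: an increment is realised by an action that \emph{produces} one unit of the relevant resource, and a decrement by an action that \emph{consumes} one unit. Since the semantics requires $\eta^i_a(r)\ge 0$ along every r-path, a consuming action is in effect enabled only when its counter is strictly positive, so the ``decrement'' half of a Minsky instruction is simulated for free by the non-negativity constraint, with no extra machinery.

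The crux — and the step I expect to be the main obstacle — is a faithful \emph{zero-test}, ``if $c_i=0$ goto $\ell$ else decrement and goto $\ell'$''. Increments together with guarded decrements give only the power of a vector addition system, whose reachability and termination are decidable; this is precisely why the authors' own logic \atlr, and (as the footnote observes) \pfral under finite semantics, are decidable. Undecidability must therefore come from detecting a counter being \emph{exactly} zero, and resources alone cannot do this, since a positive counter only ever \emph{enables} more transitions and never blocks one. The plan is to extract the zero-test from the game structure, by adding at least one opponent agent and exploiting that $\llangle A \rrangle$ quantifies over \emph{all} opponent strategies. At a zero-test the proponent first commits to a guess. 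A false ``positive'' guess is punished automatically: the forced decrement cannot be executed when $c_i=0$, so the proponent gets stuck and no infinite r-path through that branch exists. A false ``zero'' guess is punished by the opponent, who may challenge the claim and steer the play into a verification gadget in which the proponent can survive along an infinite, $\phi$-respecting r-path only when the claimed counter really was empty; the interplay between the opponent's branch choice and the resource constraint drives any cheating proponent off every accepting play.

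The reason infinite semantics is essential — and the exact point at which this argument would fail for \atlr — is that the verification gadget bites only because every accepted play must be a genuinely \emph{infinite} r-path. In \atlr each agent always has the zero-cost $\idle$ action, so a finite play that has already cheated can be padded to an infinite one for free; this is the escape described in the footnote. Under infinite \pfral semantics there is no $\idle$, so the proponent cannot stall, and maintaining an infinite computation forces it to keep executing real instructions, thereby exposing any cheat. With these gadgets in place I would take $\phi$ to assert, via an until modality, that the proponent can drive the simulation against all opponents along infinite r-paths to the encoding of the halt location (adding a self-loop action at the halt state so that this play is genuinely infinite). One then checks by induction on the length of $\mathcal{M}$'s computation that honest play is both available and the proponent's only winning option, so that $M_{\mathcal{M}},q_0 \models_{ral} \phi$ holds exactly when $\mathcal{M}$ halts; undecidability of halting transfers to \pfral model checking under infinite semantics.
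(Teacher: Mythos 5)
First, a remark on provenance: the paper does not prove this theorem itself; it is imported wholesale from the cited Bulling--Farwer papers, and the proof there is indeed a reduction from two-counter (Minsky) machines. So your overall route is the intended one. However, your sketch has a genuine gap at precisely the point you identify as the crux, and the one mechanism you do describe concretely has its polarity inverted. Under infinite semantics the truth conditions quantify only over the \emph{infinite} r-paths in $out(q,\eta,F_A)$; a play that halts because a required consumption cannot be covered by any share is a \emph{finite} r-path and therefore imposes no obligation whatsoever (recall that $out(q,\eta,F_A)$ may in the extreme consist of the single r-path $(q,\eta)$, which is simply ignored). Consequently your claim that ``a false positive guess is punished automatically: the forced decrement cannot be executed, so the proponent gets stuck and no infinite r-path through that branch exists'' is exactly backwards: getting stuck is a \emph{vacuous win} for the proponent, not a punishment. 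A proponent could answer every zero-test with ``nonzero'', deadlock whenever the counter happens to be empty, and thereby satisfy an until-formula without ever simulating the machine correctly, which breaks the ``only if'' direction of your reduction.

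The complementary half of the gadget --- punishing a false ``zero'' claim --- is only asserted to exist. You correctly observe that a surplus of resource can only \emph{enable} transitions, never block them, so resources alone cannot falsify an invariant; but you then postulate ``a verification gadget in which the proponent can survive along an infinite, $\phi$-respecting r-path only when the claimed counter really was empty'' without constructing it. The construction that actually works (and that the cited proof relies on) uses deadlock-as-vacuity in the \emph{opposite} direction from the one you propose: on a ``zero'' claim the opponent may divert the play into a check branch whose only continuation is an action consuming one unit of the tested resource and leading into an infinite play that violates the target formula; if the counter is genuinely zero this branch deadlocks and is discarded by the infinite semantics, whereas if the counter is positive it yields an infinite falsifying path. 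This is also the precise reason finite semantics becomes decidable (deadlocked plays are no longer discarded), which your sketch attributes instead to the absence of $\idle$. Until the zero-test gadget is built explicitly and the role of deadlocks is corrected in both branches, the reduction does not go through.
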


The problem whether model-checking for \pfral with finite semantics is 
decidable is left open in \cite{Bulling/Farwer:10a}. Below we show that it is
in fact decidable by adapting the model-checking algorithm for \atlr. Before
we do this, we investigate the differences between \pfral and \atlr 
in more detail. In particular we consider whether we can obtain
a logic equivalent to \pfral by simply removing
the restriction that agents always have at least the $idle$ action available
from the semantics of \atlr.

\subsection{The logic \atlrnt}

As models for \pfral are not total in general, we facilitate a comparison 
with \atlr by
introducing a variant \atlrnt of \atlr where we remove the requirement of total
transitions in Definition \ref{def:rbcgs}. In other words, \atlrnt has the same
syntax as \atlr yet a broader class, namely RB-CGS-nt, of models which do not
need to be total. In particular, in Definition \ref{def:rbcgs}, $\Act$ does not
need to include $\idle$ and $d: S \times \Agt \to \wp(\Act)$ may be mapped to an
empty set or to a set not containing $\idle$.

Obviously, any RB-CGS model is an RB-CGS-nt but not vice versa. Since RB-CGS-nt
models are not total in general, at a state $s$, the set $D_A(s)$ of possible
joint actions by a coalition $A$ and the set of possible outcomes of a joint
action $\sigma_A \in D_A(s)$ may be empty.

Given a RB-CGS-nt model $M$, a strategy $F_A$ for a coalition $A \subseteq
\Agt$, a finite computation $\lambda \in S^+$ is consistent with $F_A$ iff for
all $i \in \{0,\ldots, |\lambda|-2\}$: $\lambda[i+1] \in
out(\lambda[i],F(\lambda[0,i]))$ and $D_{\Agt}(\lambda[|\lambda|-1]) =
\emptyset$, i.e., there is a deadlock at the last state of $\lambda$. We denote
by $out_f(s,F_A)$ the set of all consistent finite computations of $F_A$
starting from $s$. Then, the set of all consistent finite and infinite
computations of $F_A$ from $s$ is defined as:
\[
out_{nt}(s,F_A) = out(s,F_A) \cup out_f(s,F_A)
\]

Under a resource bound $b \in B$, a computation $\lambda \in out_{nt}(s,F_A)$ can be
only carried out until an index $i_{\max} \in \nat_\infty$ (see Figure
\ref{fig:b-consistent}) iff:
\[
\sum_{j=0}^{i} cost(\lambda[j], F_A(\lambda[0,j])) \leq b \text{ for all } i < i_{\max}
\]
and
\[
\sum_{j=0}^{i_{\max}} cost(\lambda[j], F_A(\lambda[0,j])) \not\leq b \text{ if } i_{\max} \not= \infty
\]

\begin{figure}[h]
\centering
\def\svgwidth{0.9\textwidth}
{\tiny
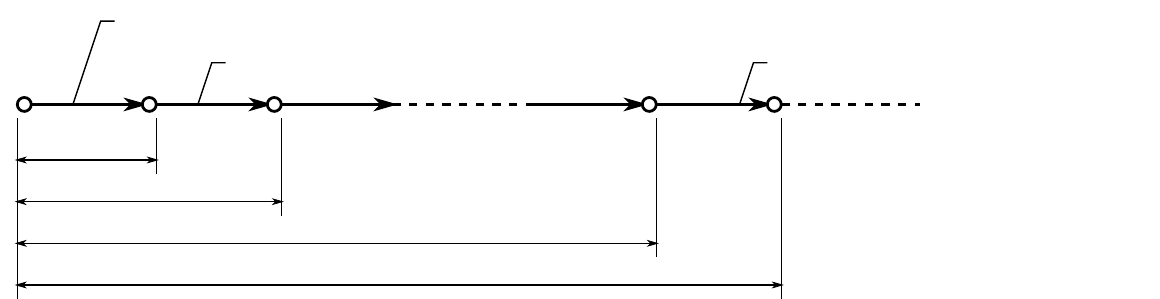
}
\caption{$\lambda$ is restricted by $b$.}
\label{fig:b-consistent}
\end{figure}

Let us denote $\lambda(b) = \lambda[0,i_{\max}]$ and we call $\lambda(b)$ 
maximal with respect to $b$. Then, the set of all $b$-consistent (finite or
infinite) computations of $F_A$ starting from state $s$ is defined as follows:
\[
out_{nt}(s,F_A,b) = \{ \lambda(b) \mid \lambda \in out_{nt}(s,F_A) \}
\]
Note that this definition implies that the cost of every prefix of a
$b$-consistent computation is below $b$ and $out_{nt}(s,F_A,b)$ may contain finite
computations. Furthermore, $out_{nt}(s,F_A,b)$ is always non-empty, as in the worst
case, it contains a single computation $s$. 

The semantics of \atlrnt formulas is defined as follows (the atomic case and
Boolean connectives are defined in the standard way):
\begin{itemize}
\item $M, s \models_{nt} \AO{A^b}{\phi}$ iff $\exists$ strategy $F_A'$ such that for
      all $\lambda \in out_{nt}(s, F_A',b)$: $|\lambda| \geq 2$ and $M, \lambda[1]
      \models \phi$;

\item $M, s \models_{nt} \AG{A^b}{\phi}$ iff $\exists$ strategy $F_A'$ such that for
      all $\lambda \in out_{nt}(s, F_A',b)$ and $i \geq 0$: $|\lambda|=\infty$ and $M,
      \lambda[i] \models \phi$; and

\item $M, s \models_{nt} \AU{A^b}{\phi}{\psi}$ iff $\exists$ strategy $F_A'$ such
      that for all $\lambda \in out_{nt}(s, F_A',b)$, $\exists i \geq 0$: $i <
      |\lambda|$, $M, \lambda[i] \models \psi$ and $M, \lambda[j] \models \phi$
      for all $j \in \{0,\ldots,i-1\}$.

\end{itemize}
If the condition ``for all $\lambda \in out_{nt}(s, F_A',b)$'' is replaced with
``for all infinite $\lambda \in out_{nt}(s, F_A',b)$'' in the truth definition
of \atlrnt, we obtain \atlrnt with infinite semantics. Note that in a RB-CGS
model $M$, if $F_A$ is a $b$-strategy for a coalition $A$, we have that
$out(s,F_A) = out(s,F_A,b) = out_{nt}(s,F_A) = out_{nt}(s,F_A,b)$.
We have the following result:

\begin{lemma}
\label{lemma:atlr-ag-atlrnt}
Given a RB-CGS model $M$,
$M, s \models {\phi'}$ iff $M, s \models_{nt} {\phi'}$ under finite semantics.
\end{lemma}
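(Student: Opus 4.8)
The plan is to prove the biconditional by induction on the structure of $\phi'$. The atomic and Boolean cases are immediate, since $\pi$ and the connectives are interpreted identically under $\models$ and $\models_{nt}$, and the induction hypothesis closes them. Throughout I use two facts about RB-CGS models: first, the equalities recorded immediately before the lemma, namely that whenever $F_A$ is a $b$-strategy we have $out(s,F_A) = out(s,F_A,b) = out_{nt}(s,F_A) = out_{nt}(s,F_A,b)$; and second, that because such models are total and $\idle$ is always available at cost $\bar 0$, every computation consistent with any strategy is infinite.

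The forward direction of each modal case is the easy one: a witnessing $b$-strategy $F_A$ for $M,s \models \phi'$ is also a strategy admissible in the $\models_{nt}$ clauses. Since $out_{nt}(s,F_A,b) = out(s,F_A)$ and every element of this set is infinite, the additional length conditions $|\lambda|\geq 2$, $|\lambda|=\infty$, and $i<|\lambda|$ that appear in the definition of $\models_{nt}$ hold automatically, while the inner conditions on $\phi$ and $\psi$ transfer by the induction hypothesis. Hence $F_A$ also witnesses $M,s\models_{nt}\phi'$.

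The reverse direction is the heart of the argument, and it is where the $\idle$ action is used. Given a strategy $F_A'$ witnessing $M,s\models_{nt}\phi'$, I manufacture a genuine $b$-strategy by padding with $\idle$. For $\AO{A^b}{\phi}$, the requirement that $|\lambda|\geq 2$ for every $\lambda\in out_{nt}(s,F_A',b)$ forces $cost_A(s,F_A'(s))\leq b$ (otherwise the first transition would already be truncated, leaving $\lambda(b)$ of length one); the strategy that plays $F_A'(s)$ at $s$ and then $\idle$ on every later history is then a $b$-strategy all of whose infinite outcomes satisfy $\phi$ at index $1$. For $\AU{A^b}{\phi}{\psi}$, on each branch $F_A'$ reaches an index $i<|\lambda|$ at which $\psi$ holds while $\phi$ is maintained before $i$; this state is reached within budget because $i$ precedes the truncation point, so the strategy that follows $F_A'$ up to the first such index on each history and then switches to $\idle$ forever is a $b$-strategy witnessing $M,s\models\AU{A^b}{\phi}{\psi}$. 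For $\AG{A^b}{\phi}$, the clause requires $|\lambda|=\infty$ for every $\lambda\in out_{nt}(s,F_A',b)$, meaning no outcome is ever truncated; this says exactly that $out(s,F_A')=out(s,F_A',b)$, so $F_A'$ is already a $b$-strategy and witnesses $M,s\models\AG{A^b}{\phi}$ directly.

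The main obstacle is the verification, in the $\bigcirc$ and $\mathcal{U}$ cases, that the $\idle$-padded strategy is genuinely a $b$-strategy, i.e. that every prefix of every outcome has cost at most $b$. This rests on two observations: the retained prefix of $F_A'$, taken before the switch to $\idle$, is cost-bounded by $b$ precisely because it lies at or before the maximal index $i_{\max}$ in the definition of $\lambda(b)$; and appending joint $\idle$ actions, which cost $\bar 0$ for the coalition, leaves every subsequent cumulative cost unchanged and hence still bounded by $b$. The guaranteed presence of a zero-cost $\idle$ action in every state is exactly what permits this extension of a finite, budget-respecting witness to an infinite $b$-strategy, and is the structural feature that separates \atlr from the variants of \ral whose model-checking problem is undecidable.
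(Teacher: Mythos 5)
Your proposal is correct and follows essentially the same route as the paper: the forward direction via the equality $out(s,F_A)=out(s,F_A,b)=out_{nt}(s,F_A)=out_{nt}(s,F_A,b)$ for $b$-strategies, and the reverse direction by structural induction, observing that the $\Box$ clause already forces a $b$-strategy and padding the witnessing strategy with $\idle$ in the $\bigcirc$ and $\mathcal{U}$ cases. Your write-up is in fact slightly more explicit than the paper's (which omits the $\mathcal{U}$ case and does not spell out why the padded strategy respects the bound), but no new idea is involved.
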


\begin{proof}
$(\Rightarrow)$ is obvious. For $(\Leftarrow)$, the proof is by induction on the
structure of $\phi'$. 

If $\phi' = \AG{A^b}{\phi}$, we have that $out_{nt}(s,F_A') = out_{nt}(s, F_A',b)$ because  $|\lambda| = \infty$ for all $\lambda
\in out_{nt}(s, F_A',b)$;
thus, $F_A'$ is a $b$-strategy.

If $\phi' = \AO{A^b}{\phi}$, let us consider the following strategy for $A$:
\[
F_A(\lambda) = 
\begin{cases}
F_A'(\lambda) & \text{ if } \exists \lambda' \in S^+ \cup S^\omega: \lambda\lambda' \in out(s,F_A',b) \\
idle          & \text{ otherwise.}
\end{cases}
\]
It is straightforward that $F_A$ is a $b$-strategy to satisfy $M, s \models
\AO{A^b}{\phi}$ at $s$.

If $\phi' = \AU{A^b}{\phi}{\psi}$, the proof is similar to the above case,
hence it is omitted here.
\end{proof}

The above result shows that over the class of RB-CGS models, \atlr and \atlrnt
with finite semantics are equivalent. Furthermore, we have the following result:

\begin{theorem}
The model-checking problem for \atlrnt with finite semantics is decidable.
\end{theorem}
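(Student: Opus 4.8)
The plan is to model-check \atlrnt under finite semantics by adapting Algorithm~\ref{alg:atlr-label} together with the \textsc{until-strategy} and \textsc{box-strategy} procedures of Algorithms~\ref{alg:until-strategy-multiple-resources} and~\ref{alg:box-strategy-multiple-resources} to the broader class RB-CGS-nt. The skeleton is kept: process $Sub(\phi_0)$ in increasing order of complexity, label each infinite-bound (unbounded-resource) modality first, and use it as the top-level pruning test for its bounded counterpart. Termination is inherited essentially verbatim, since the no-infinite-recursion argument used only the finiteness of $S$ and the fact that availability vectors live in $\natinfty^r$ and are well-quasi-ordered (Dickson's lemma); neither fact depends on totality of $\delta$ or on the presence of $idle$, and a search branch that reaches a state with no affordable available action simply stops. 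Hence it suffices to re-establish correctness under finite semantics, and the adaptation turns out to be localized to the points where the absence of $idle$ and the partiality of $\delta$ actually matter.

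The first such point is the computation of the unbounded-resource labels $\AO{A}{\phi}$, $\AU{A}{\phi}{\psi}$, $\AG{A}{\phi}$: these are now evaluated by the standard ATL fixpoints but over non-total models and under finite semantics, so a \emph{deadlock} state (one with $D_{\Agt}(s)=\emptyset$) never satisfies $\AG{A}{\phi}$, satisfies $\AU{A}{\phi}{\psi}$ iff $\psi$ holds there, and the $Pre$ operator for $\AO$ must additionally require $out(s,\sigma_A)\neq\emptyset$. The second point is the $\AO{A^b}{\phi}$ case, where $Pre(A,[\phi]_M,b)$ is replaced by the variant labelling $s$ iff $A$ has some $\sigma_A\in D_A(s)$ with $cost(s,\sigma_A)\leq b$ and $\emptyset\neq out(s,\sigma_A)\subseteq[\phi]_M$, which is exactly the condition enforcing $|\lambda|\geq 2$. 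The search procedures themselves need essentially no structural change: the action set $ActA=\{\sigma\in D_A(s(n))\mid cost(s(n),\sigma)\leq e(n)\}$ may now be empty, either because $s(n)$ is a deadlock or because every available action exceeds the remaining budget $e(n)$, and by the definition of $\lambda(b)$ this is precisely the situation in which the computation is forced to terminate. Since \textsc{until-strategy} tests $s(n)\models\psi$ \emph{before} enumerating actions and otherwise falls through to \textbf{return} \emph{false}, and \textsc{box-strategy} falls through to \textbf{return} \emph{false} on empty $ActA$, both already yield the finite-semantics-correct verdict at such nodes.

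With these changes the correctness lemmas and corollaries proved for \atlr carry over with the same structure. The delicate case is the soundness branch of \textsc{box-strategy}: it returns true upon finding a non-decreasing loop, and because that loop consists of genuine affordable transitions of $M$ (never $idle$), iterating it produces an \emph{infinite} $b$-consistent computation, which is exactly what $|\lambda|=\infty$ demands; thus the strategy obtained by repeatedly substituting $sub(tree(n),w(m))$ remains a witness for $\AG{A^b}{\phi}$. The \textsc{until-strategy} pumping construction (repeating each productive loop $h_r(m_i)$ times to realise an arbitrary finite availability before executing a finite-resource forcing strategy) is unaffected, and the only new failure leaf is the deadlock/budget-exhaustion leaf, handled by the $\psi$-precedence already present. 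The completeness (returns-\emph{false}) lemmas extend by the single added observation that a strategy meeting the temporal goal cannot pass through a node with $ActA=\emptyset$ unless the goal is already discharged there.

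The main obstacle is not new combinatorics but finite-semantics bookkeeping: for each connective one must check that the truncation $\lambda(b)$ and the side conditions ``$|\lambda|\geq 2$'', ``$i<|\lambda|$'' and ``$|\lambda|=\infty$'' are matched exactly by the algorithm's treatment of empty $ActA$ and of adversary-forced deadlocks. In particular, because opponents may steer a joint action into a deadlock, the and-branch ``for all $s'\in out(s(n),\sigma)$'' must be read as quantifying over all adversarial completions, so that a single deadlocked outcome reached before $\psi$ already falsifies $\AU{A^b}{\phi}{\psi}$ and a single finite outcome already falsifies $\AG{A^b}{\phi}$. Once these boundary cases are aligned with the truth definition of \atlrnt, and the non-total finite-semantics labelling of the unbounded-resource modalities is verified to be decidable and correct, decidability of the whole problem follows from the inherited termination argument.
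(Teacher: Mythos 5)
Your proposal is correct and follows essentially the same route as the paper, which simply adapts the \atlr model-checking algorithm and notes that the only required change is to $Pre(A,\rho,b)$, adding the condition $out(s,\sigma_A)\neq\emptyset$. You spell out considerably more of the boundary bookkeeping (deadlocks, empty $ActA$, the finite-semantics truncation $\lambda(b)$) than the paper's one-line proof does, but all of it is consistent with, and implicit in, the paper's argument.
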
 
\begin{proof}
The model-checking algorithm for \atlr can be easily adapted
to a model-checking algorithm for \atlrnt. The only change required is in 
the function $Pre(A,\rho,b)$ which becomes
$
Pre(A,\rho,b) = \{ s\in S \mid \exists \sigma_A \in D_A(s): cost(s, \sigma_A) 
\leq b \land \emptyset \not= out(s,\sigma_A) \subseteq \rho \}.
$
Here, we additionally require that $out(s,\sigma_A) \not= \emptyset$.
\end{proof}

\subsection{Comparing \pfral and \atlrnt}
\sloppy

At the semantical level, every RBM $M = (\Agt, Q, \Pi, \pi, \Act, d,
o, \Res, t)$ can be converted straightforwardly into an RB-CGS $M' =
(\Agt, \Res, Q, \Pi, \pi', \Act, d, c,$ $ \delta)$ where:
\begin{itemize}
 \item $\pi'(p)= \{ q \in Q \mid p \in \pi(q) \}$ for all $p \in \Pi$;
 \item $c(q, a, \alpha) = (-t(\alpha,r))_{r \in \Res}$ for all $q \in Q, a \in \Agt, \alpha \in \Act$; and
 \item $\delta = o$.
\end{itemize}

At the syntactical level, \pfral and \atlrnt are rather different. While \pfral
enables specifying the ability of a coalition under a resource endowment,
\atlrnt allows one to specify the ability of a coalition within a resource
bound. Let us consider an example, as depicted in Figure~\ref{fig:pfral-atlnt},
\begin{figure}
\centering
\def\svgwidth{0.5\textwidth}
{
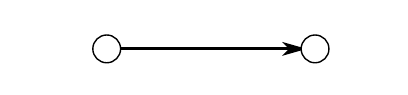}
\caption{Comparing resource endowments and bounds.}\label{fig:pfral-atlnt}
\end{figure}
in order to clarify the difference between endowments and bounds. In this
example, our model has two agents $a$ and $b$ and one resource. From state $s$,
agents $a$ and $b$ can only perform $\alpha$ and $\beta$, respectively, which
cost $-c$ and $c$ (for some $c > 0$), respectively. As their joint action is
cost-free, we have that $s \models_{nt} \AO{\{a,b\}^0} p$. However, given an
empty endowment $\eta_0 = \{a \mapsto 0, b \mapsto 0\}$, there is no possible
share from this endowment to cover the cost $c$ of action $\beta$; i.e., $s
\not\models_{ral} \ralAO {\{a,b\}} {\eta_0} p$. The reason is that under
$\eta_0$ $(s,\eta_0)$ is the only from $s$ which is shorter that the computation
$s,t$ under $0$. In general, we have the following result: 
\begin{lemma}
\label{lemma:compare-computations}
Given a RBM model $M$, for any state $q_0$,
strategy $F_A$, endowment $\eta^0$ and bound $b = (\sum_{a \in A}\eta_a(r))_{r
\in \Res}$, then if $(q_0,\eta^0),(q_1,\eta^1),\ldots,(q_k,\eta^k)$ is the prefix
of some computation in $out(q_0,\eta^0,F_A)$, then $q_0,q_1,\ldots,q_k$ is also
the prefix of some computation in $out_{nt}(q_0,F_A,b)$.
\end{lemma}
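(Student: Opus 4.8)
The plan is to pass from the per-agent bookkeeping of \pfral to the aggregate bookkeeping of \atlrnt by tracking, for each resource $r$, the total endowment of the coalition $E^i(r) = \sum_{a \in A}\eta^i_a(r)$ along the given r-path $(q_0,\eta^0),\ldots,(q_k,\eta^k)$. The guiding observation is that although \pfral constrains \emph{each} agent's endowment to stay nonnegative while \atlrnt only constrains the cumulative \emph{coalition} cost against the single bound $b$, summing the endowment updates over the coalition turns the former family of constraints into exactly the latter one, precisely because $b$ is defined as the sum of the individual endowments.

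First I would record, in the converted RB-CGS $M'$, how the \atlrnt cost relates to $\cons$ and $\produce$. Since $c(q,a,\alpha)=(-t(\alpha,r))_{r}$ and $t(\alpha,r)=\produce(\alpha,r)-\cons(\alpha,r)$, the $r$-component of $cost_A(q_i,\sigma_A)$ is $\sum_{a\in A}\bigl(\cons(\sigma_a,r)-\produce(\sigma_a,r)\bigr)$. Next I would sum the r-path update rule $\eta^{i+1}_a(r)=\eta^i_a(r)+\produce(\sigma_a,r)-\sh_i(a,r)$ over all $a\in A$. Because any $(A,\eta)$-share satisfies $\sum_{a\in A}\sh_i(a,r)\ge\sum_{a\in A}\cons(\sigma_a,r)$ (with equality whenever the total consumption is positive, and trivially otherwise since both sides are then $0$), this gives the one-step inequality $E^{i+1}(r)\le E^i(r)-cost_A(q_i,\sigma_A)_r$ for every $r$.

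Telescoping this inequality from $j=0$ to $i$ yields $\sum_{j=0}^{i}cost_A(q_j,\sigma^{(j)}_A)_r \le E^0(r)-E^{i+1}(r)$. Here $E^0(r)=\sum_{a\in A}\eta^0_a(r)=b_r$ by the definition of $b$, and $E^{i+1}(r)\ge 0$ because the r-path requires every intermediate endowment $\eta^{i+1}_a(r)$ to be nonnegative; this is exactly where the \pfral nonnegativity condition is used. Hence $\sum_{j=0}^{i}cost_A(q_j,\sigma^{(j)}_A)\le b$ componentwise for all $i\in\{0,\ldots,k-1\}$, which is precisely the $b$-consistency condition of \atlrnt for every prefix up to $q_k$.

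Finally I would assemble the pieces. The state sequence $q_0\ldots q_k$ is consistent with $F_A$ in $M'$, since $\delta=o$ and $\sigma^{(i)}_A=F_A(q_0\ldots q_i)$ at each step, so it extends to some computation $\lambda\in out_{nt}(q_0,F_A)$ (following $F_A$ and the same opponent choices, then continuing to infinity or to a deadlock). The cost bound just established shows that the truncation index $i_{\max}$ of that computation satisfies $i_{\max}\ge k$, so $q_0\ldots q_k$ is a prefix of $\lambda(b)\in out_{nt}(q_0,F_A,b)$, as required. I expect the only delicate point to be the aggregation step: one must verify that the share inequality points in the right direction and that merely summing the per-agent nonnegativity constraints suffices to recover the aggregate budget, rather than needing the finer per-agent information — and this works out exactly because $b$ is chosen as $(\sum_{a\in A}\eta_a(r))_{r\in\Res}$.
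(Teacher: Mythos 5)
Your proof is correct and follows essentially the same route as the paper's: both track the aggregate coalition endowment $\sum_{a\in A}\eta^i_a(r)$, relate it step by step to $b$ minus the accumulated cost via the share/consumption/production bookkeeping, and use nonnegativity of the endowments along the r-path to obtain $b$-consistency of every prefix. The only cosmetic differences are that you telescope an inequality where the paper maintains an equality as an explicit induction invariant, and you spell out the final $i_{\max}\geq k$ step in slightly more detail.
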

\begin{proof}
The proof is done by induction on $k$; additionally, we also show that
$(\sum_{a \in A}\eta^k_a(r))_{r\in \Res} = b - \sum_{j=0}^{k-1}
cost(q_j,F_A(q_1\ldots q_j))$.
\begin{description}
 \item [Base case $k=0$:] The proof is trivial.
 \item [Induction step:] Assume that
       $(q_0,\eta^0)(q_1,\eta^1)\ldots(q_{k+1},\eta^{k+1})$ is the prefix of
       some computation in $out(q_0,\eta^0,F_A)$. Then, so is
       $(q_0,\eta^0)(q_1,\eta^1)\ldots(q_k,\eta^{k})$. By induction hypothesis,
       we have that $q_0\ldots q_k$ is the prefix of some computation in
       $out_{nt}(q_0,F_A,b)$ and $(\sum_{a \in A}\eta^{k}_a(r))_{r\in \Res} = b -
       \sum_{j=0}^{k-1} cost(q_j,F_A(q_1\ldots q_j))$.
       
       As $(q_0,\eta^0)(q_1,\eta^1)\ldots(q_{k+1},\eta^{k+1})$ is a prefix,
       $\Share(A,\eta^k,(\sum_{a \in A}-\cons(F_A(q_0\ldots q_k)_a,r))_{r \in
       \Res}) \not = \emptyset$, i.e., $\sum_{a \in A}\eta^k_a(r) \geq \sum_{a
       \in A}-\cons(F_A(q_0\ldots q_k)_a,r)$ for all $r \in \Res$; hence
       $\sum_{a \in A}\eta^{k+1}_a(r) \geq \sum_{a \in A}\produce(F_A(q_0\ldots
       q_k)_a,r) \geq 0$.
       
       We also have $(\sum_{a \in A}\eta^{k+1}_a(r))_{r\in \Res} = (\sum_{a \in
       A}(\eta^{k}_a(r) + \produce(F_A(q_0\ldots q_k)_a,r) - \sh_k(a,r)))_{r\in
       \Res} = (\sum_{a \in A}(\eta^{k}_a(r) + \produce(F_A(q_0\ldots q_k)_a,r)
       + \cons(F_A(q_0\ldots q_k)_a,r)))_{r\in \Res} = b - \sum_{j=0}^{k}
       cost(q_j,F_A(q_1\ldots q_j))$. As $\sum_{a \in A}\eta^{k+1}_a(r) \geq 0$
       for all $r \in \Res$, $b - \sum_{j=0}^{k} cost(q_j,F_A(q_1\ldots q_j))
       \geq 0$, i.e., $\sum_{j=0}^{k} cost(q_j,F_A(q_1\ldots q_j)) \leq b$,
       hence $q_0\ldots q_{k+1}$ is also a prefix of some computation in
       $out_{nt}(q_0,F_A,b)$.
\end{description}
\end{proof}

As suggested by the function $\eta^b$ which translates resource bounds into
endowments (introduced in \cite{Bulling/Farwer:10b} by Bulling and Farwer to
relate their framework to RBCL \cite{Alechina//:09b}), \pfral formulas can also
be converted into \atlrnt formulas by a translation function $\tr$ which makes
use of the inverse of $\eta^b$ and is defined inductively as follows
(propositional cases are omitted):
\begin{itemize}
 \item $\tr(\ralAO A \eta \phi) = \AO{A^{(\sum_{a \in A}\eta_a(r))_{r \in \Res}}} {\tr(\phi)}$;
 \item $\tr(\ralAG A \eta \phi) = \AG{A^{(\sum_{a \in A}\eta_a(r))_{r \in \Res}}} {\tr(\phi)}$; and
 \item $\tr(\ralAU A \eta \phi \psi) = \AU{A^{(\sum_{a \in A}\eta_a(r))_{r \in \Res}}} {\tr(\phi)} {\tr(\psi)}$.
\end{itemize}
Here, resource bounds are sums of individual endowments for each resource. The
example in Figure~\ref{fig:pfral-atlnt} and
Lemma~\ref{lemma:compare-computations} show that satisfiability is not preserved
by the translation function $\tr$. In order to obtain preservation of
satisfiability, it is necessary
to relax the requirement in the definition of computations in RBM models. In particular,
the last condition is relaxed as follows:
\begin{itemize}
\item $\exists \sh_i \in \Share(A,\eta,\rho): \forall a \in A, r \in \Res:
      \eta^{i+1}_a(r) = \eta^i_a(r) + \produce(\sigma_a,r) - \sh_i(a,r)$ where
      $\sigma = F_A(q_0\ldots q_i)$ and
      $\rho(r) = \sum_{a\in A} (-\cons(\sigma_a,r)-\produce(\sigma_a,r))$.
\end{itemize}
Comparing the the original condition, the production of resource in a step is
also considered to cover for the consumption in the same step by adding it into
the share function. Let us call RBM models with this relaxed condition 
relaxed RBM models. We have the following result:
\begin{lemma}
\label{lemma:compare-relaxed-computations}
Given a relaxed RBM model $M$, for any state $q_0$, strategy $F_A$, endowment
$\eta^0$ and bound $b = (\sum_{a \in A}\eta_a(r))_{r \in \Res}$, then:
\begin{itemize}
\item if $(q_0,\eta^0),(q_1,\eta^1),\ldots \in out(q_0,\eta^0,F_A)$, then
      $q_0,q_1,\ldots\in out(q_0,F_A,b)$;
\item conversely, if $q_0,q_1,\ldots\in out_{nt}(q_0,F_A,b)$; then $\exists
      \eta^1,\eta^2\ldots$ such that $(q_0,\eta^0),(q_1,\eta^1),\ldots \in
      out(q_0,\eta^0,F_A)$.
\end{itemize}
\end{lemma}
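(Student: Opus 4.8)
The plan is to follow the template of Lemma~\ref{lemma:compare-computations}, but to exploit the fact that the relaxed share condition turns the earlier one-directional correspondence between endowments and bounds into an \emph{exact} one. The central bookkeeping is an invariant: writing $tot_i(r) = \sum_{a \in A}\eta^i_a(r)$ for the total amount of $r$ held by the coalition after $i$ steps and $\sigma_i = F_A(q_0\ldots q_i)$ for the joint action chosen at step $i$, summing the endowment-update equation over $a \in A$ gives
\[
tot_{i+1}(r) = tot_i(r) - cost_r(q_i,\sigma_i),
\]
because each agent's own production is added back while the shares sum to the coalition's \emph{net} consumption, which in the translated structure is exactly $cost_r(q_i,\sigma_i)$. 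With $tot_0(r) = b_r$ this yields $tot_i(r) = b_r - \sum_{j<i} cost_r(q_j,\sigma_j)$ for every $i$, and non-negativity of this quantity is precisely $b$-consistency. The first bullet is then immediate: along an r-path in $out(q_0,\eta^0,F_A)$ the defining condition $\eta^i_a(r)\ge 0$ forces $tot_i(r)\ge 0$, hence $\sum_{j<i} cost_r(q_j,\sigma_j)\le b_r$ for all $i$ and $r$, so $q_0,q_1,\dots \in out(q_0,F_A,b)$.

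For the converse I would build the endowments inductively along a given computation $q_0,q_1,\dots \in out_{nt}(q_0,F_A,b)$, starting from the fixed $\eta^0$ and maintaining both the invariant above and $\eta^i_a(r)\ge 0$ for all $a,r$. At step $i$ I must exhibit a share $\sh_i \in \Share(A,\eta^i,\rho_i)$, where $\rho_i(r)$ is the coalition's net consumption; when $\rho_i(r)>0$ feasibility requires only $tot_i(r) \ge \rho_i(r) = cost_r(q_i,\sigma_i)$, and by the invariant this reduces to $\sum_{j\le i} cost_r(q_j,\sigma_j) \le b_r$, which holds because the computation is $b$-consistent (when $\rho_i(r)\le 0$ the share on $r$ is trivially empty). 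Under this sum constraint the shares can be allocated greedily, each $\sh_i(a,r)$ capped by $\eta^i_a(r)$; then $\eta^{i+1}_a(r) = \eta^i_a(r) + \produce((\sigma_i)_a,r) - \sh_i(a,r) \ge 0$ since production is non-negative and $\sh_i(a,r)\le \eta^i_a(r)$. This propagates the invariant and keeps all endowments non-negative, so $(q_0,\eta^0),(q_1,\eta^1),\dots$ is an $(\eta^0,F_A)$-path.

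The main obstacle is not the arithmetic but verifying that the sequence I construct is genuinely an element of $out(q_0,\eta^0,F_A)$ in the precise sense of the definition, in particular the \emph{maximality} clause, which demands that a path be extended whenever the available resources permit. If the input computation is infinite the constructed r-path is infinite and maximality is automatic; for a finite computation of $out_{nt}(q_0,F_A,b)$, which terminates either at a deadlock or because the bound can no longer be met, I must argue that the corresponding r-path cannot be extended either, so that it is maximal in $M$. Establishing that the feasibility-of-share condition and the termination condition coincide step for step, so that the r-path stops exactly when the bound computation does, is the delicate point, and it is precisely here that the relaxation, which lets same-step production offset consumption, is indispensable; the unrelaxed model fails this alignment, as the example of Figure~\ref{fig:pfral-atlnt} shows.
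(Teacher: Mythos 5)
Your proof is correct and takes essentially the same route as the paper, whose entire argument for this lemma is that ``both directions are repetition of the proof of Lemma~\ref{lemma:compare-computations}'', i.e.\ the inductive invariant $\sum_{a\in A}\eta^{k}_a(r) = b_r - \sum_{j<k}cost_r(q_j,F_A(q_0\ldots q_j))$ that you also use. You in fact supply more detail than the paper does, notably in constructing the shares for the converse direction and in checking the maximality clause for finite computations, a point the paper's one-line proof passes over in silence.
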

\begin{proof}
Both directions are repetition of the proof of
Lemma~\ref{lemma:compare-computations}, hence they are omitted here.
\end{proof}

Let \pfralrelax be \pfral interpreted over relaxed RBM models. We have the following result:
\begin{lemma}
\label{lemma:preserve-sat}
Given a relaxed RBM model $M$, $M,s \models_{\text{\pfralrelax}} \phi'$ iff $M',s \models_{nt} \tr(\phi')$.
\end{lemma}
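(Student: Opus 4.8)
The plan is to prove the biconditional by induction on the structure of $\phi'$, using Lemma~\ref{lemma:compare-relaxed-computations} to transfer between the r-paths of $M$ and the $b$-consistent computations of $M'$. The base case ($\phi'$ atomic) is immediate: by construction $\pi'(p)=\{q\mid p\in\pi(q)\}$, so $M,s\models_{\text{\pfralrelax}}p$ iff $M',s\models_{nt}p=\tr(p)$. The Boolean cases follow at once from the induction hypothesis together with the fact that $\tr$ commutes with $\neg$ and $\wedge$.

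For the modal cases the key observation is that a coalition strategy is the \emph{same} object in both structures: since $M$ and $M'$ share the state set $Q$ and the action assignment $d$, a map $F_A\colon Q^+\to\Act^{|A|}$ is a strategy for $A$ in $M$ exactly when it is one in $M'$. Moreover $\tr$ sends the endowment $\eta$ to the bound $b=(\sum_{a\in A}\eta_a(r))_{r\in\Res}$, which is precisely the hypothesis of Lemma~\ref{lemma:compare-relaxed-computations}. I would first use that lemma to establish, for each fixed $F_A$, a correspondence at the level of \emph{state sequences}: the projections to states of the r-paths in $out(q,\eta,F_A)$ are exactly the computations in $out_{nt}(q,F_A,b)$. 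The first item of the lemma gives the forward inclusion and the second gives the converse; since both directions leave the state sequence and its length (finite or infinite) unchanged, the two outcome sets coincide and this bijection preserves every intermediate state $\lambda[i]$.

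With this correspondence in hand, each modal clause translates term by term. For $\tr(\ralAU A \eta \phi \psi)=\AU{A^b}{\tr(\phi)}{\tr(\psi)}$, the \pfralrelax truth condition ``$\exists F_A\,\forall\lambda\in out(q,\eta,F_A)\,\exists i<|\lambda|$ with $\lambda[i]\models_{ral}\psi$ and $\lambda[j]\models_{ral}\phi$ for $j<i$'' matches the \atlrnt condition obtained by replacing $out(q,\eta,F_A)$ with $out_{nt}(q,F_A,b)$ and $\models_{ral}$ with $\models_{nt}$: the same witnessing $F_A$ works in both directions by the outcome-set identity, and the induction hypothesis applied at the states $\lambda[i]$ and $\lambda[j]$ (which lie in $Q=S$) gives $\lambda[i]\models_{ral}\psi\iff\lambda[i]\models_{nt}\tr(\psi)$ and similarly for $\phi$. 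The $\ralAO$ case is identical except that the index $i$ is fixed to $1$ and one must carry the length requirement $|\lambda|\ge 2$ across the correspondence; the $\ralAG$ case quantifies universally over $i$ and additionally requires $|\lambda|=\infty$, which is preserved because the state-sequence bijection respects finiteness.

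The main obstacle is the path-level correspondence asserted in the second paragraph, namely that the outcome sets agree \emph{as sets of state sequences with their lengths}. This is exactly the content of Lemma~\ref{lemma:compare-relaxed-computations}, so once that lemma is invoked the remaining work is the routine bookkeeping of matching quantifiers and the length side-conditions ($|\lambda|\ge 2$ for $\ralAO$ and $|\lambda|=\infty$ for $\ralAG$) in each modal clause.
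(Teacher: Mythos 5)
Your proposal is correct and follows essentially the same route as the paper's proof: induction on the structure of $\phi'$, with Lemma~\ref{lemma:compare-relaxed-computations} used to transfer between the r-paths of $out(q,\eta,F_A)$ and the computations of $out_{nt}(q,F_A,b)$ for the same witnessing strategy $F_A$, and the induction hypothesis applied pointwise at the states of each path. If anything, you are slightly more explicit than the paper about the outcome-set correspondence preserving lengths, which is needed for the $\ralAO$ and $\ralAG$ side-conditions.
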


\begin{proof}
Let us prove the direction from left to right. The other direction is similar.
The proof is done by induction on the structure of $\phi'$. The base case is
trivial, hence omitted here.

In the induction step, the cases of propositional connectives are trivial, hence
they are also omitted. Let us consider the following three cases.

\begin{description}
\item [$\phi' = \ralAO A \eta \phi$: ] Let $b = (\sum_{a\in A}
      \eta_a(r))_{r\in \Res}$ and $F_A$ be the strategy to satisfy $\phi'$ at
      $s$. For every $q_0q_1\ldots \in out_{nt}(s,F_A,b)$ where $s = q_0$, by
      Lemma~\ref{lemma:compare-relaxed-computations}, there are $\eta^1,\eta^2$
      such that $(q_0,\eta^0)(q_1,\eta^1)\ldots \in out(s,\eta,F_A)$. As $M, s
      \models_{ral} \ralAO A \eta \phi$, we have that $M,q_1 \models_{ral}
      \phi$. By induction hypothesis, $M', q_1\models_{nt} \tr(\phi)$. Hence,
      $M',s \models \AO{A^{b}}\tr(\phi)$

\item [$\phi' = \ralAU A \eta \phi \psi$: ] Let $b = (\sum_{a\in A}
      \eta_a(r))_{r\in \Res}$ and $F_A$ be the strategy to satisfy $\phi'$ at
      $s$. For every $q_0q_1\ldots \in out_{nt}(s,F_A,b)$ where $s = q_0$, by
      Lemma~\ref{lemma:compare-relaxed-computations}, there are $\eta^1,\eta^2$
      such that $(q_0,\eta^0)(q_1,\eta^1)\ldots \in out(s,\eta,F_A)$. As $M, s
      \models_{ral} \ralAU A \eta \phi \psi$, we have that $\exists i \geq 0$ such
      that $M,q_j \models_{ral} \phi$ for all $j < i$ and $M,q_i \models_{ral}
      \psi$. By induction hypothesis, $M',q_j \models_{nt} \tr(\phi)$ for all $j
      < i$ and $M,q_i \models_{ral} \psi$. Hence, $M',s \models
      \AU{A^{b}}{\tr(\phi)}{\tr(\psi)}$.

\item [$\phi' = \ralAG A \eta \phi$: ] Let $b = (\sum_{a\in A}
      \eta_a(r))_{r\in \Res}$ and $F_A$ be the strategy to satisfy $\phi'$ at
      $s$. For every $q_0q_1\ldots \in out_{nt}(s,F_A,b)$ where $s = q_0$, by
      Lemma~\ref{lemma:compare-relaxed-computations}, there are $\eta^1,\eta^2$
      such that $(q_0,\eta^0)(q_1,\eta^1)\ldots \in out(s,\eta,F_A)$. As $M, s
      \models_{ral} \ralAU A \eta \phi \psi$, we have that
      $(q_0,\eta^0)(q_1,\eta^1)\ldots$ is infinite and $M,q_j \models_{ral}
      \phi$ for all $j \geq 0$. By induction hypothesis, $M',q_j \models_{nt}
      \tr(\phi)$ for all $j \geq 0$. Hence, $M',s \models
      \AG{A^{b}}{\tr(\phi)}$.
\end{description}

\end{proof}
The above lemma shows that over the class of relaxed RBM models, \atlrnt and
\pfralrelax with finite semantics are equivalent. Similar to the above result,
it is also straightforward that \atlrnt with infinite semantics is equivalent to
\pfralrelax with infinite semantics:
\begin{lemma}
Given a relaxed RBM model $M$, under the infinite semantics, 
$M,s \models_{\text{\pfralrelax}} \phi'$ iff $M',s \models_{nt}
\tr(\phi')$.
\end{lemma}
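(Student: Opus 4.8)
The plan is to repeat the structural induction on $\phi'$ used in the proof of Lemma~\ref{lemma:preserve-sat}, changing only the range of the universal quantifier over computations from all computations to the infinite ones on both sides. The atomic and Boolean cases are identical, so the work is concentrated in the three modal cases $\ralAO A \eta \phi$, $\ralAU A \eta \phi \psi$ and $\ralAG A \eta \phi$, each treated exactly as before but now reading ``for all infinite $\lambda$'' in place of ``for all $\lambda$''.

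The key observation that makes this substitution harmless is that the correspondence supplied by Lemma~\ref{lemma:compare-relaxed-computations} is \emph{length-preserving}. An r-path $(q_0,\eta^0)(q_1,\eta^1)\ldots$ in $out(q_0,\eta^0,F_A)$ and the state computation $q_0q_1\ldots$ it is matched with in $out_{nt}(q_0,F_A,b)$ share the very same underlying sequence of states; hence one is infinite iff the other is. Consequently ``for all infinite $\lambda \in out(s,\eta,F_A)$'' corresponds precisely to ``for all infinite $q_0q_1\ldots \in out_{nt}(s,F_A,b)$'', and the two directions of Lemma~\ref{lemma:compare-relaxed-computations} carry infinite witnesses to infinite witnesses in both directions.

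With this in hand I expect the modal cases to go through verbatim. For $\ralAG A \eta \phi$, infiniteness is already part of both truth conditions ($|\lambda| = \infty$ in the RBM semantics and $|\lambda| = \infty$ in the $\models_{nt}$ semantics), so restricting to infinite paths changes nothing; the induction hypothesis gives $M',q_j \models_{nt} \tr(\phi)$ for all $j \geq 0$ along each infinite matched computation. For $\ralAO A \eta \phi$ and $\ralAU A \eta \phi \psi$ I would take the strategy $F_A$ witnessing $\phi'$, translate each infinite $b$-consistent computation of $M'$ back to an infinite r-path of $M$ via Lemma~\ref{lemma:compare-relaxed-computations}, apply the RBM truth condition at the relevant index, and then invoke the induction hypothesis to transfer satisfaction to $M'$; the converse direction is symmetric.

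The one point that needs care — and the main obstacle — is confirming that the ``maximal with respect to $b$'' truncation built into $out_{nt}(s,F_A,b)$ never cuts an otherwise infinite computation short, since otherwise the length-preservation claim could fail. This is exactly guaranteed by the invariant established inside Lemma~\ref{lemma:compare-relaxed-computations}, namely $(\sum_{a\in A}\eta^k_a(r))_{r\in \Res} = b - \sum_{j=0}^{k-1} cost(q_j,F_A(q_1\ldots q_j))$: as long as the r-path continues with non-negative resource quantities, every prefix cost stays $\leq b$, so no index $i_{\max} \neq \infty$ is reached and the corresponding $b$-consistent computation remains infinite. Thus the infinite paths on the two sides are in length-preserving correspondence, and the induction closes as in the finite case.
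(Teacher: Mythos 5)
Your proposal is correct and follows essentially the same route as the paper, which simply reruns the induction of Lemma~\ref{lemma:preserve-sat} restricted to infinite computations. Your additional observation that the correspondence of Lemma~\ref{lemma:compare-relaxed-computations} is length-preserving (so infinite witnesses map to infinite witnesses and the $b$-truncation never shortens them) is exactly the implicit justification the paper's one-line proof relies on.
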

\begin{proof}
The proof is the same as the proof of Lemma~\ref{lemma:preserve-sat} except we
only consider infinite computations.
\end{proof}
Note that the proof for the undecidability
of \pfral in~\cite{Bulling/Farwer:10a} with infinite semantics can be applied
for \pfralrelax with infinite semantics. Hence, we have the following result:
\begin{lemma}
Model-checking \pfralrelax with infinite semantics is undecidable.
\end{lemma}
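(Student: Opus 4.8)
The plan is to reuse the undecidability reduction of \cite{Bulling/Farwer:10a} essentially verbatim, and to verify only that passing from RBM to relaxed RBM models leaves the relevant computations untouched. Recall that the undecidability proof for \pfral under infinite semantics produces, from an instance of an undecidable counter-machine problem, a concrete RBM $M$ together with a formula $\phi$ such that $M, q_0 \models_{ral} \phi$ holds iff the instance is positive. First I would isolate precisely where the two semantics can differ. The only change from \pfral to \pfralrelax is that in the definition of an $(\eta,F_A)$-path the resource quantity to be covered at a step becomes $\rho(r) = \sum_{a\in A}(-\cons(\sigma_a,r)-\produce(\sigma_a,r))$ instead of $\rho(r) = \sum_{a\in A}-\cons(\sigma_a,r)$; that is, same-step production of $r$ may now be used to finance same-step consumption of $r$. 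Hence, step by step, the relaxed condition is at least as permissive as the original, and it is strictly more permissive exactly on those joint actions in which, for some resource $r$, the coalition both produces and consumes $r$ within a single transition in a way that the production is needed to afford the consumption.

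The key step is therefore to check that no such step arises in the model $M$ output by the reduction. In the counter-machine simulation, the actions manipulate each resource (counter) either as a pure increment (production) or as a pure decrement (consumption): a single joint action never both produces and consumes the same resource so as to cover that consumption. Consequently $\rho$ coincides with the original resource quantity at every step of every $(\eta,F_A)$-path, so the same transitions are enabled, the same paths are maximal, and $out(q_0,\eta^0,F_A)$ is identical under the two definitions; in particular the sets of \emph{infinite} $(\eta,F_A)$-paths agree. A routine induction on formula structure then gives $M,q \models_{ral}\psi$ iff $M,q \models_{\text{\pfralrelax}}\psi$ for every subformula $\psi$ of $\phi$ under infinite semantics. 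Thus the very same $M$ and $\phi$ witness the reduction for \pfralrelax, which establishes undecidability.

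The main obstacle is the dependence on the internal details of the Bulling--Farwer construction: the argument rests on the claim that their reduction model never finances same-step consumption by same-step production. I would discharge this by direct inspection of their reduction, but to make the argument robust against any incidental use of such a step I would also record a uniform fix applicable to any reduction model. Replace every offending joint action by a short gadget of intermediate states that performs the production on one transition and only then the consumption on the next, each on its own step. Under infinite semantics this splitting does not alter the truth of the next-, until- and box-formulas occurring in $\phi$ (the inserted states are labelled to preserve the relevant propositions, as in the standard padding constructions for these logics), so the relaxed and original semantics agree on the padded model and the reduction goes through unchanged.
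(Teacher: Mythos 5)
Your proposal takes essentially the same approach as the paper: the paper's entire proof is the one-sentence observation that the undecidability reduction for \pfral under infinite semantics from \cite{Bulling/Farwer:10a} applies unchanged to \pfralrelax. Your argument simply supplies the justification the paper leaves implicit (that the relaxed share condition only differs on joint actions where same-step production finances same-step consumption of the same resource, which the reduction model avoids or can be padded to avoid), and that justification is sound.
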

Then, we have the following consequences:
\begin{corollary}
Model-checking \atlrnt with infinite semantics is undecidable.
\end{corollary}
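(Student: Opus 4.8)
The plan is to obtain the statement as an immediate transfer of undecidability along the translation $\tr$, so that essentially no new work beyond the two preceding lemmas is required. First I would observe that the conversion $M \mapsto M'$ that turns a relaxed RBM into its associated model (sharing the same states, actions and transition function, with $c(q,a,\alpha) = (-t(\alpha,r))_{r \in \Res}$ and $\delta = o$), together with the formula translation $\tr$, is effectively computable: $\tr$ merely rewrites each endowment $\eta$ occurring in a modality into the bound $(\sum_{a \in A}\eta_a(r))_{r \in \Res}$ while recursing through the Boolean structure, and both transformations can be produced in time polynomial in the size of $(M,\phi')$. Hence the map $(M,s,\phi') \mapsto (M',s,\tr(\phi'))$ is a genuine many-one reduction between instances of the two model-checking problems.

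Next I would invoke the infinite-semantics equivalence lemma just established, namely that for every relaxed RBM $M$ and state $s$ we have $M,s \models_{\text{\pfralrelax}} \phi'$ iff $M',s \models_{nt} \tr(\phi')$. This equivalence is exactly the correctness statement for the reduction: since it is a biconditional, a positive instance of \pfralrelax model-checking under infinite semantics is sent to a positive instance of \atlrnt model-checking under infinite semantics, and conversely. Combining this with the lemma stating that model-checking \pfralrelax under infinite semantics is undecidable, the reduction carries undecidability over to \atlrnt under infinite semantics, which is the desired corollary.

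The hard part has already been dispatched: all the real content sits in Lemma~\ref{lemma:compare-relaxed-computations}, which matches $(\eta,F_A)$-paths in $M$ with $b$-consistent computations in $M'$, and in the infinite-semantics equivalence lemma built on top of it. What remains for the corollary is only to check that this equivalence is an instance-preserving, computable reduction, with no further combinatorial or semantic obstacle. The single point that needs a moment's care is that the equivalence must be applied at the specific initial state $s$ named in the instance, and that the undecidability of \pfralrelax (inherited from the undecidability proof for \pfral under infinite semantics, which applies verbatim to the relaxed models) refers to the same class of relaxed RBM models used in the equivalence; both facts follow directly from the cited lemmas, so the proof is a short composition of the two.
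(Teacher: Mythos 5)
Your proposal is correct and follows exactly the route the paper intends: the corollary is obtained as an immediate many-one reduction from \pfralrelax model-checking under infinite semantics via the computable translation $(M,s,\phi') \mapsto (M',s,\tr(\phi'))$, with the infinite-semantics equivalence lemma supplying correctness and the preceding undecidability lemma supplying the source problem. The paper states the corollary without proof as a direct consequence of these two lemmas, so your write-up just makes the same argument explicit.
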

Since model-checking \atlrnt with finite semantics is decidable, we have:
\begin{lemma}
Model-checking \pfralrelax with finite semantics is decidable.
\end{lemma}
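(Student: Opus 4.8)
The plan is to decide model-checking for \pfralrelax with finite semantics by reducing it to model-checking \atlrnt with finite semantics, which has already been shown to be decidable. The entire reduction is packaged by Lemma~\ref{lemma:preserve-sat}: for a relaxed RBM model $M$ we have $M,s \models_{\text{\pfralrelax}} \phi'$ iff $M',s \models_{nt} \tr(\phi')$, where $M'$ is the RB-CGS-nt obtained from $M$ by the semantical-level conversion (setting $\pi'(p)=\{q\mid p\in\pi(q)\}$, $c(q,a,\alpha)=(-t(\alpha,r))_{r\in\Res}$ and $\delta=o$), and $\tr$ is the syntactic translation of \pfral formulas into \atlrnt formulas. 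So the only thing left is to observe that both transformations are effective and that the resulting \atlrnt instance can be decided.

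Concretely, given an input $(M,s,\phi')$ I would first construct $M'$. Since $M$ is finite and the conversion is a direct relabelling of valuations, costs, and the transition function, $M'$ is computable and has essentially the same size as $M$; note that because RBM (and hence relaxed RBM) models need not be total, $M'$ is in general an RB-CGS-nt rather than an RB-CGS, which is exactly the class on which \atlrnt is interpreted. Next I would compute $\tr(\phi')$ by recursion over the structure of $\phi'$, replacing each endowment $\eta$ on a coalition $A$ by the resource bound $(\sum_{a\in A}\eta_a(r))_{r\in\Res}$; this is a linear-size, effective translation. By Lemma~\ref{lemma:preserve-sat} the two instances are equivalent, so it remains to decide $M',s\models_{nt}\tr(\phi')$.

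Finally I would invoke the decidability of model-checking \atlrnt with finite semantics (the theorem proved earlier, via the adaptation of Algorithm~\ref{alg:atlr-label} with the modified $Pre$ function that additionally requires $out(s,\sigma_A)\neq\emptyset$). Running that procedure on $M'$ and $\tr(\phi')$ returns the set $[\tr(\phi')]_{M'}$ of satisfying states, and we answer ``yes'' iff $s\in[\tr(\phi')]_{M'}$. Since each of the three stages---building $M'$, computing $\tr(\phi')$, and running the \atlrnt procedure---terminates, the whole procedure decides $M,s\models_{\text{\pfralrelax}}\phi'$.

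There is essentially no hard step here: all of the genuine work has already been done in establishing Lemma~\ref{lemma:preserve-sat} and the decidability of \atlrnt. The only point that needs a moment's care is the bookkeeping that the conversion $M\mapsto M'$ and the translation $\tr$ are genuinely effective and preserve finiteness, and that the image of a relaxed RBM lands in the (possibly non-total) model class for which the \atlrnt algorithm was designed; once that is checked, decidability follows immediately by composing the reduction with the existing decision procedure.
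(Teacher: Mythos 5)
Your proof is correct and follows exactly the route the paper intends: the paper derives this lemma immediately from Lemma~\ref{lemma:preserve-sat} together with the decidability of \atlrnt model-checking under finite semantics, leaving the (effective) construction of $M'$ and $\tr(\phi')$ implicit. Your writeup simply makes that composition explicit, so there is nothing to add.
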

Furthermore, the same result can also be established for \pfral:
\begin{theorem}
Model-checking \pfral with finite semantics is decidable.
\end{theorem}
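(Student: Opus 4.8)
The plan is to model-check \pfral directly by adapting the \atlrnt model-checking algorithm, rather than going through the translation $\tr$. As the example in Figure~\ref{fig:pfral-atlnt} shows, $\tr$ does \emph{not} preserve satisfaction over ordinary (non-relaxed) RBM models: there $s \models_{nt} \AO{\{a,b\}^0}{p}$ while $s \not\models_{ral} \ralAO{\{a,b\}}{\eta_0}{p}$, because in \pfral the available resources must cover the \emph{consumption} of a joint action \emph{before} any production of the same step is credited, whereas \atlrnt (and \pfralrelax) gate an action only on its \emph{net} cost $\cons-\produce$. This is in fact the \emph{only} semantic difference between the two settings: in both, resource availability is updated by the net cost. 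So I would keep the RBM-to-RB-CGS-nt conversion $M\mapsto M'$ and the bound $b=(\sum_{a\in A}\eta_a(r))_{r\in\Res}$, and change only the action-availability test used inside $Pre$, \textsc{until-strategy} and \textsc{box-strategy} (retaining the non-emptiness requirement $out(s,\sigma_A)\neq\emptyset$ of the \atlrnt version, which is what accounts for finite maximal computations).

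Before touching the algorithm I would record two facts that justify working with a single coalition availability vector instead of per-agent endowments and shares. First, for a fixed step $\Share(A,\eta,\rho)\neq\emptyset$ iff $\sum_{a\in A}\eta_a(r)\ge\rho(r)$ for every $r$, where $\rho(r)=\sum_{a\in A}\cons(\sigma_a,r)$; this holds because the share constraints for distinct resources live on disjoint components of \sh\ and can be met independently. Second, the coalition sum $\sum_{a\in A}\eta^i_a(r)$ evolves deterministically as $b$ minus the accumulated net cost, independently of which share is chosen at each step, and by the first fact the feasibility of every future step depends only on this sum. Hence tracking the single vector $e(n)$ already carried by the node structure loses nothing, and the universal quantification over share choices in the \pfral semantics is harmless, since that choice affects neither reachable states nor future feasibility.

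With these facts the modification is to set
\[
ActA \gets \{\,\sigma\in D_A(s(n)) \mid \forall r\in\Res:\ \textstyle\sum_{a\in A}\cons(\sigma_a,r)\le e_r(n)\,\}
\]
(and the analogous consumption test inside $Pre$), while still computing $e(\mathit{node}(n,\sigma,s'))$ by subtracting the net cost. Soundness and completeness then follow by re-running the existing arguments: a successful search tree is converted into a \pfral strategy exactly as for \textsc{until-strategy} and \textsc{box-strategy}, and the $\cal U$- and $\Box$-economical-strategy lemmas give the converse directions. Termination is unaffected, since the availability vectors still lie in $\nat^r$ (bounded below by $\bar 0$), so the Dickson/well-quasi-order argument of the termination lemma applies verbatim.

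I expect the main obstacle to be re-establishing the \emph{productive-loop} step under the new gate — the $\infty$-substitution in \textsc{until-strategy} and the non-decreasing-loop case of \textsc{box-strategy}. The point to verify is that a loop witnessed once in the search can still be \emph{repeated}: in \pfral one must cover the consumption of every individual step along the loop, not merely its net effect, and a loop whose net effect is non-negative may momentarily dip a resource. This is exactly where monotonicity of the new gate is essential — raising any component of $e(n)$ never disables an action — so, since a non-decreasing loop returns with at least as much of every resource, each step of a subsequent iteration begins with availability at least as large as in the first iteration and therefore remains executable. With this observation the accumulation argument of the $\infty$-lemma and the $\lim_{i\to\infty}T^i$ construction in the \textsc{box-strategy} soundness proof carry over unchanged, yielding decidability.
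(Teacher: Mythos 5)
Your proposal matches the paper's proof of this theorem: the paper likewise obtains decidability by adapting the \atlrnt algorithm, replacing the action gate $cost(\sigma)\le e(n)$ with the consumption test $(\sum_{a\in A}\cons(\sigma_a,r))_{r\in\Res}\le e(n)$ while still updating $e$ by the net cost. Your additional justifications (share-existence depends only on the coalition sum, the share choice affects neither reachable states nor future feasibility, and loop repetition survives the new monotone gate) are details the paper leaves implicit.
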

\begin{proof}
We adapt further the model-checking algorithm for \atlrnt where the line $ActA 
\leftarrow \{ \sigma \in D_A(s(n)) \mid cost(\sigma) \leq e(n) \}$ in
Algorithms~\ref{alg:until-strategy-multiple-resources} and
\ref{alg:box-strategy-multiple-resources} is replaced by $ActA \leftarrow \{\sigma 
\in D_A(s(n)) \mid (\sum_{i \in A}\cons(\sigma_i,r))_{r \in \Res} \leq e(n) \}$.
\end{proof}
Figure~\ref{table:mc-results} summarises the above decidability and undecidability results for the model-checking problems for \atlr,
\atlrnt, \pfralrelax and \pfral where D stands for decidable and U for
undecidable. Note that \atlr is decidable in both semantics due to the fact that
both semantics are indistinguishable thanks to $\idle$.
\begin{figure}[h]
\centering
\begin{tabular}{|l|c|c|c|c|}
\hline
Semantics & \atlr & \atlrnt & \pfralrelax & \pfral \\
\hline
\hline
Finite     & D     & D      &  D           & D      \\ 
\hline
Infinite   & D     & U      &  U           & U~\cite{Bulling/Farwer:10a}       \\ 
\hline
\end{tabular}
\caption{Decidability and undecidability results.}
\label{table:mc-results}
\end{figure}

\section{Conclusion} 
\label{sec:conclusion}
We have presented a model-checking algorithm for
\atlr, a logic with \emph{resource production}, which makes \atlr exceptional
in the landscape of resource logics, for most of which the model-checking
problem is undecidable \cite{Bulling/Farwer:10a,Bulling/Goranko:13a}.  
We compared \atlr with a similar logic (a variant of \ral, \cite{Bulling/Farwer:10a})
to understand the differences between the two logics and why the model-checking
problem for \atlr is decidable while the model-checking problem for \pfral with infinite semantics
is undecidable. As a by-product of this comparison, we show that the
model-checking problem for \pfral with finite semantics is decidable, solving
a problem left open in \cite{Bulling/Farwer:10a}.

Although the model-checking problem for \atlr in decidable, it is EXPSPACE-hard.
In future work, we plan to implement model-checking algorithms for feasible fragments 
of \atlr in the model-checker MCMAS \cite{Lomuscio//:09a}.

\paragraph{Acknowledgments}
This work was supported by the Engineering and Physical Sciences Research Council [grants EP/K033905/1 and EP/K033921/1]. We thank an anonymous reviewer for
a very thorough, detailed and constructive feedback.

\bibliographystyle{plain}
\bibliography{jlc14mc}

\begin{thebibliography}{10}

\bibitem{Alechina//:09b}
N.~Alechina, B.~Logan, H.~N. Nguyen, and A.~Rakib.
\newblock A logic for coalitions with bounded resources.
\newblock In {\em Proceedings of the 21st International Joint Conference on
  Artificial Intelligence (IJCAI 2009)}, volume~2, pages 659--664. IJCAI/AAAI,
  AAAI Press, 2009.

\bibitem{Alechina//:10a}
N.~Alechina, B.~Logan, H.~N. Nguyen, and A.~Rakib.
\newblock Resource-bounded alternating-time temporal logic.
\newblock In {\em Proceedings of the 9th International Conference on Autonomous
  Agents and Multiagent Systems ({AAMAS} 2010)}, pages 481--488. IFAAMAS, 2010.

\bibitem{Alechina//:14a}
Natasha Alechina, Brian Logan, Hoang~Nga Nguyen, and Franco Raimondi.
\newblock Decidable model-checking for a resource logic with production of
  resources.
\newblock In {\em Proceedings of the 21st European Conference on Artificial
  Intelligence (ECAI 2014)}, pages 9--14. IOS Press, 2014.

\bibitem{Alur//:02a}
R.~Alur, T.~Henzinger, and O.~Kupferman.
\newblock Alternating-time temporal logic.
\newblock {\em Journal of the {ACM}}, 49(5):672--713, 2002.

\bibitem{dastani+10}
N.~Bulling, J.~Dix, and W.~Jamroga.
\newblock Model checking logics of strategic ability: Complexity.
\newblock In {\em Specification and Verification of Multi-agent Systems}, pages
  125--159. Springer, 2010.

\bibitem{Bulling/Farwer:10a}
N.~Bulling and B.~Farwer.
\newblock On the (un-)decidability of model checking resource-bounded agents.
\newblock In {\em Proceedings of the 19th European Conference on Artificial
  Intelligence (ECAI 2010)}, volume 215 of {\em Frontiers in Artificial
  Intelligence and Applications}, pages 567--572. IOS Press, 2010.

\bibitem{Bulling/Farwer:10b}
N.~Bulling and B.~Farwer.
\newblock On the (un-)decidability of model checking resource-bounded agents.
\newblock Technical Report IfI-10-05, Clausthal University of Technology, 2010.

\bibitem{Bulling/Goranko:13a}
Nils Bulling and Valentin Goranko.
\newblock How to be both rich and happy: Combining quantitative and qualitative
  strategic reasoning about multi-player games (extended abstract).
\newblock In Fabio Mogavero, Aniello Murano, and Moshe~Y. Vardi, editors, {\em
  Proceedings 1st International Workshop on Strategic Reasoning, SR 2013},
  volume 112 of {\em EPTCS}, pages 33--41, 2013.

\bibitem{Clarke//:86a}
E.~M. Clarke, E.~A. Emerson, and A.~P. Sistla.
\newblock Automatic verification of finite-state concurrent systems using
  temporal logic specifications.
\newblock {\em ACM Transactions on Programming Languages and Systems},
  8(2):244--263, 1986.

\bibitem{DellaMonica//:11a}
D.~{Della Monica}, M.~Napoli, and M.~Parente.
\newblock On a logic for coalitional games with priced-resource agents.
\newblock {\em Electr. Notes Theor. Comput. Sci.}, 278:215--228, 2011.

\bibitem{DellaMonica//:13a}
D.~{Della Monica}, M.~Napoli, and M.~Parente.
\newblock Model checking coalitional games in shortage resource scenarios.
\newblock In {\em Proceedings of the 4th International Symposium on Games,
  Automata, Logics and Formal Verification (GandALF 2013}, volume 119 of {\em
  EPTCS}, pages 240--255, 2013.

\bibitem{Leroux:13b}
J{\'{e}}r{\^{o}}me Leroux.
\newblock Acceleration for petri nets.
\newblock In Dang~Van Hung and Mizuhito Ogawa, editors, {\em Proceedings of the
  11th International Symposium on Automated Technology for Verification and
  Analysis, {ATVA} 2013}, volume 8172 of {\em Lecture Notes in Computer
  Science}, pages 1--4. Springer, 2013.

\bibitem{Lomuscio//:09a}
Alessio Lomuscio, Hongyang Qu, and Franco Raimondi.
\newblock {MCMAS}: A model checker for the verification of multi-agent systems.
\newblock In Ahmed Bouajjani and Oded Maler, editors, {\em Proceedings of the
  21st International Conference on Computer Aided Verification (CAV 2009},
  volume 5643 of {\em Lecture Notes in Computer Science}, pages 682--688.
  Springer, 2009.

\bibitem{Reisig.1985}
W.~Reisig.
\newblock {\em Petri Nets: An Introduction}, volume~4 of {\em EATCS Monographs
  on Theoretical Computer Science}.
\newblock Springer, 1985.

\end{thebibliography}
\end{document}